\documentclass[twocolumn, 5p, sort, compress]{elsarticle}
\usepackage{amsmath, amssymb, amsfonts, amsthm}
\usepackage{bm, lipsum}
\usepackage{physics}
\usepackage{graphicx, enumitem}
\usepackage{color, xcolor, soul, framed}
\usepackage{cancel}
\usepackage{booktabs}
\usepackage{geometry}
\usepackage{hyperref}
\usepackage{mathtools}
\usepackage{makecell}

\newtheorem{proposition}{Proposition}
\newtheorem{theorem}{Theorem}

\newtheorem{corollary}{Corollary}
\newtheorem{definition}{Definition}
\newtheorem{remark}{Remark}

\usepackage{stfloats}
\usepackage{array}
\usepackage{diagbox}
\usepackage{multirow}

\usepackage{color, colortbl}
\definecolor{LightCyan}{rgb}{0.88,1,1}
\definecolor{Gray}{gray}{0.93}

\newcommand{\CC}{\cellcolor{Gray!80!white}}
\newcommand{\CB}{\cellcolor{green!30!white}}
\newcommand{\CG}{\cellcolor{brown!45!white}}
\newcommand{\CY}{\cellcolor{yellow!30!white}}

\usepackage{tikz}
\usetikzlibrary{arrows.meta, positioning, calc, shapes.geometric}
\usetikzlibrary{shapes.geometric}
\definecolor{silver}{HTML}{C0C0C0}
\definecolor{gold}{HTML}{FFD700}

\hypersetup{
    colorlinks=true,    citecolor=blue,    linkcolor=red,    filecolor=magenta,  urlcolor=blue,    pdftitle={Overleaf Example}}
    
\usepackage{fancyhdr}
\title{Revisiting Certainty Equivalence: The Structural Coupling Between Estimation and Control in Underactuated Nonlinear Systems}
\date{\today}

\begin{document}

\author[1]{Daniel Engelsman}
\ead{Dengelsm@campus.haifa.ac.il}

\affiliation[1]{organization={The Hatter Department of Marine Technologies, School of Marine Sciences University of Haifa}, city={Haifa}, country={Israel}.}
            
\author[1]{Itzik Klein}
\ead{kitzik@univ.haifa.ac.il}


\begin{abstract}
The certainty equivalence (CE) principle underpins a wide range of control architectures by enabling the separation of estimation and control design. While this property holds for linear systems, its validity in nonlinear settings remains limited and often implicitly assumed. This paper revisits CE from a nonlinear perspective, showing that estimated states induce an intrinsic coupling between estimation and tracking dynamics. By analyzing the closed-loop system in tracking-error coordinates, we demonstrate that nonlinear state dependence gives rise to higher-order interaction terms during aggressive transients.
Motivated by this limitation, we propose an \emph{estimation-aware} (EA) control paradigm that incorporates estimation quality into the feedback law to isolate estimation-induced loops. The formulation remains filtering-agnostic while preserving general applicability to smooth, underactuated nonlinear systems. We derive analytical conditions guaranteeing bounded tracking under uncertainty, validating the framework under high-fidelity quadrotor flight simulation along complex 3D trajectories at speeds up to 57.6~km/h. Frequency-domain evaluations demonstrate that the EA law extends tracking bandwidth by $39\%$ and improves stability margins by up to $55\%$, effectively mitigating severe cross-couplings to offer a robust alternative to classical CE-based designs.
\end{abstract}
\maketitle
\section{Introduction} 
The separation between state estimation and control design is a cornerstone of modern control theory. For linear time-invariant (LTI) systems, it enables independent design of observers and state-feedback controllers while preserving closed-loop (CL) stability. Under the classical Certainty Equivalence (CE) principle, the controller is constructed as if the estimated state were the true state, simplifying analysis and design, although estimation errors may still affect transient performance \cite{HESPANHA19991}.
For nonlinear systems, however, this decoupling generally breaks down. Because the dynamics and output mappings are explicitly state-dependent, estimation errors directly alter the CL trajectories and can compromise stability \cite{slotine1991applied}. Nevertheless, CE-based architectures remain ubiquitous in practice, often relying on the implicit assumption that estimation errors remain small or evolve on a significantly faster time scale \cite{isidori1985nonlinear, Molin2013}.
\\
While reasonable near equilibrium, these assumptions routinely fail during aggressive transients or persistent uncertainty, where limited or state-dependent observability can lead the controller to act on highly inaccurate state estimates \cite{grune2016nonlinear}. Several frameworks have historically sought to mitigate the breakdown of the separation principle in nonlinear settings. High-gain observer architectures isolate estimation dynamics by enforcing a fast, artificial time-scale separation, though they remain inherently susceptible to measurement noise and the notorious peaking phenomenon \cite{Hongwei2021, kim2019comprehensive}. 
\\
Alternatively, adaptive and dual control paradigms explicitly account for estimation uncertainty, but they frequently suffer from severe computational intractability or restrict applicability to specific parametric structures \cite{gu2025robust}. Similarly, while robust approaches like dynamic surface control or backstepping handle certain observer-induced errors, they generally view the estimation mismatch as an exogenous perturbation rather than an structural feedback loop  \cite{smeur2018cascaded, Jing2020, xu2020backstepping}. 
Consequently, a general, filtering-agnostic framework that treats the coupled estimator-controller interaction as an intrinsic CL property remains an open challenge \cite{COHEN2024103565, STEINERT2025103553}. 
\\
This paper revisits the CE principle from a nonlinear perspective, treating the estimator-controller interaction as an intrinsic CL property rather than an unmodeled perturbation. By formulating the system in estimated tracking-error coordinates, we demonstrate that state-dependent nonlinearities induce non-negligible coupling terms that emerge through both first-order sensitivity to estimation errors and higher-order residuals. 
\\
Using Lyapunov-based arguments, we derive analytical conditions under which the tracking error remains bounded, establishing a direct mathematical relationship between estimation accuracy and trackability. Lastly, we evaluate this framework through high-fidelity numerical simulations. Ultimately, the results contrast with classical separation- or perturbation-based analyses by providing a direct assessment of performance degradation under imperfect state information. 
We thereby establish that the certainty equivalence breakdown is not merely a consequence of unmodeled dynamics or noise, but an inherent, fundamental property of estimate-based nonlinear control.

In summary, the primary contributions of this work are threefold:
\begin{enumerate}[label=\roman*), itemsep=1.5pt, parsep=1pt, topsep=2pt]
\item \textbf{Unified Formulation:} We formalize the intrinsic CL coupling between estimation and control in nonlinear architectures and derive analytical tracking bounds under estimation uncertainty.
\item \textbf{Empirical Characterization:} We implement a rigorous validation protocol using high-fidelity numerical simulations, providing a quantifiable benchmark for CL stability margins and tracking bandwidth extensions.
\item \textbf{Agnostic Architecture:} A modular, filter- and controller-agnostic framework is presented that is fully reproducible and openly shared on @ \href{https://github.com/ansfl/Estimation-Aware-Control}{\texttt{\textbf{GitHub}}}. 
\end{enumerate}
Given the rigorous demands of modern autonomy, these findings mark a significant advancement, achieving a 55\% stability margin improvement and a 39\% tracking bandwidth extension. This performance ensures state continuity during real-time, agile tasks, even amidst sensor and telemetry outages in unknown environments.
\\
The remainder of the paper is structured as follows: Section~\ref{sec:theory} introduces the theoretical preliminaries, Section~\ref{sec:prob} formulates the problem, and Section~\ref{sec:stability} develops the stability analysis. Next, Section~\ref{sec:prop} presents the proposed strategy, Section~\ref{sec:results} reports the simulation results, and Section~\ref{sec:conc} concludes the paper.

\section{Preliminaries} \label{sec:theory}
This section establishes the mathematical notation, system class, and fundamental definitions utilized throughout this study.

\subsection{Nonlinear System Dynamics}
We consider a class of nonlinear, continuous-time, multiple-input multiple-output (MIMO) systems. Rooted in Newton-Euler rigid-body dynamics, these systems naturally manifest a control-affine structure of the form \cite{slotine1991applied}
\begin{align} \label{eq:state_space_GT}
\mathcal{S} : \left\{ 
\begin{aligned}
\dot{\boldsymbol{x}} &= \boldsymbol{f}(\boldsymbol{x}) + \mathbf{g}(\boldsymbol{x})\boldsymbol{u} + \boldsymbol{w} \, ,  \\
\boldsymbol{y} &= \boldsymbol{h} \big(\boldsymbol{x}\big) + \boldsymbol{v} \, ,
\end{aligned} \right.
\end{align}
where time arguments are omitted for brevity. Here, $\boldsymbol{x}$ denotes the hidden system states evolving on an $n$-dimensional smooth manifold $\mathcal{M}$, $\boldsymbol{u} \in \mathbb{R}^{m}$ is a known exogenous input, and $\boldsymbol{y} \in \mathbb{R}^{p}$ is the measurable output. The mapping $\boldsymbol{f} : \mathcal{M} \to T_x \mathcal{M}$ represents the drift vector field into the tangent space $T_x \mathcal{M}$, the input distribution matrix by $\mathbf{g} : \mathcal{M} \to \mathbb{R}^{n \times m}$, and $\boldsymbol{h} : \mathcal{M} \to \mathbb{R}^p$ describes the measurement map. The process and measurement disturbances, $\boldsymbol{w} \in \mathbb{R}^{n}$ and $\boldsymbol{v} \in \mathbb{R}^{p}$, are modeled as zero-mean white Gaussian processes with respective covariance matrices $\boldsymbol{\Sigma_w}$ and $\boldsymbol{\Sigma_v}$. 
%

\subsection{Nonlinear Control Law}
The control of complex dynamical systems generally evolves from idealized full-state assumptions to more practical output-feedback (OF) formulations. This section outlines the shift from global state stabilization to input-output (I/O) linearization via dynamic inversion.
\paragraph{State-feedback (SF)} In an ideal framework where the complete state vector $\boldsymbol{x} \in \mathbb{R}^n$ is measurable, we define the tracking error relative to a smooth, time-parameterized reference $\boldsymbol{x}_{\mathrm{ref}}(t)$ as
\begin{align} \label{eq:epsilon_x}
\boldsymbol{\epsilon}_{\boldsymbol{x}} = {\boldsymbol{x}} - \boldsymbol{x}_{\mathrm{ref}} \ \in \ \mathbb{R}^{n} \, .
\end{align}
For both regulation ($\dot{\boldsymbol{x}}_{\mathrm{ref}} = \mathbf{0}$) or tracking ($\dot{\boldsymbol{x}}_{\mathrm{ref}} \neq \mathbf{0}$) tasks, the error dynamics obey the nonlinear vector field
\begin{align} \label{eq:epsilon_x_dot}
\dot{\boldsymbol{\epsilon}}_{\boldsymbol{x}} = \boldsymbol{f} ( {\boldsymbol{x}} ) + \mathbf{g}(\boldsymbol{x}) \boldsymbol{u} - \dot{\boldsymbol{x}}_{\mathrm{ref}} \, .
\end{align}
While optimization-based strategies like model predictive control (MPC) can solve this directly, they often incur prohibitive computational costs in real-time applications \cite{Hongwei2021, Sihao2022}. A common algebraic alternative is feedback linearization (FL), also referred to as dynamic inversion, by prescribing a desired linear evolution $\dot{\boldsymbol{\epsilon}}_{\boldsymbol{x}}^\star \triangleq -\mathbf{K} \boldsymbol{\epsilon}_{\boldsymbol{x}}$, where $\mathbf{K} \succ 0$ is a positive-definite (PD) gain matrix. The resulting control input is thus
\begin{align} \label{eq:control_law_GT}
\boldsymbol{u} = \mathbf{g}^{\dagger} (\boldsymbol{x}) \bigl( \dot{\boldsymbol{\epsilon}}_{\boldsymbol{x}}^\star + \dot{\boldsymbol{x}}_{\mathrm{ref}} - \boldsymbol{f}(\boldsymbol{x}) \bigr) \, ,
\end{align}
where $(\cdot)^\dagger$ denotes the Moore-Penrose pseudoinverse. While elegant, the SF approach encounters two critical hurdles: (i) observability constraints, as the full state $\boldsymbol{x}$ is rarely available through direct sensing, and (ii) structural underactuation ($m<n$), which renders $\mathbf{g}(\boldsymbol{x})$ non-invertible such that its image does not span the entire state manifold. 

\paragraph{Output-feedback (OF)} As these constraints preclude full-state linearization, they necessitate a formulation concentrated on a lower-dimensional task space \cite{Jing2020, wang2019stability}. Let the output tracking error be defined as
\begin{align} \label{eq:epsilon_y}
\boldsymbol{\epsilon}_{\boldsymbol{y}} = {\boldsymbol{y}} - \boldsymbol{y}_{\mathrm{ref}} \ \in \ \mathbb{R}^{p} \, , \quad p \le m <n \, .
\end{align}
Because the control input $\boldsymbol{u}$ is typically latent in lower-order output derivatives, successive differentiation is required to expose the control authority. Based on the Byrnes-Isidori normal form \cite{isidori1985nonlinear}, the derivatives of the output are obtained via the recursive Lie mapping
\begin{align} \label{eq:y_k_mapping}
\boldsymbol{y}^{(k)} = {L}_f^{k} \boldsymbol{h} (\boldsymbol{x}) \, , \quad \forall \quad k < r \, .
\end{align}
Here, $r$ denotes the relative degree—defined as the smallest integer for which the control input appears explicitly in the $r$-th derivative $\boldsymbol{y}^{(r)}$. This relationship is established as
\begin{align} \label{eq:normal_form}
\boldsymbol{y}^{(r)} = \underbrace{ {L}_f^{r} \boldsymbol{h}(\boldsymbol{x})}_{\boldsymbol{\alpha}(\boldsymbol{x})} + \underbrace{{L}_\text{g} {L}_f^{r-1} \boldsymbol{h}(\boldsymbol{x})}_{\boldsymbol{\beta}(\boldsymbol{x})} \boldsymbol{u} \, ,
\end{align}
where ${L}_{\boldsymbol{f}}$ and ${L}_{\mathbf{g}}$ denote the Lie derivatives along the vector fields $\boldsymbol{f}$ and $\mathbf{g}$, respectively.
\begin{remark}[Reference Feasibility]
While output derivatives $\boldsymbol{y}^{(0:r-1)}$ are governed by plant physics, the reference $\boldsymbol{y}_{\mathrm{ref}}(t)$ is exogenous and independent of the system dynamics. To avoid algebraic singularities, the reference must be $C^r$ continuous—a requirement naturally satisfied by modern trajectory generators like minimum-snap or flatness-based planners \cite{Ryan2007, Hagenmeyer2010}
\begin{align} \label{eq:ref_diff}
\boldsymbol{y}_{\mathrm{ref}}^{(0:r)}(t) \triangleq \left\{ \frac{d^k}{dt^k} \, \boldsymbol{y}_{\mathrm{ref}}(t) \right\}_{k=0}^r \, .
\end{align} 
\end{remark}
To enforce exponential convergence of the tracking error, we prescribe the desired closed-loop dynamics
\begin{align} \label{eq:error_dot_y}
\boldsymbol{\epsilon}_{\boldsymbol{y}}^{(r)} + \mathbf{K}_{r-1} \boldsymbol{\epsilon}_{\boldsymbol{y}}^{(r-1)} + \cdots + \mathbf{K}_0 \boldsymbol{\epsilon}_{\boldsymbol{y}}^{(0)} = \mathbf{0} \, .
\end{align}
This objective is achieved by defining a virtual control input $\boldsymbol{\nu} \triangleq \boldsymbol{y}^{(r)}$ such that
\begin{align} \label{eq:virtual_inp}
\boldsymbol{\nu} \triangleq \boldsymbol{y}_{\mathrm{ref}}^{(r)} - \sum_{j=0}^{r-1} \mathbf{K}_j \underbrace{ \left( \boldsymbol{y}^{(j)} - \boldsymbol{y}^{(j)}_{\mathrm{ref}} \right) }_{\boldsymbol{\epsilon_y}^{(j)}} \, ,
\end{align}
where $\mathbf{K}_j \succ \mathbf{0}$ is chosen such that the resulting characteristic polynomial is Hurwitz. 

\paragraph{Dynamic inversion} The mapping between the virtual input $\boldsymbol{\nu}$ and the physical control command $\boldsymbol{u}$ is established via dynamic inversion. Provided the decoupling matrix $\boldsymbol{\beta}(\boldsymbol{x})$ is non-singular, the input-output relationship in \eqref{eq:normal_form} can be inverted to yield
\begin{align} \label{eq:dynamic_inversion_1}
\boldsymbol{u} = \boldsymbol{\beta}^{-1}(\boldsymbol{x}) \left( \boldsymbol{\nu} - \boldsymbol{\alpha}(\boldsymbol{x}) \right) \, .
\end{align}
Under exact model knowledge, this law algebraically cancels plant nonlinearities, yielding a linearized closed-loop system of decoupled $r$-th order integrators
\begin{align} \label{eq:dynamic_inversion_2}
\boldsymbol{y}^{(r)} = \boldsymbol{\nu} \, .
\end{align}
\begin{remark}[Internal dynamics]
For cases where $r < n$, the tracking of $\boldsymbol{y}$ does not guarantee the stability of the full state $\boldsymbol{x}$. One must verify that the $n-r$ internal dynamics are minimum phase to prevent internal state divergence during output regulation \cite{Ryan2007, Hagenmeyer2010, yadav2025control, engelsman2025lqg}.
\end{remark}

\subsection{Nonlinear State Observer}
Since $\boldsymbol{x}$ is not directly accessible, an estimate $\hat{\boldsymbol{x}} \in \mathcal{M}$ is constructed using the available inputs and measurements. Drawing from structure \eqref{eq:state_space_GT}, the nonlinear observer dynamics follow \cite{kay1993fundamentals}
\begin{align} \label{eq:state_space_est}
\widehat{\mathcal{S}} : \left\{ 
\begin{aligned}
\dot{\hat{\boldsymbol{x}}} &= \boldsymbol{f}(\hat{\boldsymbol{x}}) + \textbf{g}(\hat{\boldsymbol{x}}) \boldsymbol{u} + \boldsymbol{\ell} ( \hat{\boldsymbol{x}}, {\boldsymbol{y}} ) \, ,
\\
\hat{\boldsymbol{y}} &= \boldsymbol{h} \big(\hat{\boldsymbol{x}}\big) \, .
\end{aligned} \right.
\end{align}
The prediction phase evolves the estimate along vector field $\boldsymbol{f}(\, \cdot \,)$ and control distribution $\mathbf{g}(\, \cdot \,)$, while the correction employs the innovation term $\boldsymbol{\ell} (\, \cdot \,)$ to project the residuals into the tangent space $T_{\hat{x}} \mathcal{M}$.
Owing to process disturbances, the trajectory of $\widehat{\mathcal{S}}$ typically deviates from $\mathcal{S}$, inducing estimation and output residuals as
\begin{align}
\tilde{\boldsymbol{x}} \coloneqq \boldsymbol{x} - \hat{\boldsymbol{x}} \, ,  \label{eq:x_tilde} \\
\tilde{\boldsymbol{y}} \coloneqq \boldsymbol{y} - \hat{\boldsymbol{y}} \, . \label{eq:y_tilde}
\end{align}

\paragraph{Linear Stochastic Estimation} While nonlinear estimation spans a broad class of global methods, the present analysis focuses on local tangent estimators, which linearize $\mathcal{M}$ around a transient operating point \cite{Bacon2000control}. For a control-affine system \eqref{eq:state_space_GT}, the first-order local behavior is characterized by linearizing the vector field about a nominal state-input pair $(\boldsymbol{x}_0,\boldsymbol{u}_0)$, yielding \cite{bar2001estimation}
\begin{align} \label{eq:linearize}
\begin{split}
\dot{\boldsymbol{x}} \approx & \boldsymbol{f}(\boldsymbol{x}_0) + \mathbf{g}(\boldsymbol{x}_0)\boldsymbol{u}_0 + \frac{\partial \boldsymbol{f}}{\partial \boldsymbol{x}}\Big|_{\boldsymbol{x}_0} \tilde{\boldsymbol{x}} 
\\
& + \frac{\partial (\mathbf{g}(\boldsymbol{x})\boldsymbol{u})}{\partial \boldsymbol{x}}\Big|_{(\boldsymbol{x}_0,\boldsymbol{u}_0)} \tilde{\boldsymbol{x}} + \frac{\partial (\mathbf{g}(\boldsymbol{x})\boldsymbol{u})}{\partial \boldsymbol{u}}\Big|_{(\boldsymbol{x}_0,\boldsymbol{u}_0)} \tilde{\boldsymbol{u}} \, .
\end{split}
\end{align}
In the state estimation context, the control input is a known deterministic signal, implying $\tilde{\boldsymbol{u}} = \boldsymbol{u} - \hat{\boldsymbol{u}} = \mathbf{0}$. Accordingly, no stochastic input uncertainty enters the covariance propagation, and the linearized dynamics evaluated at the current state estimate reduce to
\begin{align} \label{eq:jacobians}
\boldsymbol{F}_{\hat{\boldsymbol{x}}} \triangleq \left. \left( \dfrac{\partial \boldsymbol{f}}{\partial \boldsymbol{x}} + \dfrac{\partial \textbf{g}}{\partial \boldsymbol{x}} \boldsymbol{u} \right) \right|_{(\hat{\boldsymbol{x}}, \boldsymbol{u})} ,
\quad
\boldsymbol{H}_{\hat{\boldsymbol{x}}} \triangleq \left. \dfrac{\partial \boldsymbol{h}}{\partial \boldsymbol{x}} \right|_{\hat{\boldsymbol{x}}} ,
\end{align}
where $\boldsymbol{F}_{\hat{\boldsymbol{x}}}$ and $\boldsymbol{H}_{\hat{\boldsymbol{x}}}$, are the process and the measurement Jacobian matrices. In Newton–Euler models, inputs are typically state-independent, motivating the approximation $\boldsymbol{G}_{\hat{\boldsymbol{x}}} \approx \mathbf{g}(\hat{\boldsymbol{x}})$. Retaining only first-order perturbations, the nonlinear physics are approximated as
\begin{align}
\begin{split}
\boldsymbol{f} \big( \boldsymbol{x} \big) + \textbf{g} \big( \boldsymbol{x} \big) \boldsymbol{u} =& \boldsymbol{f} \big( \hat{\boldsymbol{x}} \big) + \textbf{g} \big( \hat{\boldsymbol{x}} \big) \boldsymbol{u} + \boldsymbol{F}_{\hat{\boldsymbol{x}}} \, \tilde{\boldsymbol{x}} + \mathcal{O} \left( \| \tilde{\boldsymbol{x}}  \|^2 \right) \, ,     
\end{split} \label{eq:error_state_1}
\\
\boldsymbol{h} \big( \boldsymbol{x} \big) =& \, \boldsymbol{h} \big( \hat{\boldsymbol{x}} \big) + \boldsymbol{H}_{\hat{\boldsymbol{x}}} \, \tilde{\boldsymbol{x}} + \mathcal{O} \left( \| \tilde{\boldsymbol{x}} \|^2 \right) \, .
\end{align}
For the observer gain, expanding about the current estimate and predicted measurement $(\hat{\boldsymbol{x}}, \hat{\boldsymbol{y}})$ yields
\begin{align}
\boldsymbol{\ell}(\hat{\boldsymbol{x}}, \boldsymbol{y}) \approx 
\boldsymbol{\ell}(\hat{\boldsymbol{x}}, \hat{\boldsymbol{y}} ) + \left. \frac{\partial \boldsymbol{\ell}}{\partial \hat{\boldsymbol{x}}} \right|_{(\hat{\boldsymbol{x}}, \hat{\boldsymbol{y}})} \tilde{\boldsymbol{x}} + \underbrace{ \left. \frac{\partial \boldsymbol{\ell}}{\partial \boldsymbol{y}} \right|_{(\hat{\boldsymbol{x}}, \hat{\boldsymbol{y}})} }_{ \boldsymbol{L}_{\hat{\boldsymbol{x}}} } \tilde{\boldsymbol{y}} \, .
\end{align}
By the consistency condition $\boldsymbol{\ell}(\hat{\boldsymbol{x}}, \boldsymbol{h}(\hat{\boldsymbol{x}})) = \mathbf{0}$, the first term vanishes, while the second is nullified as the expansion is evaluated at the current operating point where, by definition, $\tilde{\boldsymbol{x}} = \mathbf{0}$. 
Consequently, the correction reduces to a linear injection of the innovation
\begin{align} \label{eq:observer_approx}
\boldsymbol{\ell}(\hat{\boldsymbol{x}}, \boldsymbol{y}) = \boldsymbol{L}_{\hat{\boldsymbol{x}}} \tilde{\boldsymbol{y}} + \mathcal{O} \left( \| \tilde{\boldsymbol{y}} \|^2 \right) \, ,
\end{align}
where $\boldsymbol{L}_{\hat{\boldsymbol{x}}}$ is the state-dependent Kalman gain synthesized via a Differential Riccati Equation (DRE). 

\paragraph{Error State-Space} 
By combining the approximations in Eqs. \eqref{eq:error_state_1}–\eqref{eq:observer_approx}, we define the error system
\begin{align}
\widetilde{\mathcal{S}} \coloneqq \mathcal{S} - \widehat{\mathcal{S}} \, .
\end{align}
This describes the mismatch between the true plant \eqref{eq:state_space_GT} and the estimator \eqref{eq:state_space_est}. Upon linearization, the resulting perturbation dynamics become
\begin{align} \label{eq:error_space}
\widetilde{\mathcal{S}} : \left\{ 
\begin{aligned}
\dot{\tilde{\boldsymbol{x}}} &= \dot{\boldsymbol{x}} - \dot{\hat{\boldsymbol{x}}} = \boldsymbol{F}_{\hat{\boldsymbol{x}}} \, \tilde{\boldsymbol{x}} + \boldsymbol{w} - \boldsymbol{L}_{\hat{\boldsymbol{x}}} \tilde{\boldsymbol{y}} \, , 
\\
\tilde{\boldsymbol{y}} &= \boldsymbol{y} - \hat{\boldsymbol{y}} = \boldsymbol{H}_{\hat{\boldsymbol{x}}} \, \tilde{\boldsymbol{x}} +
\boldsymbol{v} \, .
\end{aligned} \right.
\end{align}
%
%
The synthesis of $\boldsymbol{L}_{\hat{\boldsymbol{x}}}$ highlights the intrinsic coupling between the estimation error $\tilde{\boldsymbol{x}}$ and the observer's corrective capacity. Unlike LTI systems, where the gain and the error dynamics are decoupled via the separation principle \cite{khalil1996robust}, the gain in \eqref{eq:observer_approx} is parameterized through Jacobians \eqref{eq:jacobians}, evaluated at $\hat{\boldsymbol{x}}$. Thereby, higher-order residuals $\mathcal{O}(\|\tilde{\boldsymbol{x}}\|^2)$ induce a state-dependent structural mismatch in the error system matrix $(\boldsymbol{F}_{\hat{\boldsymbol{x}}} - \boldsymbol{L}_{\hat{\boldsymbol{x}}} \boldsymbol{H}_{\hat{\boldsymbol{x}}})$.

\subsection{Bifurcated Tracking Error}
The ideal tracking error $\boldsymbol{\epsilon}$, as defined in \eqref{eq:epsilon_x}, assumes full-state accessibility. However, in practical applications where $\boldsymbol{x}$ is not directly observable, the control law must be synthesized using a perceived tracking error 
\begin{align} \label{eq:err_perc}
\hat{\boldsymbol{\epsilon}} = \hat{\boldsymbol{x}} - \boldsymbol{x}_{\mathrm{ref}} \, .
\end{align}
To ensure uninterrupted feedback, the state estimator operates across two distinct regimes. Under nominal conditions, the estimator provides a CL estimate, $\hat{\boldsymbol{x}}^+$, by assimilating measurement updates to bound the estimation error. Conversely, during sensor outages or telemetry degradation, the system reverts to open-loop (OL) estimation, $\hat{\boldsymbol{x}}^-$, where the state is propagated via pure prediction. This duality manifests in the feedback loop as a bifurcated tracking error \cite{Engelsman2023}
\begin{align}
\hat{\boldsymbol{\epsilon}}^{\pm} = \hat{\boldsymbol{x}}^{\pm} - \boldsymbol{x}_{\mathrm{ref}} \, .
\end{align}
Fig.~\ref{fig:diagram_S} depicts the projection basis $\boldsymbol{\kappa}( \cdot )$ mapping the dual estimation mechanism ($\hat{\boldsymbol{x}}^{\pm}$, green) into a control command $\boldsymbol{u}$. The OL phase ($\hat{\boldsymbol{x}}^-$, red) ensures feedback continuity, yet is susceptible to estimator drift. Conversely, the CL phase ($\boldsymbol{L}_{\hat{\boldsymbol{x}}} \tilde{\boldsymbol{y}}$, blue) provides critical sensor-based corrections upon measurement availability, though it remains contingent on data reliability.

\begin{figure}[h] 
\begin{tikzpicture}[ 
  block/.style={draw, thick, minimum width=12mm, minimum height=1cm, fill=gray!6, rounded corners=1mm},
  joint/.style={draw, fill=yellow!5, circle, thick, minimum size=6.5mm, inner sep=0pt},
  arrow/.style={-{Stealth[length=6pt, width=6pt]}, rounded corners=0.75mm, line width=1.5pt},
  labelnode/.style={font=\footnotesize, inner sep=1pt},
  every node/.style={font=\footnotesize},
  circ/.style={thick, inner sep=0pt},
]

\node[joint] (J_sub_1) {\large $\Delta$};
\node[joint, below=12mm of J_sub_1] (J_sum) {$\sum$};
\node[isosceles triangle, draw, thick, isosceles triangle apex angle=55, rounded corners=1mm, fill=orange!5, minimum width=3mm, minimum height=3mm, right=8mm of J_sub_1] (gain) {$\boldsymbol{\kappa} ( \cdot )$};
\node[block, right=10mm of gain] (SS_GT) {${\mathcal{S}}$};
\node[block, below=8mm of SS_GT] (SS_est) {$\widehat{\mathcal{S}}$};
\node[joint, right=13mm of SS_est, yshift=1mm] (J_sub_2) {\large $\Delta$};
\node[block, below=9mm of SS_est] (SS_dif) {$\widetilde{\mathcal{S}}$};
\node[circ] (J_int) at ($(SS_est)+(0, -1.0)$){$\left( \, \int \, \right)$};

\draw[arrow] ([xshift=-10mm]J_sub_1.west) -- node[xshift=-0.5mm, above] {$\boldsymbol{x}_{\text{ref}}$} (J_sub_1);
\draw[arrow, green!60!black] (J_sub_1.east) node[xshift=3.5mm, above] {$\widehat{\boldsymbol{\epsilon}}^{\pm}$} -- (gain.west);
\draw[arrow, green!60!black] (J_sum.north) node[xshift=-3mm, yshift=6mm] {$\widehat{\boldsymbol{x}}^{\pm}$} -- (J_sub_1.south);
\draw[arrow, green!60!black] (gain.east) node[xshift=4mm, above] {$\boldsymbol{u}$} -- (SS_GT.west);
\draw[arrow, green!60!black] ([xshift=3.5mm]gain.east) |- ([yshift=2mm]SS_est.west);
\draw[arrow, blue!45!cyan] (SS_GT.east) node[xshift=6mm, above] {$\boldsymbol{y}$} -- ([xshift=25mm]SS_GT.east);
\draw[arrow, blue!45!cyan] ([yshift=1.0mm]SS_est.east) node[xshift=6mm, above] {$\widehat{\boldsymbol{y}}$} -- (J_sub_2.west);
\draw[arrow, blue!45!cyan] (J_sub_2.south) node[xshift=-10.5mm, below=11.5mm] {$\widetilde{\boldsymbol{y}}$} |- (SS_dif.east);
\draw[arrow, blue!45!cyan] (SS_dif.west) node[xshift=-27mm, above] {$\boldsymbol{L}_{\widehat{\boldsymbol{x}}} \widetilde{\boldsymbol{y}}$} -| (J_sum.south);
\draw[arrow, blue!45!cyan] ([yshift=14.0mm]J_sub_2.north) -- (J_sub_2.north);
\draw[arrow, red!80!white] ([xshift=-3mm, yshift=-0.5mm]SS_est.west) -- node[xshift=-9.5mm, yshift=3mm] {$\hat{\boldsymbol{x}}^-$} (J_sum.east);
\draw[arrow, red!80!white] ([yshift=-2mm]SS_est.east) -- ++(1.0,0) node[xshift=-3mm, yshift=-5mm] {$\dot{\hat{\boldsymbol{x}}}^-$} |- (J_int.east);
\draw[arrow, red!80!white] (J_int.west) -- ++(-0.95,0) node[xshift=4mm, yshift=2.8mm] {$\hat{\boldsymbol{x}}^-$} |- ([yshift=-0.5mm]SS_est.west);

\end{tikzpicture}
\caption{Architectural mapping between the state-space $\mathcal{S}$, the observer-space $\widehat{\mathcal{S}}$, and the resulting error-space dynamics $\widetilde{\mathcal{S}}$.}
\label{fig:diagram_S}
\end{figure}
%

\section{Problem Formulation} \label{sec:prob}
With the system preliminaries established, this section examines the explicit coupling between estimation and control by embedding the error state-space $\widetilde{\mathcal{S}}$ into the tracking formulation. Under full-state SF settings, the perceived tracking error \eqref{eq:err_perc} transforms into
\begin{align}
\hat{\boldsymbol{\epsilon}}_{\boldsymbol{x}} = \hat{\boldsymbol{x}} - \boldsymbol{x}_{\mathrm{ref}} \overset{\eqref{eq:x_tilde}}{\longrightarrow} \boldsymbol{\epsilon}_{\boldsymbol{x}} - \tilde{\boldsymbol{x}} \, ,
\end{align}
where $\boldsymbol{\epsilon}_{\boldsymbol{x}}$ denotes the true (but unobservable) tracking error. The resulting dynamics are governed by
\begin{align} \label{eq:perceived_track_x}
\dot{\hat{\boldsymbol{\epsilon}}}_{\boldsymbol{x}} = \underbrace{ \boldsymbol{f} ( \hat{\boldsymbol{x}} ) + \mathbf{g}(\hat{\boldsymbol{x}}) \boldsymbol{u} - \dot{\boldsymbol{x}}_{\mathrm{ref}} }_{\text{Nominal tracking \eqref{eq:epsilon_x_dot}}} \ + \ \underbrace{\boldsymbol{L}_{\hat{\boldsymbol{x}}} \, \tilde{\boldsymbol{y}} }_{\text{\shortstack{Estimation\\coupling \eqref{eq:observer_approx}}}} \, .
\end{align}
Similarly, under the OF regime ($\dim(\boldsymbol{y}) < \dim(\boldsymbol{x})$), the perceived output tracking error evolves as
\begin{align}
\hat{\boldsymbol{\epsilon}}_{\boldsymbol{y}} = \hat{\boldsymbol{y}} - \boldsymbol{y}_{\mathrm{ref}} \overset{\eqref{eq:y_tilde}}{\longrightarrow} \boldsymbol{\epsilon}_{\boldsymbol{y}} - \tilde{\boldsymbol{y}} \, .
\end{align}
Applying the chain rule to the measurement model via the Jacobians \eqref{eq:jacobians}, the resulting tracking dynamics are
\begin{align} \label{eq:perceived_track_y}
\dot{\hat{\boldsymbol{\epsilon}}}_{\boldsymbol{y}} = 
\underbrace{ \boldsymbol{H}_{\hat{\boldsymbol{x}}} \left( \boldsymbol{f} ( \hat{\boldsymbol{x}} ) + \mathbf{g}(\hat{\boldsymbol{x}})\boldsymbol{u} \right) - \dot{\boldsymbol{y}}_{\mathrm{ref}} }_{ \text{\shortstack{Nominal output \\tracking \eqref{eq:state_space_est}}}} + \underbrace{ \boldsymbol{H}_{\hat{\boldsymbol{x}}} \boldsymbol{L}_{ \hat{\boldsymbol{x}}} \, \tilde{\boldsymbol{y}}
}_{ \text{\shortstack{Innovation\\coupling \eqref{eq:observer_approx}}}} + \mathcal{O}( \| \tilde{\boldsymbol{x}} \|^2) \, .
\end{align}
These decompositions reveal that $\dot{\hat{\boldsymbol{\epsilon}}}$ do not evolve purely along a physical vector field $\boldsymbol{f}$; instead, the control law must actively contend with innovation-based disturbances introduced by the estimator.

\begin{definition}[Certainty Equivalence] \label{def:CE}
Let the state uncertainty be characterized by $\tilde{\boldsymbol{x}}$, with a PD covariance matrix $\boldsymbol{\Sigma} \triangleq \mathbb{E} [\tilde{\boldsymbol{x}} \tilde{\boldsymbol{x}}^\top] \succ \mathbf{0}$. The CE principle asserts that the control policy $\boldsymbol{\kappa}(\cdot)$ can be synthesized under the assumption of perfect state knowledge ($\hat{\boldsymbol{x}} \equiv \boldsymbol{x}$). Accordingly, the implemented control law relates to the nominal full-state policy via \cite{HESPANHA19991, Molin2013}
\begin{align} \label{eq:CE_equiv}
\boldsymbol{u} = \boldsymbol{\kappa}(\hat{\boldsymbol{\epsilon}}) \overset{\text{CE}}{\simeq} \boldsymbol{\kappa}(\boldsymbol{\epsilon}) \, .
\end{align}
\end{definition}

\subsection{The Deterministic Baseline}
The CE paradigm is widely adopted in nonlinear control as it decouples estimation from control synthesis \cite{grune2016nonlinear, yadav2025control}. Under this idealization, the closed-loop system is analyzed assuming exact model knowledge, perfect state reconstruction, and absence of stochastic disturbances. 
Neglecting the stochasticity injected by the state estimator implicitly adopts an "idealized CE" that rarely holds under high-dynamic conditions.

\paragraph{Estimation-Induced Distortion}
In practical tracking, idealized conditions are seldom realized; the perceived error tends to deviate from the true state in amplitude, phase, and latency, leading to the inequality $\hat{\boldsymbol{\epsilon}} \not \equiv \boldsymbol{\epsilon}$. Furthermore, during aggressive maneuvers, these terms intensify, contaminating $\hat{\boldsymbol{\epsilon}}$ and collapsing the structural separation. 
\\
This sensitivity is exacerbated by Lie derivative-based linearization, which recursively amplifies estimation noise through higher-order derivatives until innovation coupling and residuals dominate the error budget. This deforms the underlying modeling assumptions, manifesting as state-dependent mismatches $\Delta \boldsymbol{f}$ and $\Delta \mathbf{g}$ that erode the controller's stability margins.
\begin{figure}[b]
\centering
\begin{tikzpicture}[>=Latex,
node distance=16mm and 24mm,
block/.style = {draw, rounded corners=3pt, thick, align=center, inner sep=4pt, minimum width=14mm, minimum height=10mm},
labelnode/.style={font=\normalsize, inner sep=1pt},
every node/.style={font=\normalsize},  
line_ax/.style={-{Stealth[length=5pt, width=5pt]}, line width=1.65pt},
line_er/.style={<->, >={Stealth[length=4pt, width=4pt]}, line width=0.8pt},
line_f_x/.style={yellow!95!white, line width=3.2pt},
line_f_r/.style={blue!45!cyan, line width=3.pt},
line_df/.style={green!50!white, line width=3.2pt},
d_O/.style={orange!90!yellow, line width=3.2pt},
d_line/.style={line width=1.pt, dashed},
frame/.style={line width=4pt, draw=gray!50},
circ/.style={thick, inner sep=0pt},
scale=\linewidth/0.8\textwidth]
]

\def\Lx{0}      \def\Ly{0}
\coordinate (O)     at (\Lx,\Ly);
\draw[line_ax] (O) -- ++(0,6);
\draw[line_ax] (O) -- ++(7.85,0);
\node at ($(O)+(-1.1, 6)$){};
\node at ($(O)+(6, -0.5)$){};

\node[circ] at (3.5, -0.8) (x_true){};
\node[circ] at ($(x_true)+(0, 3.15)$) (f_x_true){};
\node[circ] at ($(x_true)+(2, 5.5)$) (f_x_ref){};
\node[circ] at ($(x_true)+(-2.0, 1.75)$) (f_x_hat){};
\node[circ] at ($(f_x_true)+(0,-0.7)$) (f_x_dyn){};
\node[circ] at ($(O)+(7.95, -0.3)$){$\boldsymbol{\zeta}$};
\node[circ] at ($(O)+(-0.85, 5.75)$){$\boldsymbol{f(\zeta})$};
\node[circ] at ($(O)+(10, 0)$){};

\draw[line_f_x]($(O)+(0.7, .1)$) .. controls ($(f_x_hat)+(0,0)$) and ($(f_x_true)+(-.85,-0.4)$) .. ($(f_x_ref)+(0.6,-2.8)$);
\draw[line_f_r]($(O)+(0.08, 3.5)$) .. controls ($(f_x_hat)+(2,2.2)$) and ($(f_x_true)+(1.2,1.75)$) .. ($(f_x_ref)+(0.6,0.48)$);
\draw[line_df]($(O)+(0.25, 0.12)$) --++ (4.2, 2.9);
\draw[d_O] (f_x_dyn) --++ (f_x_true){}; 

\draw[d_line] (f_x_hat) -- ++(0, -3.3);
\draw[d_line] (f_x_hat) -- ++(8, 0);
\draw[d_line] (f_x_true) -- ++(0, -3.92);
\draw[d_line] (f_x_true) -- ++(4.9, 0);
\draw[d_line] (f_x_ref) -- ++(0, -7.05);
\draw[d_line] (f_x_ref) -- ++(4, 0);

\node[circ, fill=white, fill opacity=1] at (f_x_hat) {\large$\circ$};
\node[circ, fill=white, fill opacity=1] at (f_x_true) {\large$\circ$};
\node[circ, fill=white, fill opacity=1] at (f_x_ref) {\large$\circ$};
\node[circ, fill=white, fill opacity=1] at (f_x_dyn) {\tiny $\boldsymbol{\otimes}$};

\node[fill=white, fill opacity=1, inner sep=2.5pt] at ($(x_true)+(0,0)$) {$\boldsymbol{x}$};
\node[fill=white, fill opacity=1, inner sep=2pt] at ($(f_x_hat)+(0, -1.68)$) {$\hat{\boldsymbol{x}}$};
\node[fill=white, text opacity=1, fill opacity=1.0] at ($(f_x_ref)+(0.3, -5.55)$) {${\boldsymbol{x}}_{\mathrm{ref}}$};

\node[fill=white, fill opacity=1, inner sep=1pt] at ($(f_x_hat)+(5.5, 0)$) {$\boldsymbol{f(\hat{x})}$};
\node[fill=white, fill opacity=1, inner sep=1pt] at ($(f_x_true)+(3.5, 0)$) {$\boldsymbol{f({x})}$};
\node[fill=white, fill opacity=1, inner sep=1pt] at ($(f_x_ref)+(1.55, -.05)$) {$\dot{\boldsymbol{x}}_{\mathrm{ref}}$};

\draw[line_er] ($(f_x_hat)+(0.05,-2.5)$) -- ++(2-0.1, 0);
\draw[line_er] ($(f_x_hat)+(2.05,-2.5)$) -- ++(2-0.1, 0);
\draw[line_er] ($(f_x_hat)+(0.05,-3.3)$) -- ++(4-0.05, 0);
\node[circ, fill=white, fill opacity=1] at ($(f_x_hat)+(1.0,-2.43)$) {$\, \tilde{\boldsymbol{x}} \, $};
\node[circ, fill=white, fill opacity=1] at ($(f_x_hat)+(3.0,-2.49)$) {$\, \boldsymbol{\epsilon} \, $};
\node[circ, fill=white, fill opacity=1, inner sep=2pt] at ($(f_x_hat)+(2.0,-3.2)$) {$\, \hat{\boldsymbol{\epsilon}} \, $};

\draw[line_er] ($(f_x_true)+(4.9,0.05)$) -- ++(0.0, 2.22);
\draw[line_er] ($(f_x_hat)+(6.9,0.05)$) -- ++(0.0, 1.32);
\draw[line_er] ($(f_x_hat)+(7.9,0.05)$) -- ++(0.0, 3.6);
\node[circ, fill=white, fill opacity=1, inner sep=1.5pt] at ($(f_x_hat)+(6.9,0.71)$) {$\, \dot{\tilde{\boldsymbol{x}}} \, $};
\node[circ, fill=white, fill opacity=1, inner sep=1.5pt] at ($(f_x_hat)+(6.9,2.55)$) {$\, \dot{\boldsymbol{\epsilon}} \, $};
\node[circ, fill=white, fill opacity=1, inner sep=1.5pt] at ($(f_x_hat)+(7.9,1.85)$) {$\, \dot{\hat{\boldsymbol{\epsilon}}} \, $};
\end{tikzpicture}
\caption{In the nonlinear vector field $\boldsymbol{f}(\boldsymbol{\zeta})$, the platform tracks a reference trajectory (blue) to drive $\boldsymbol{\epsilon} \to \mathbf{0}$. Since the true states are latent (yellow), they are approximated by linearizing the dynamics about the nominal estimate $\hat{\boldsymbol{x}}$ (green). This first-order approximation yields a second-order model error $\mathcal{O}(\| \tilde{\boldsymbol{x}} \|^2)$ (orange) relative to the true nonlinear dynamics ({\footnotesize$\boldsymbol{\otimes}$}).}
\label{fig:diagram}
\end{figure}
Fig.~\ref{fig:diagram} illustrates this divergence: the true latent state evolves along a stochastic manifold (yellow) toward the target reference (blue). 
\\
While nominal dynamics (green) are propagated via $\boldsymbol{f}(\hat{\boldsymbol{x}})$, linearization residuals (orange) intensify during high-bandwidth maneuvers. In such high-intensity regimes, first-order approximations lose fidelity as the higher-order terms $\mathcal{O}(\|\tilde{\boldsymbol{x}}\|^2)$ become non-negligible.

\paragraph{Control Augmentation} To mitigate the influence of estimation-induced transients on the perceived tracking dynamics \eqref{eq:perceived_track_x}, the ideal inversion-based law \eqref{eq:control_law_GT} is augmented with an innovation compensation term as
\begin{align} \label{eq:control_law_est}
\boldsymbol{u} = \mathbf{g}^{\dagger} (\hat{\boldsymbol{x}}) \biggl( \underbrace{\dot{\boldsymbol{x}}_{\mathrm{ref}} }_{\mathrm{FF}}
- \underbrace{ \mathbf{K} \hat{\boldsymbol{\epsilon}}_{\boldsymbol{x}}  }_{\mathrm{FB}} - \underbrace{ \boldsymbol{f}(\hat{\boldsymbol{x}})  }_{\mathrm{DI}} - \underbrace{ \boldsymbol{L}_{\hat{\boldsymbol{x}}} \, \tilde{\boldsymbol{y}}  }_{\mathrm{IC}} \biggr) \, ,
\end{align}
comprising feedforward (FF), feedback (FB), dynamic inversion (DI), and innovation compensation (IC) elements. 
Similarly, solving for \eqref{eq:perceived_track_y} is addressed via the I/O linearization scheme \eqref{eq:dynamic_inversion_1}. This formulation exhibits a heightened sensitivity to $\hat{\boldsymbol{x}}$ due to the innovation coupling internalized within $\hat{\boldsymbol{ \epsilon}}_{\boldsymbol{y}}$. Accordingly, the virtual input mapping from \eqref{eq:virtual_inp} is synthesized as
\begin{align}
\boldsymbol{u} = \boldsymbol{\beta}^{\dagger}(\hat{\boldsymbol{x}}) \biggl( 
\underbrace{ \boldsymbol{\nu}_{\mathrm{ref}} }_{\mathrm{FF}} - \underbrace{ \mathbf{K} \hat{\boldsymbol{\epsilon}}_{\boldsymbol{y}} }_{\mathrm{FB}} - \underbrace{ \boldsymbol{\alpha}(\hat{\boldsymbol{x}}) }_{\mathrm{DI}} \biggr) \, ,
\end{align}
%
Fig.~\ref{fig:system} illustrates a nonlinear inversion-based tracking architecture \cite{grune2016nonlinear, Jing2020}, where the CE principle is implicitly invoked through exact state inversion (solid lines). In practice, the unobservability of $\boldsymbol{x}$ necessitates closing the loop using the state estimate $\hat{\boldsymbol{x}}$ (dashed path). Central to this study, this transition couples the observer dynamics into the tracking loop, inducing transients that disrupt the nominally decoupled error flow.

\begin{figure*}[t]
\centering
\begin{tikzpicture}[ 
  block/.style={draw, thick, minimum width=12mm, minimum height=1cm, fill=gray!6, rounded corners=1mm},
  block_small/.style={draw, thick, minimum width=12mm, minimum height=7mm, fill=gray!6, rounded corners=1mm},
  joint/.style={draw, circle, thick, minimum size=6.5mm, inner sep=0pt},
  arrow/.style={-{Stealth[length=6pt, width=6pt]}, rounded corners=1.5mm, line width=1.5pt},
  labelnode/.style={font=\footnotesize, inner sep=1pt},
  every node/.style={font=\footnotesize}
]

\node[block] (mod_ref) {\shortstack{Reference\vspace{.5mm}\\generator \eqref{eq:ref_diff}}};
\node[joint, right=13mm of mod_ref] (J_sub_1) {\large$\Delta$};
\node[block, right=15mm of J_sub_1] (mod_gain) {\shortstack{Control\vspace{.5mm}\\law \eqref{eq:error_dot_y}}};
\node[joint, right=12mm of mod_gain] (J_sum_1) {$\sum$};
\node[block, right=14mm of J_sum_1] (mod_di) {\shortstack{Dynamic\vspace{.5mm}\\inversion \eqref{eq:dynamic_inversion_1}}};
\node[block, right=11mm of mod_di] (plant) {\shortstack{Nominal\vspace{.5mm}\\plant \eqref{eq:state_space_GT}}};
\node[block, below right=6mm and 0mm of plant] (estimator) {\shortstack{State\vspace{.5mm}\\estimator \vspace{.5mm}\\ \eqref{eq:state_space_est}}};
\node[block, below=8mm of mod_di] (non_cancel) {\shortstack{Nonlinear \vspace{.5mm}\\ decoupling \eqref{eq:normal_form}}};
\node[block, below=9.7mm of J_sub_1] (Lie_alpha) {\shortstack{Output\vspace{.5mm}\\ observer \eqref{eq:normal_form}}};

\draw[arrow] ([xshift=-12mm]mod_ref.west) -- node[xshift=-1mm, above] {$\boldsymbol{y}_{\mathrm{ref}}$} (mod_ref);
\draw[arrow, orange!90!black] (mod_ref) -- node[xshift=2mm, above] {$\boldsymbol{y}_{\mathrm{ref}}^{(0:r-1)}$} (J_sub_1);
\draw[arrow, green!60!black] (J_sub_1) -- node[xshift=0mm, above] {$- \boldsymbol{\epsilon}_{ \boldsymbol{y}}^{(0:r-1)}$} (mod_gain);
\draw[arrow, green!60!black] (mod_gain) -- node[xshift=-2mm, above] {$- \dot{\boldsymbol{\epsilon}}_{\boldsymbol{y}}^\star$} (J_sum_1);
\draw[arrow, cyan!80!black](mod_ref) -- ++(0,0.9) -| node[xshift=-36.5mm, above]{$\boldsymbol{y}_{\mathrm{ref}}^{(r)}$}(J_sum_1);
\draw[arrow, magenta!90!white] (J_sum_1) -- node[xshift=0.8mm, above] {$\boldsymbol{\nu}$ \eqref{eq:virtual_inp}} (mod_di);
\draw[arrow] (mod_di) -- node[xshift=0mm, above] {$\boldsymbol{u}$} (plant);

\draw[arrow] (plant.east) -- ++(16mm,0) node[pos=0.35, above] {$\boldsymbol{y}$};
\draw[arrow](plant) -- ++(0,-35mm) -| node[xshift=100mm, yshift=8.7mm]{$\boldsymbol{x}$}(Lie_alpha.south);
\draw[arrow, orange!90!black] (Lie_alpha.north) -- node[xshift=-8mm] {$\boldsymbol{y}^{(0:r-1)}$} (J_sub_1.south);
\draw[arrow, orange!90!black] ([xshift=-4mm]non_cancel.north) -- node[left] {$\boldsymbol{\alpha}$} ([xshift=-4mm]mod_di.south);
\draw[arrow, orange!90!black] ([xshift=3mm]non_cancel.north) -- node[right] {$\boldsymbol{\beta}$} ([xshift=3mm]mod_di.south);
\draw[dashdotted, <-, >={Stealth[length=6pt, width=5pt]}, line width=1.5pt] (estimator.north) -- ++(0,11mm) node[above] {};
\draw[dashdotted, arrow] (estimator.south) (estimator.south) -- ++(0,-11.5mm) -- ++(-14.5mm,0) node[xshift=12mm, yshift=9mm] {$\hat{\boldsymbol{x}}$};
\draw[<-, >={Stealth[length=6pt, width=5pt]}, line width=1.5pt] (non_cancel.south) -- ++(0,-11.5mm) node[above] {};
%
\draw[dashed, thick, rounded corners, fill=yellow!40!white, fill opacity=0.1, text opacity=1]($(mod_ref.north west)+(-0.37,1.05)$) rectangle node[xshift=-2.82cm, yshift=1.95cm]{Outer-loop (kinematics)}  ($(J_sum_1.south east)+(0.3,-2.6)$);
\draw[dashed, thick, rounded corners, fill=orange!40!white, fill opacity=0.1, text opacity=1]($(J_sum_1.north west)+(-0.44,1.325)$) rectangle node[xshift=0.8cm, yshift=1.95cm]{Inner-loop (dynamics)}  ($(non_cancel.south east)+(0.3,-0.498)$);
\end{tikzpicture}
\caption{Nonlinear tracking architecture under the CE principle: Nominally exact components (black) are shown alongside the linearized tracking law, comprising feedback (green), feedforward (blue), Lie-derivative synthesis (orange), and virtual input (magenta). The rightmost module (dashed) denotes the state estimator, departing from ideal CE assumptions to a practical feedback loop.}
\label{fig:system}
\end{figure*}

\subsection{Coupling Analysis}
The central mechanism of this study is the penetration of estimation-induced perturbations into the nominally decoupled control channels. We formalize this interaction using the generalized error decomposition
\begin{align} \label{eq:system_general}
\dot{\hat{\boldsymbol{\epsilon}}}_{\boldsymbol{x}} = \underbrace{ (\boldsymbol{F}_{\hat{\boldsymbol{x}}} - \boldsymbol{G}_{\hat{\boldsymbol{x}}} \mathbf{K}) \hat{\boldsymbol{\epsilon}}_{\boldsymbol{x}} }_{\text{\shortstack{Nominal CE\\dynamics}}} + \underbrace{\boldsymbol{\Omega}(\tilde{\boldsymbol{x}}, \boldsymbol{w}, \boldsymbol{v})}_{\text{\shortstack{Innovation\\coupling}}} + \underbrace{\boldsymbol{\delta}(\boldsymbol{x}, \hat{\boldsymbol{x}}, \boldsymbol{u})}_{\text{\shortstack{Structural\\residual}}} \, .
\end{align}
Here, $\boldsymbol{\Omega}(\cdot)$ represents the innovation coupling driving the parametric slaving in Prop.~\ref{prop:slave}, while $\boldsymbol{\delta}(\cdot)$ captures the geometric mismatch quantified as the input–output residual $\boldsymbol{\Delta}(\cdot)$ in Prop.~\ref{prop:res}.
\begin{proposition}[Parametric Slaving] \label{prop:slave}
The innovation coupling term $\boldsymbol{\Omega}(\cdot)$ acts as a bridge between estimation residuals and the controller. Consequently, tracking stability is not autonomous, but parametrically dependent on the estimation error $\tilde{\boldsymbol{x}}$ through the Jacobians \eqref{eq:jacobians}. This 'slaves' the control convergence to the geometric fidelity, forcing the observer to consume a portion of the system’s total stability margin.
\end{proposition}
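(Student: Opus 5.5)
To make the qualitative statement of Prop.~\ref{prop:slave} precise, I will make the coupling term $\boldsymbol{\Omega}$ in \eqref{eq:system_general} explicit and then run a Lyapunov argument on the perceived tracking error $\hat{\boldsymbol{\epsilon}}_{\boldsymbol{x}}$. The aim is to show that the decay rate certified for the nominal CE part is reduced by a quantity that depends on $\tilde{\boldsymbol{x}}$ through the Jacobians \eqref{eq:jacobians}. The stability margin is then ``shared'' with the observer in a concrete sense: the admissible perturbation budget of the nominal loop is partly used up by estimation-dependent terms.

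\emph{Step 1: make $\boldsymbol{\Omega}$ explicit.} Start from \eqref{eq:perceived_track_x} and substitute the estimate-based law \eqref{eq:control_law_est} with the IC term omitted, i.e.\ the pure CE law. Then linearize $\boldsymbol{f}(\hat{\boldsymbol{x}}) + \mathbf{g}(\hat{\boldsymbol{x}})\boldsymbol{u} - \dot{\boldsymbol{x}}_{\mathrm{ref}}$ about $\boldsymbol{x}_{\mathrm{ref}}$. This isolates the nominal matrix $\boldsymbol{A}_{\hat{\boldsymbol{x}}} \triangleq \boldsymbol{F}_{\hat{\boldsymbol{x}}} - \boldsymbol{G}_{\hat{\boldsymbol{x}}}\mathbf{K}$. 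Using the error system \eqref{eq:error_space}, the remaining innovation term becomes
\begin{align*}
\boldsymbol{\Omega} = \boldsymbol{L}_{\hat{\boldsymbol{x}}}\bigl(\boldsymbol{H}_{\hat{\boldsymbol{x}}}\tilde{\boldsymbol{x}} + \boldsymbol{v}\bigr).
\end{align*}
All second-order Taylor remainders go into $\boldsymbol{\delta}$, with $\|\boldsymbol{\delta}\| \le c_\delta \|\tilde{\boldsymbol{x}}\|^2$ on a compact neighbourhood.

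\emph{Step 2: exhibit the parametric dependence.} $\boldsymbol{A}_{\hat{\boldsymbol{x}}}$ is evaluated at $\hat{\boldsymbol{x}} = \boldsymbol{x} - \tilde{\boldsymbol{x}}$, not at $\boldsymbol{x}$. A Lipschitz bound on the Jacobians therefore gives
\begin{align*}
\|\boldsymbol{A}_{\hat{\boldsymbol{x}}} - \boldsymbol{A}_{\boldsymbol{x}}\| \le \ell_A \|\tilde{\boldsymbol{x}}\|.
\end{align*}
The tracking dynamics are then a time-varying linear system whose coefficient matrix, and also $\boldsymbol{L}_{\hat{\boldsymbol{x}}}$, is parametrized by the observer error. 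This is the ``non-autonomous, parametrically dependent'' claim in the proposition.

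\emph{Step 3: quantify the margin consumption.} Assume $\boldsymbol{A}_{\boldsymbol{x}}$ is uniformly Hurwitz with Lyapunov matrix $\boldsymbol{P} \succ \mathbf{0}$ satisfying $\boldsymbol{A}^\top\boldsymbol{P} + \boldsymbol{P}\boldsymbol{A} \preceq -2\lambda_0 \boldsymbol{P}$. Take $V = \hat{\boldsymbol{\epsilon}}_{\boldsymbol{x}}^\top\boldsymbol{P}\hat{\boldsymbol{\epsilon}}_{\boldsymbol{x}}$ and compute its derivative, with expectations over the noise so that $\boldsymbol{\Sigma}$ enters through $\mathbb{E}\|\tilde{\boldsymbol{x}}\|^2 = \mathrm{tr}\,\boldsymbol{\Sigma}$. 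Bounding the cross terms by Young's inequality should give
\begin{align*}
\dot V \le -2\bigl(\lambda_0 - \kappa_P \ell_A \|\tilde{\boldsymbol{x}}\|\bigr)V + c_1 \|\boldsymbol{L}_{\hat{\boldsymbol{x}}}\boldsymbol{H}_{\hat{\boldsymbol{x}}}\|^2\|\tilde{\boldsymbol{x}}\|^2 + c_2\|\tilde{\boldsymbol{x}}\|^4 + c_3\,\mathrm{tr}(\boldsymbol{L}\boldsymbol{\Sigma}_v\boldsymbol{L}^\top).
\end{align*}
The effective decay rate is $\lambda_0 - \kappa_P\ell_A\|\tilde{\boldsymbol{x}}\|$, so a fraction of the nominal margin $\lambda_0$ is consumed by the estimator. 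The driving terms set an ultimate bound that scales with $\boldsymbol{\Sigma}$. Comparison and Gr\"onwall arguments then give uniform ultimate boundedness whenever $\sup_t\|\tilde{\boldsymbol{x}}\| < \lambda_0/(\kappa_P\ell_A)$, and loss of the guarantee beyond that threshold.

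\emph{Main obstacle.} Step 3 is the delicate one, because $\tilde{\boldsymbol{x}}$ is itself driven by $\boldsymbol{u}$ through $\boldsymbol{F}_{\hat{\boldsymbol{x}}}$ and $\boldsymbol{L}_{\hat{\boldsymbol{x}}}$, so the loop is genuinely coupled. I would first try treating $\tilde{\boldsymbol{x}}$ as an a priori bounded input, using a bounded DRE solution and uniform observability. If that fails, I would fall back to a composite Lyapunov function $V + \mu\,\tilde{\boldsymbol{x}}^\top\boldsymbol{\Sigma}^{-1}\tilde{\boldsymbol{x}}$ and a small-gain condition between the two subsystems.
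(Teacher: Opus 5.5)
The paper gives no proof of Proposition~\ref{prop:slave}. It is stated as an interpretation of the decomposition \eqref{eq:system_general}. Its only quantitative counterpart is Proposition~\ref{prop:stability}, where $\tilde{\boldsymbol{x}}$ enters purely additively through the excitation $\mu\|\tilde{\boldsymbol{x}}\|^2 + d$, and the margin ``consumed'' by the observer is expressed through $\rho = \lambda/\mu$ and the radius $\gamma$ in \eqref{eq:upper_bound}. Your $c_1$, $c_2$, $c_3$ terms reproduce that additive channel. However, the mechanism you rely on for the actual claim is not produced by the perceived-error dynamics: the rate loss $\kappa_P\ell_A\|\tilde{\boldsymbol{x}}\|$ obtained by comparing $\boldsymbol{A}_{\hat{\boldsymbol{x}}}$ with $\boldsymbol{A}_{\boldsymbol{x}}$. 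Substituting \eqref{eq:control_law_est} without the IC term into \eqref{eq:perceived_track_x} gives $\dot{\hat{\boldsymbol{\epsilon}}}_{\boldsymbol{x}} = (\mathbf{I} - \boldsymbol{\Pi})(\boldsymbol{f}(\hat{\boldsymbol{x}}) - \dot{\boldsymbol{x}}_{\mathrm{ref}}) - \boldsymbol{\Pi}\mathbf{K}\hat{\boldsymbol{\epsilon}}_{\boldsymbol{x}} + \boldsymbol{L}_{\hat{\boldsymbol{x}}}\tilde{\boldsymbol{y}}$, with $\boldsymbol{\Pi} = \mathbf{g}(\hat{\boldsymbol{x}})\mathbf{g}^{\dagger}(\hat{\boldsymbol{x}})$. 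This reduces exactly to $-\mathbf{K}\hat{\boldsymbol{\epsilon}}_{\boldsymbol{x}} + \boldsymbol{L}_{\hat{\boldsymbol{x}}}\tilde{\boldsymbol{y}}$ if $\mathbf{g}$ is invertible. Everything except the innovation is evaluated at $\hat{\boldsymbol{x}} = \boldsymbol{x}_{\mathrm{ref}} + \hat{\boldsymbol{\epsilon}}_{\boldsymbol{x}}$, so the Jacobians there are functions of $(\hat{\boldsymbol{\epsilon}}_{\boldsymbol{x}}, t)$, not of $\tilde{\boldsymbol{x}}$. The true state enters only through $\tilde{\boldsymbol{y}} = \boldsymbol{H}_{\hat{\boldsymbol{x}}}\tilde{\boldsymbol{x}} + \boldsymbol{v} + \mathcal{O}(\|\tilde{\boldsymbol{x}}\|^2)$. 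Rewriting $\hat{\boldsymbol{x}} = \boldsymbol{x} - \tilde{\boldsymbol{x}}$ and benchmarking against $\boldsymbol{A}_{\boldsymbol{x}}$ is a choice of reference point, not a property of the loop. Any checkable form of your Step~3 hypothesis (one $\boldsymbol{P}$ valid on a region containing the estimated trajectory) applies directly at $\hat{\boldsymbol{x}}$, and then the $\kappa_P\ell_A\|\tilde{\boldsymbol{x}}\|$ loss disappears. Your own Step~1 already signals this: linearizing about $\boldsymbol{x}_{\mathrm{ref}}$ yields Jacobians at $\boldsymbol{x}_{\mathrm{ref}}$, which do not depend on $\tilde{\boldsymbol{x}}$ at all. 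The threshold $\sup_t\|\tilde{\boldsymbol{x}}\| < \lambda_0/(\kappa_P\ell_A)$ therefore marks where a conservative certificate fails, not where the observer has consumed stability margin.

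In your coordinates, the honest margin consumption is the fraction of $\lambda_0$ you must give up in Young's inequality to dominate $2\hat{\boldsymbol{\epsilon}}_{\boldsymbol{x}}^\top\boldsymbol{P}\boldsymbol{L}_{\hat{\boldsymbol{x}}}\boldsymbol{H}_{\hat{\boldsymbol{x}}}\tilde{\boldsymbol{x}}$. That is a $\theta V$ term, missing from your displayed bound, traded against $c_1 \propto 1/\theta$, and it is precisely the paper's $\lambda$-versus-$\mu$ trade-off. A genuinely multiplicative, $\tilde{\boldsymbol{x}}$-dependent loss of decay rate exists only where plant and controller are evaluated at different points. One place is the true error $\boldsymbol{\epsilon} = \hat{\boldsymbol{\epsilon}} + \tilde{\boldsymbol{x}}$: its feedback matrix is $\mathbf{g}(\boldsymbol{x})\mathbf{g}^{\dagger}(\hat{\boldsymbol{x}})\mathbf{K}$, which differs from $\boldsymbol{\Pi}\mathbf{K}$ by $\mathcal{O}(\|\tilde{\boldsymbol{x}}\|)$ (the gain mismatch of Proposition~\ref{prop:res}). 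The other is the $\tilde{\boldsymbol{x}}$-subsystem, whose matrix $\boldsymbol{F}_{\hat{\boldsymbol{x}}} - \boldsymbol{L}_{\hat{\boldsymbol{x}}}\boldsymbol{H}_{\hat{\boldsymbol{x}}}$ depends on $\boldsymbol{u} = \boldsymbol{\kappa}(\hat{\boldsymbol{x}})$ through the $(\partial\mathbf{g}/\partial\boldsymbol{x})\boldsymbol{u}$ part of \eqref{eq:jacobians}. That bidirectional interconnection is where the slaving actually lives, so the composite Lyapunov/small-gain argument you list as a fallback is the argument you need. Finally, the stochastic step does not close as written. With Gaussian $\boldsymbol{w}$, $\boldsymbol{v}$, the error $\tilde{\boldsymbol{x}}$ admits no almost-sure bound on $[0,\infty)$, already for the linearized model, so the $\sup_t$ condition cannot be assumed pathwise. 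The white-noise cross term needs It\^o's formula rather than Young's inequality, and $\mathbb{E}[\|\tilde{\boldsymbol{x}}\|V]$ does not factor. You need mean-square ultimate boundedness with moment bounds, or a bounded-disturbance model as the paper tacitly adopts through $d$.
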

The slaving described in Prop.~\ref{prop:slave} manifests as a failure of the algebraic cancellation in feedback linearization. The following proposition quantifies this departure.
\begin{proposition}[Residual Nonlinearity] \label{prop:res}
Consider the inversion law \eqref{eq:dynamic_inversion_2} implemented using the state estimate $\hat{\boldsymbol{x}}$. Under state uncertainty $\tilde{\boldsymbol{x}}$, the resulting input-output dynamics are perturbed as follows
\begin{align}
\boldsymbol{y}^{(r)} = \boldsymbol{\nu} + \boldsymbol{\Delta}(\boldsymbol{x}, \hat{\boldsymbol{x}}, \boldsymbol{u}) \, ,
\end{align}
where the output residual $\boldsymbol{\Delta}(\cdot)$ decomposes as
\begin{align}
\boldsymbol{\Delta}(\boldsymbol{x}, \hat{\boldsymbol{x}}, \boldsymbol{u}) \hspace{-0.5mm} = \hspace{-0.5mm} \bigl( \underbrace{ \boldsymbol{\alpha}(\boldsymbol{x}) - \boldsymbol{\alpha}(\hat{\boldsymbol{x}})  }_{\text{Drift mismatch}} \bigr) + \bigl( \underbrace{ \boldsymbol{\beta}(\boldsymbol{x}) -\boldsymbol{\beta}(\hat{\boldsymbol{x}}) }_{\text{Gain mismatch}} 
\hspace{-0.5mm} \boldsymbol{u} \bigr) .
\end{align}
Under the CE condition ($\hat{\boldsymbol{x}} \to \boldsymbol{x}$), this mismatch vanishes and exact feedforward is recovered. Otherwise, state uncertainty ($\tilde{\boldsymbol{x}} \neq \mathbf{0}$) acts as a non-negligible perturbation that disrupts the nominal integrator chain.
\end{proposition}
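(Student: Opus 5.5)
The plan is a direct computation: write the true relative-degree-$r$ output dynamics \eqref{eq:normal_form} along the true state, then substitute the implemented input. The implemented law is the estimate-based version of \eqref{eq:dynamic_inversion_1}, namely $\boldsymbol{u} = \boldsymbol{\beta}^{-1}(\hat{\boldsymbol{x}})\bigl(\boldsymbol{\nu} - \boldsymbol{\alpha}(\hat{\boldsymbol{x}})\bigr)$. If $\boldsymbol{\beta}$ is not square, the pseudoinverse $\boldsymbol{\beta}^{\dagger}$ is used, together with a right-invertibility assumption $\boldsymbol{\beta}(\hat{\boldsymbol{x}})\boldsymbol{\beta}^{\dagger}(\hat{\boldsymbol{x}}) = \mathbf{I}_p$, which holds when $\boldsymbol{\beta}(\hat{\boldsymbol{x}})$ has full row rank $p \le m$. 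This rank condition is the standing assumption needed for \eqref{eq:dynamic_inversion_2} to make sense, and it is evaluated at $\hat{\boldsymbol{x}}$.

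\textbf{Key step (add and subtract).} Since the plant evolves on the true state, $\boldsymbol{y}^{(r)} = \boldsymbol{\alpha}(\boldsymbol{x}) + \boldsymbol{\beta}(\boldsymbol{x})\boldsymbol{u}$. Adding and subtracting $\boldsymbol{\alpha}(\hat{\boldsymbol{x}}) + \boldsymbol{\beta}(\hat{\boldsymbol{x}})\boldsymbol{u}$ gives
\begin{align*}
\boldsymbol{y}^{(r)} = \boldsymbol{\alpha}(\hat{\boldsymbol{x}}) + \boldsymbol{\beta}(\hat{\boldsymbol{x}})\boldsymbol{u} + \bigl(\boldsymbol{\alpha}(\boldsymbol{x}) - \boldsymbol{\alpha}(\hat{\boldsymbol{x}})\bigr) + \bigl(\boldsymbol{\beta}(\boldsymbol{x}) - \boldsymbol{\beta}(\hat{\boldsymbol{x}})\bigr)\boldsymbol{u} .
\end{align*}
Inserting the implemented law and using $\boldsymbol{\beta}(\hat{\boldsymbol{x}})\boldsymbol{\beta}^{\dagger}(\hat{\boldsymbol{x}}) = \mathbf{I}$, the first two terms collapse to $\boldsymbol{\nu}$. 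The remaining terms are exactly the stated $\boldsymbol{\Delta}$, so the identity is exact and involves no linearization.

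\textbf{Vanishing under CE.} Here I use smoothness of $\boldsymbol{f}$, $\mathbf{g}$ and $\boldsymbol{h}$, which makes $\boldsymbol{\alpha} = L_f^r\boldsymbol{h}$ and $\boldsymbol{\beta} = L_gL_f^{r-1}\boldsymbol{h}$ at least $C^1$. A mean-value or first-order Taylor expansion, in the style of \eqref{eq:error_state_1}, gives
\begin{align*}
\boldsymbol{\Delta} = \Bigl(\tfrac{\partial\boldsymbol{\alpha}}{\partial\boldsymbol{x}}\big|_{\hat{\boldsymbol{x}}} + \tfrac{\partial(\boldsymbol{\beta}\boldsymbol{u})}{\partial\boldsymbol{x}}\big|_{\hat{\boldsymbol{x}}}\Bigr)\tilde{\boldsymbol{x}} + \mathcal{O}(\|\tilde{\boldsymbol{x}}\|^2) .
\end{align*}
On a compact set where $\boldsymbol{u}$ is bounded, this yields $\|\boldsymbol{\Delta}\| \le c\,\|\tilde{\boldsymbol{x}}\|$. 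Hence $\boldsymbol{\Delta} \to \mathbf{0}$ as $\hat{\boldsymbol{x}} \to \boldsymbol{x}$, and the integrator chain \eqref{eq:dynamic_inversion_2} is recovered.

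\textbf{Converse (main obstacle).} The converse claim, that $\boldsymbol{\Delta}$ is non-negligible when $\tilde{\boldsymbol{x}} \neq \mathbf{0}$, is the delicate part. If $\tilde{\boldsymbol{x}}$ lies in the kernel of the Jacobian of $\boldsymbol{\alpha} + \boldsymbol{\beta}\boldsymbol{u}$, then $\boldsymbol{\Delta}$ can vanish at first order. I would therefore state this part qualitatively: generically $\boldsymbol{\Delta} \neq \mathbf{0}$, and it enters the error chain as an additive input, $\boldsymbol{\epsilon}_{\boldsymbol{y}}^{(r)} + \sum_j \mathbf{K}_j\boldsymbol{\epsilon}_{\boldsymbol{y}}^{(j)} = \boldsymbol{\Delta}$. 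This perturbs the Hurwitz dynamics \eqref{eq:error_dot_y}, with magnitude controlled by the Jacobian bound above.

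A further subtlety is that $\boldsymbol{\nu}$ is itself built from estimated derivatives $\hat{\boldsymbol{y}}^{(j)}$. That effect belongs to $\boldsymbol{\Omega}$ in Prop.~\ref{prop:slave} rather than to $\boldsymbol{\Delta}$, and I would remark on this rather than fold it in.
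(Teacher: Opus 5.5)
Your proposal is correct and matches the derivation the paper intends: the paper gives no separate proof of this proposition, and the decomposition is obtained exactly as you do it. You add and subtract $\boldsymbol{\alpha}(\hat{\boldsymbol{x}}) + \boldsymbol{\beta}(\hat{\boldsymbol{x}})\boldsymbol{u}$ in the true normal form \eqref{eq:normal_form}, and the estimate-based law collapses those terms to $\boldsymbol{\nu}$. Your limit argument under the CE condition is sound (continuity of $\boldsymbol{\alpha},\boldsymbol{\beta}$ alone suffices), and your caveat that the ``non-negligible'' clause holds only generically is justified, since $\boldsymbol{\Delta}$ vanishes exactly for estimation errors in coordinates on which $\boldsymbol{\alpha}$ and $\boldsymbol{\beta}$ do not depend.
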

While Prop.~\ref{prop:res} identifies the resulting mismatch, the underlying cause is rooted in a structural violation of the CE principle during the state-to-input mapping.
\begin{proposition}[The Separation Fallacy] \label{prop:fallacy}
Consider an estimate-based feedback law $\boldsymbol{u} = \boldsymbol{\kappa}(\hat{\boldsymbol{x}})$. For nonlinear systems, the composition $\mathbf{g}(\boldsymbol{x})\boldsymbol{\kappa}(\hat{\boldsymbol{x}})$ defies linear decomposition into nominal and perturbation terms. Specifically, a first-order expansion yields
\begin{align}
\underbrace{ \mathbf{g}(\boldsymbol{x}) \boldsymbol{\kappa}(\hat{\boldsymbol{x}}) }_{\text{Actual } \boldsymbol{u}} \neq 
\underbrace{ \mathbf{g}(\boldsymbol{x}) \boldsymbol{\kappa}(\boldsymbol{x}) }_{\text{CE-based } \boldsymbol{u}} + \underbrace{ \nabla_{\boldsymbol{x}}(\mathbf{g}\boldsymbol{\kappa}) \tilde{\boldsymbol{x}} + \mathcal{O}( \| \tilde{\boldsymbol{x}} \|^2) }_{\text{Induced coupling}}\, ,
\end{align}
demonstrating the ``Fallacy'' that closed-loop dynamics can be partitioned into independent subsystems. 
\end{proposition}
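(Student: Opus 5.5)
The plan is to compute the \emph{correct} first-order expansion of the actual closed-loop input term $\mathbf{g}(\boldsymbol{x})\boldsymbol{\kappa}(\hat{\boldsymbol{x}})$ about the true state, and then show that it differs from the right-hand side of the claimed decomposition by a term that is generically nonzero. Using \eqref{eq:x_tilde}, we write $\hat{\boldsymbol{x}} = \boldsymbol{x} - \tilde{\boldsymbol{x}}$. Then $\boldsymbol{\kappa}$ is the only factor evaluated at the estimate, while $\mathbf{g}$ is evaluated at the true state. Assuming $\boldsymbol{\kappa}$ is $C^2$ (smoothness inherited from the inversion laws \eqref{eq:control_law_GT} and \eqref{eq:dynamic_inversion_1}), Taylor's theorem gives
\begin{align}
\mathbf{g}(\boldsymbol{x})\boldsymbol{\kappa}(\hat{\boldsymbol{x}}) = \mathbf{g}(\boldsymbol{x})\boldsymbol{\kappa}(\boldsymbol{x}) - \mathbf{g}(\boldsymbol{x})\,\nabla_{\boldsymbol{x}}\boldsymbol{\kappa}(\boldsymbol{x})\,\tilde{\boldsymbol{x}} + \mathcal{O}(\|\tilde{\boldsymbol{x}}\|^2) \, .
\end{align}
Here the factor $\mathbf{g}$ is frozen at $\boldsymbol{x}$ and carries no derivative.

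The second step is to subtract this from the claimed ``separable'' form $\mathbf{g}(\boldsymbol{x})\boldsymbol{\kappa}(\boldsymbol{x}) + \nabla_{\boldsymbol{x}}(\mathbf{g}\boldsymbol{\kappa})\tilde{\boldsymbol{x}}$. Expanding $\nabla_{\boldsymbol{x}}(\mathbf{g}\boldsymbol{\kappa}) = \mathbf{g}\nabla_{\boldsymbol{x}}\boldsymbol{\kappa} + (\partial_{\boldsymbol{x}}\mathbf{g})[\boldsymbol{\kappa}]$ by the product rule, the first-order discrepancy is
\begin{align}
\bigl(2\,\mathbf{g}(\boldsymbol{x})\nabla_{\boldsymbol{x}}\boldsymbol{\kappa}(\boldsymbol{x}) + (\partial_{\boldsymbol{x}}\mathbf{g})[\boldsymbol{\kappa}(\boldsymbol{x})]\bigr)\tilde{\boldsymbol{x}} \, .
\end{align}
This discrepancy has two sources. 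One is the sign, since the estimate lags the truth by $-\tilde{\boldsymbol{x}}$. The other is the spurious differentiation of $\mathbf{g}$, because $\mathbf{g}$ sees the true state while $\boldsymbol{\kappa}$ sees the estimate. The two sides can therefore agree only on the exceptional set where this linear map annihilates $\tilde{\boldsymbol{x}}$.

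To certify that the inequality is strict in general, I would exhibit a scalar witness. Take $n=m=1$, $\mathbf{g}(x) = 1 + x^2$ and $\kappa(x) = -kx$ with $k>0$, evaluated at $x=0$. The first-order coefficient of the discrepancy is then $-2k \neq 0$, so the equality fails for any small $\tilde{x}\neq 0$. Finally, I would interpret the exact expansion. The perturbation term $\mathbf{g}(\boldsymbol{x})\nabla\boldsymbol{\kappa}\,\tilde{\boldsymbol{x}}$ depends jointly on $\boldsymbol{x}$, $\hat{\boldsymbol{x}}$ and the controller gradient, and the remainder is a genuinely bilinear-and-higher cross term. No partition into an autonomous nominal part plus an estimation-only part exists beyond the linear case. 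This connects back to the residual $\boldsymbol{\delta}(\cdot)$ in \eqref{eq:system_general} and to Prop.~\ref{prop:res}.

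The main obstacle is that the statement is informal. ``Defies linear decomposition'' has to be pinned down as the displayed inequality holding for generic $(\boldsymbol{x},\tilde{\boldsymbol{x}})$ rather than for all of them. In the LTI case ($\mathbf{g}$ constant, $\boldsymbol{\kappa}$ linear) the correct expansion is exact, and only the paper's form fails, through its sign and factor conventions. I would make this explicit so that the witness, rather than a universal claim, carries the proof.
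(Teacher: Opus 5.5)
There is a genuine gap, and it lies in what your argument establishes rather than in the algebra. The paper gives no proof of this proposition. The display is simply asserted, followed only by the remark that separation is valid for LTI systems but fails for ``high-order or non-Lipschitz plants.'' So your argument has to supply the content itself.

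Your algebra is correct, but your witness shows that what you proved has nothing to do with nonlinearity. With $g(x)=1+x^2$ at $x=0$ you have $g'(0)=0$, so the mismatch $-2k$ comes entirely from the sign slip. The identical computation with $g\equiv 1$, $\kappa(x)=-kx$ returns the same $-2k$. That is an LTI loop, where the paper itself says separation holds. Even the $(\partial_{\boldsymbol{x}}\mathbf{g})[\boldsymbol{\kappa}]$ part of your mismatch measures an error in the displayed formula, which differentiates a factor not evaluated at the estimate; it is not a coupling in the closed loop.

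So you have only shown that the right-hand side is not the Taylor expansion of the left-hand side. That holds for every plant with $\mathbf{g}\nabla_{\boldsymbol{x}}\boldsymbol{\kappa}\neq\mathbf{0}$, linear or not. Read literally, your first display actually refutes the verbal claim, since it \emph{is} a decomposition into a nominal term and a perturbation term.

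The sentence meant to carry the real content is that no partition into a nominal part plus an estimation-only part exists ``beyond the linear case.'' It is asserted without argument and is overbroad as phrased. Whenever $\mathbf{g}$ is constant and $\boldsymbol{\kappa}$ is affine, $\mathbf{g}\boldsymbol{\kappa}(\hat{\boldsymbol{x}})$ splits exactly into $\mathbf{g}\boldsymbol{\kappa}(\boldsymbol{x})$ plus a term depending on $\tilde{\boldsymbol{x}}$ alone, however nonlinear $\boldsymbol{f}$ is.

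You did flag the informality. However, your formalization (the display failing for generic $(\boldsymbol{x},\tilde{\boldsymbol{x}})$) is satisfied by LTI loops too, so it cannot be the right one. One natural alternative is to ask whether
$\boldsymbol{\Phi}(\boldsymbol{x},\tilde{\boldsymbol{x}}) \triangleq \mathbf{g}(\boldsymbol{x})\boldsymbol{\kappa}(\boldsymbol{x}-\tilde{\boldsymbol{x}})$
can be written as $\boldsymbol{N}(\boldsymbol{x})+\boldsymbol{E}(\tilde{\boldsymbol{x}})$. Any such splitting forces the mixed derivative $\partial_{\boldsymbol{x}}\partial_{\tilde{\boldsymbol{x}}}\boldsymbol{\Phi}$ to vanish identically. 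At first order in $\tilde{\boldsymbol{x}}$, it forces the perturbation gain $\mathbf{g}(\boldsymbol{x})\nabla_{\boldsymbol{x}}\boldsymbol{\kappa}(\boldsymbol{x})$ to be independent of $\boldsymbol{x}$. Both conditions hold in the LTI case, where the splitting is explicit.

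In the scalar case the mixed derivative is $-g'(x)\kappa'(x-\tilde{x})-g(x)\kappa''(x-\tilde{x})$. For your $g=1+x^2$, $\kappa=-kx$ this equals $2kx$. It is nonzero for every $x\neq 0$ and zero precisely at $x=0$, the point where you evaluated your witness. Recast this way, your example proves a statement that genuinely separates nonlinear from linear loops, which the displayed ``$\neq$'' cannot do.
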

Treating state acquisition as a purely deterministic process implicitly invokes a separation principle \cite{khalil1996robust}. While globally valid for LTI systems, it fails for high-order or non-Lipschitz plants. Consequently, the objective of this study is to characterize the stability of error manifold $\hat{\boldsymbol{\epsilon}}$ in regimes where the $\boldsymbol{\Omega}$ and $\boldsymbol{\delta}$ are non-negligible.

\section{Closed-Loop Stability Analysis} \label{sec:stability}
Leveraging the error decompositions in Props.~\ref{prop:slave}–\ref{prop:fallacy}, we employ Lyapunov theory to establish the convergence of the interconnected manifold, explicitly accounting for estimator-induced transients that perturb the nominal tracking dynamics.
\subsection{Lyapunov Stability}
To begin with, we demonstrate that despite strict violation of the separation principle, the system maintains practical stability within a calculable error envelope.
\begin{corollary}[Perception Limit] \label{Cor:lower_bound}
By the reverse triangle inequality, the perceived error magnitude satisfies
\begin{align} \label{eq:lower_bound}
\| \hat{\boldsymbol{\epsilon}}\| \triangleq \| \boldsymbol{\epsilon} - \tilde{\boldsymbol{x}} \| \geq \big| \| \boldsymbol{\epsilon} \| - \|\tilde{\boldsymbol{x}} \| \big| \, .
\end{align}
\end{corollary}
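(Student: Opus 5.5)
The plan is to first fix the identity and then apply the triangle inequality twice. Everything here uses only norm axioms, so no properties of the estimator or the controller enter.

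\textbf{Step 1: the identity.} From \eqref{eq:err_perc} and \eqref{eq:x_tilde} we have $\hat{\boldsymbol{\epsilon}} = \hat{\boldsymbol{x}} - \boldsymbol{x}_{\mathrm{ref}} = (\boldsymbol{x}-\boldsymbol{x}_{\mathrm{ref}}) - (\boldsymbol{x}-\hat{\boldsymbol{x}}) = \boldsymbol{\epsilon} - \tilde{\boldsymbol{x}}$. This justifies the defining equality in \eqref{eq:lower_bound}, exactly as in the decomposition used in Section~\ref{sec:prob}.

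\textbf{Step 2: the two one-sided bounds.} Write $\boldsymbol{\epsilon} = \hat{\boldsymbol{\epsilon}} + \tilde{\boldsymbol{x}}$ and apply subadditivity of the norm. This gives $\|\boldsymbol{\epsilon}\| \le \|\hat{\boldsymbol{\epsilon}}\| + \|\tilde{\boldsymbol{x}}\|$, that is, $\|\hat{\boldsymbol{\epsilon}}\| \ge \|\boldsymbol{\epsilon}\| - \|\tilde{\boldsymbol{x}}\|$. Symmetrically, write $\tilde{\boldsymbol{x}} = \boldsymbol{\epsilon} - \hat{\boldsymbol{\epsilon}}$ and use $\|{-\hat{\boldsymbol{\epsilon}}}\| = \|\hat{\boldsymbol{\epsilon}}\|$ (absolute homogeneity). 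This gives $\|\tilde{\boldsymbol{x}}\| \le \|\boldsymbol{\epsilon}\| + \|\hat{\boldsymbol{\epsilon}}\|$, that is, $\|\hat{\boldsymbol{\epsilon}}\| \ge \|\tilde{\boldsymbol{x}}\| - \|\boldsymbol{\epsilon}\|$.

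\textbf{Step 3: combine.} Since $\|\hat{\boldsymbol{\epsilon}}\|$ dominates both $a-b$ and $b-a$, with $a=\|\boldsymbol{\epsilon}\|$ and $b=\|\tilde{\boldsymbol{x}}\|$, it dominates $\max(a-b,\,b-a) = |a-b|$. This is \eqref{eq:lower_bound}.

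The only point needing care is the setting, not the inequality itself. The state lives on a manifold $\mathcal{M}$, so the differences must be read in a vector space. I would note that, as throughout the paper's error-state formulation, $\boldsymbol{x}$, $\hat{\boldsymbol{x}}$ and $\boldsymbol{x}_{\mathrm{ref}}$ are taken in local coordinates in $\mathbb{R}^n$. Any norm on $\mathbb{R}^n$ then works, since only subadditivity and symmetry are used.

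The practical reading to close with: whenever $\|\tilde{\boldsymbol{x}}\| \ge \|\boldsymbol{\epsilon}\|$, the controller perceives an error of magnitude at least $\|\tilde{\boldsymbol{x}}\| - \|\boldsymbol{\epsilon}\|$. Conversely, driving $\hat{\boldsymbol{\epsilon}}\to\mathbf{0}$ only certifies $\|\boldsymbol{\epsilon}\| \le \|\tilde{\boldsymbol{x}}\|$. This is the perception limit that will feed the subsequent Lyapunov bounds.
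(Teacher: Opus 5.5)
Your proof is correct and takes the same route as the paper, which justifies the corollary only by citing the reverse triangle inequality applied to $\hat{\boldsymbol{\epsilon}} = \boldsymbol{\epsilon} - \tilde{\boldsymbol{x}}$. You go slightly further by checking that identity from \eqref{eq:err_perc} and \eqref{eq:x_tilde}, and by deriving the reverse triangle inequality from subadditivity and symmetry of the norm in local coordinates, which is the standard argument.
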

Corollary~\ref{Cor:lower_bound} reveals a crucial geometric constraint: even under ideal tracking ($\boldsymbol{\epsilon} \to \mathbf{0}$), the perceived error is lower-bounded by the residual, i.e., $\| \hat{\boldsymbol{\epsilon}} \| \to \| \tilde{\boldsymbol{x}} \|$. Analogous to the Cramér-Rao Lower Bound (CRLB), this result establishes an irreducible bias in the feedback loop, dictated by the inherent uncertainty of the state estimate \cite{bendat2011random, ENGELSMAN2026112842}.
%
\begin{proposition}[Practical Stability Bound \cite{lyapunov_1, lyapunov_2}] \label{prop:stability}
Consider the perceived error dynamics established in Sec.~\ref{sec:prob}. Let there exist a quadratic Lyapunov candidate
\begin{align} \label{eq:V_Lyp}
V(\hat{\boldsymbol{\epsilon}}) = \frac{1}{2}\hat{\boldsymbol{\epsilon}}^\top \boldsymbol{P} \hat{\boldsymbol{\epsilon}} = \frac{1}{2} \| \hat{\boldsymbol{\epsilon}} \|^2_{\boldsymbol{P}} \, , 
\end{align}
where $\boldsymbol{P} \succ 0$ solves the algebraic Lyapunov equation $\boldsymbol{F}^\top \boldsymbol{P} + \boldsymbol{P} \boldsymbol{F} = -\boldsymbol{Q}$ for a Hurwitz matrix $\boldsymbol{F}$ and $\boldsymbol{Q} \succ 0$. The time derivative along the closed-loop trajectories is
\begin{align} \label{eq:V_dot_neg}
\dot{V} \le \underbrace{ (-\lambda \| \hat{\boldsymbol{\epsilon}} \|^2 ) }_{\text{Dissipation}} + \underbrace{ (\mu \| \tilde{\boldsymbol{x}} \|^2 + d) }_{\text{Excitation}} \, ,
\end{align}
where $\lambda \triangleq \lambda_{\min}(\boldsymbol{Q})/\lambda_{\max}(\boldsymbol{P}) > 0$ is the nominal dissipation rate (convergence rate), $\mu > 0$ is the estimation coupling, and $d \ge 0$ is a uniform disturbance bound.
\end{proposition}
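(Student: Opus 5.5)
We will work from the generalized decomposition \eqref{eq:system_general}. Write the perceived error dynamics as
\begin{align*}
\dot{\hat{\boldsymbol{\epsilon}}} = \boldsymbol{F}\hat{\boldsymbol{\epsilon}} + \boldsymbol{\Omega}(\tilde{\boldsymbol{x}},\boldsymbol{w},\boldsymbol{v}) + \boldsymbol{\delta}(\boldsymbol{x},\hat{\boldsymbol{x}},\boldsymbol{u}) \, ,
\end{align*}
with $\boldsymbol{F} \triangleq \boldsymbol{F}_{\hat{\boldsymbol{x}}} - \boldsymbol{G}_{\hat{\boldsymbol{x}}}\mathbf{K}$. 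We assume the gain $\mathbf{K}$ renders $\boldsymbol{F}$ Hurwitz, uniformly along the operating envelope, so that $\boldsymbol{P}$ solving the Lyapunov equation is well defined.

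Differentiating \eqref{eq:V_Lyp} along trajectories gives
\begin{align*}
\dot V = -\tfrac{1}{2}\hat{\boldsymbol{\epsilon}}^\top\boldsymbol{Q}\hat{\boldsymbol{\epsilon}} + \hat{\boldsymbol{\epsilon}}^\top\boldsymbol{P}(\boldsymbol{\Omega}+\boldsymbol{\delta}) \, .
\end{align*}
The quadratic term is bounded by $-\tfrac{1}{2}\lambda_{\min}(\boldsymbol{Q})\|\hat{\boldsymbol{\epsilon}}\|^2$. Because $\lambda_{\max}(\boldsymbol{P})$ may be normalized, or equivalently because $\|\hat{\boldsymbol{\epsilon}}\|^2 \ge 2V/\lambda_{\max}(\boldsymbol{P})$, this term can be written in terms of the stated rate $\lambda=\lambda_{\min}(\boldsymbol{Q})/\lambda_{\max}(\boldsymbol{P})$. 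We will absorb the constant factors into a redefinition if needed: the dissipation term actually obtained is $-c\lambda\|\hat{\boldsymbol{\epsilon}}\|^2$ for a fixed $c>0$, and we rescale $\boldsymbol{Q}$ to match the statement.

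Next we bound the perturbations using earlier results.
\begin{enumerate}[label=(\roman*)]
\item For the innovation coupling, use $\boldsymbol{\Omega}=\boldsymbol{L}_{\hat{\boldsymbol{x}}}\tilde{\boldsymbol{y}} = \boldsymbol{L}_{\hat{\boldsymbol{x}}}(\boldsymbol{H}_{\hat{\boldsymbol{x}}}\tilde{\boldsymbol{x}}+\boldsymbol{v})$ from \eqref{eq:error_space} and \eqref{eq:perceived_track_x}. Assuming the gains and Jacobians are uniformly bounded on the compact operating set, $\|\boldsymbol{L}_{\hat{\boldsymbol{x}}}\|\le \bar{L}$ and $\|\boldsymbol{H}_{\hat{\boldsymbol{x}}}\|\le \bar{H}$, this gives $\|\boldsymbol{\Omega}\|\le \bar L\bar H\|\tilde{\boldsymbol{x}}\|+\bar L\bar v$.
\item For the structural residual, use Prop.~\ref{prop:res} together with smoothness of $\boldsymbol{f},\mathbf{g}$ (or $\boldsymbol{\alpha},\boldsymbol{\beta}$). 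A local Lipschitz constant $\ell_\delta$ on the compact set, with bounded $\boldsymbol{u}$, gives $\|\boldsymbol{\delta}\|\le \ell_\delta\|\tilde{\boldsymbol{x}}\| + \bar w$. This covers the $\mathcal{O}(\|\tilde{\boldsymbol{x}}\|^2)$ terms once $\|\tilde{\boldsymbol{x}}\|$ is bounded.
\end{enumerate}
Since the disturbances are Gaussian, "uniform bound" must be read as a bound on a realization set of high probability, or the argument must be done in expectation via Itô's formula. I would state the deterministic version with bounded $\boldsymbol{w},\boldsymbol{v}$ and remark that the expectation version only adds a trace term $\mathrm{tr}(\boldsymbol{P}\boldsymbol{\Sigma}_w)$ into $d$.

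Finally, apply Young's inequality to the cross term:
\begin{align*}
\hat{\boldsymbol{\epsilon}}^\top\boldsymbol{P}(\boldsymbol{\Omega}+\boldsymbol{\delta}) \le \tfrac{\eta}{2}\|\hat{\boldsymbol{\epsilon}}\|^2 + \tfrac{\lambda_{\max}(\boldsymbol{P})^2}{2\eta}\|\boldsymbol{\Omega}+\boldsymbol{\delta}\|^2 \, ,
\end{align*}
and then use $(a+b)^2\le 2a^2+2b^2$. Choosing $\eta$ as half the available dissipation leaves a strictly negative quadratic term, which gives \eqref{eq:V_dot_neg} with
\begin{align*}
\mu \propto \lambda_{\max}(\boldsymbol{P})^2(\bar L\bar H+\ell_\delta)^2/\eta \, , \qquad d \propto \lambda_{\max}(\boldsymbol{P})^2(\bar L\bar v+\bar w)^2/\eta \, .
\end{align*}

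The main obstacle is that $\boldsymbol{F}$ depends on $\hat{\boldsymbol{x}}$, so $\boldsymbol{P}$ is not constant. I would first try a frozen-time argument: assume a common $\boldsymbol{P}$ valid for all $\boldsymbol{F}_{\hat{\boldsymbol{x}}}$ in the envelope, i.e., a uniform Lyapunov inequality. Failing that, I would fold the mismatch $\boldsymbol{F}_{\hat{\boldsymbol{x}}}-\boldsymbol{F}$ into $\boldsymbol{\delta}$, which stays admissible as long as its norm is small relative to $\lambda$. A secondary obstacle is the locality of the Taylor remainder bounds. These require an a priori bound on $\|\tilde{\boldsymbol{x}}\|$, which I would supply by assuming the observer keeps $\tilde{\boldsymbol{x}}$ in a compact set, consistent with Corollary~\ref{Cor:lower_bound}.
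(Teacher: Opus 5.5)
The paper gives no derivation of its own for this proposition; it imports the inequality from the cited Lyapunov literature. Your argument (ALE identity, uniform bounds on $\boldsymbol{\Omega}$ and $\boldsymbol{\delta}$ on a compact operating set, Young's inequality on the cross term, plus the caveat on Gaussian noise) is the standard proof behind that citation, so it is sound under the assumptions you state. The one step to tighten is the normalization, which is genuinely needed (for $\lambda_{\max}(\boldsymbol{P})<2$ the stated bound fails globally even with zero perturbation) and is legitimate because $(\boldsymbol{P},\boldsymbol{Q})\mapsto(s\boldsymbol{P},s\boldsymbol{Q})$ leaves $\lambda$ invariant while multiplying the available dissipation $\tfrac12\lambda_{\min}(\boldsymbol{Q})\|\hat{\boldsymbol{\epsilon}}\|^2$ by $s$: apply it after the Young step (e.g.\ $\lambda_{\max}(\boldsymbol{P})\ge 4$ if Young consumes half of that dissipation), and drop the ``equivalently'' route via $\|\hat{\boldsymbol{\epsilon}}\|^2\ge 2V/\lambda_{\max}(\boldsymbol{P})$, which yields $-\lambda V$ rather than $-\lambda\|\hat{\boldsymbol{\epsilon}}\|^2$.
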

\begin{definition}[Separation Index]
To characterize the coupling degree between the controller and the estimator, we define a dimensionless Separation Index (SI) as 
\begin{align}
\rho \triangleq \lambda / \mu \, .
\end{align}
\end{definition}

The inequality in \eqref{eq:V_dot_neg} captures the competition between dissipative feedback and estimation-induced excitation. The sign of $\dot{V}$, parametrically governed by $\rho$, yields three distinct qualitative regimes
\begin{enumerate}[label=\roman*), itemsep=1.5pt, parsep=1pt, topsep=2pt]
    \item \textbf{Contraction} ($\dot{V} < 0$): The controller efforts dominate the estimator coupling effects ($\rho \gg 1$), forcing error trajectories toward the invariant set.
    \item \textbf{Equilibrium} ($\dot{V} = 0)$: The controller exactly balances uncertainty effects ($\rho \approx 1$). This "stand-off" defines the performance ceiling where trajectories are trapped in a bounded oscillation.
    \item \textbf{Expansion} ($\dot{V} > 0$): The controller is overwhelmed by the uncertainty dominance ($\rho \ll 1$), potentially leading to trajectory divergence.
\end{enumerate}
This relationship formalizes the separation fallacy (cf. Prop. \ref{prop:fallacy}), demonstrating that the steady-state error floor is not an independent property of the observer, but is intrinsically governed by the structural ratio $\rho$.

\subsection{Invariant Error Tube}
The parametric balance established by the Lyapunov analysis determines the spatial bounds of the tracking error, which we now characterize in terms of geometric confinement and ultimate boundedness.

\begin{definition}[Uniform Ultimate Boundedness] \label{def:uub}
A trajectory $\hat{\boldsymbol{\epsilon}}(t)$ is uniformly ultimately bounded (UUB) with an ultimate bound $\gamma$ if there exists a $T(\hat{\boldsymbol{\epsilon}}(0), \gamma) \ge 0$ such that $\| \hat{\boldsymbol{\epsilon}} (t) \| \le \gamma$ for all $t \ge t_0 + T$ \cite{Barmish1983, grune2016nonlinear}.
\end{definition}
\begin{proposition}[Ultimate bound] \label{Prop:uub}
Under bounded estimation error $\tilde{\boldsymbol{x}}$ and exogenous disturbances $d$, the perceived tracking error $\hat{\boldsymbol{\epsilon}}$ is UUB. Specifically, there exists an invariant set, the Error Tube $\mathcal{T}_\epsilon(t)$, with an outer radius defined by the dissipative ceiling
\begin{align} \label{eq:upper_bound}
\gamma(t) \triangleq \sqrt{ \frac{ \mu \| \tilde{\boldsymbol{x}}(t) \|^2 + d}{\lambda} } \, .
\end{align}
\end{proposition}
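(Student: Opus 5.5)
The argument uses the dissipation inequality of Prop.~\ref{prop:stability} as a black box and is a standard comparison argument. Write $\bar{x} \triangleq \sup_{t \ge t_0} \|\tilde{\boldsymbol{x}}(t)\|$, which is finite by hypothesis. Then \eqref{eq:V_dot_neg} gives
\begin{align*}
\dot V \le -\lambda \|\hat{\boldsymbol{\epsilon}}\|^2 + \mu \bar{x}^2 + d .
\end{align*}
Hence $\dot V < 0$ whenever $\|\hat{\boldsymbol{\epsilon}}\| > \bar\gamma \triangleq \sqrt{(\mu \bar{x}^2 + d)/\lambda}$. At each frozen instant this threshold is exactly the radius $\gamma(t)$ in \eqref{eq:upper_bound}, obtained by setting the dissipation term equal to the excitation term. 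So the first step is to note that outside the ball of radius $\gamma(t)$ the Lyapunov function strictly decreases. This identifies the ``dissipative ceiling'' as the boundary between the contraction and expansion regimes governed by the separation index $\rho$.

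Next, pass from this norm condition to sublevel sets of $V$, since only sublevel sets are forward invariant. Using the bounds
\begin{align*}
\tfrac12\lambda_{\min}(\boldsymbol{P})\|\hat{\boldsymbol{\epsilon}}\|^2 \le V \le \tfrac12\lambda_{\max}(\boldsymbol{P})\|\hat{\boldsymbol{\epsilon}}\|^2 ,
\end{align*}
define $c \triangleq \tfrac12\lambda_{\max}(\boldsymbol{P})\bar\gamma^2$. On $\{V \ge c\}$ we have $\|\hat{\boldsymbol{\epsilon}}\| \ge \bar\gamma$, so $\dot V \le -\lambda(\|\hat{\boldsymbol{\epsilon}}\|^2 - \bar\gamma^2)$. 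For any $\eta > 0$, on $\{V \ge c + \eta\}$ this yields a uniform negative rate $\dot V \le -\kappa_\eta < 0$. Therefore a trajectory starting at $V(0)$ enters $\{V \le c + \eta\}$ within a time $T \le (V(0) - c)/\kappa_\eta$, which depends only on $\hat{\boldsymbol{\epsilon}}(0)$ and the target bound, as Def.~\ref{def:uub} requires. It also never leaves that set afterwards, because $\dot V < 0$ on its boundary. This set is the Error Tube $\mathcal{T}_\epsilon$. A cleaner variant rewrites the inequality as $\dot V \le -\tfrac{2\lambda}{\lambda_{\max}(\boldsymbol{P})} V + (\mu\bar{x}^2 + d)$ and applies the comparison lemma, which gives an explicit exponential approach to the ultimate level.

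The main obstacle is the step from the invariant sublevel set back to a norm bound. That step introduces the condition number $\sqrt{\lambda_{\max}(\boldsymbol{P})/\lambda_{\min}(\boldsymbol{P})}$, so the rigorous ultimate bound is this factor times $\bar\gamma$ rather than exactly \eqref{eq:upper_bound}. I would handle this in one of two ways. One option is to state \eqref{eq:upper_bound} as the radius of the level set on which $\dot V$ changes sign. The other is to absorb the constant into the definition of $\lambda$, which already contains $\lambda_{\max}(\boldsymbol{P})$, or to work in the $\boldsymbol{P}$-weighted norm $\|\cdot\|_{\boldsymbol{P}}$ in which $V$ is defined. In the weighted norm the bound holds with equality of constants.

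A second subtlety is the time dependence of $\gamma(t)$ through $\|\tilde{\boldsymbol{x}}(t)\|$. I would prove UUB with the uniform radius $\bar\gamma$, and interpret $\gamma(t)$ as a tube that tracks the bound up to transients. This interpretation is justified when $\|\tilde{\boldsymbol{x}}\|$ varies slowly relative to the rate $\lambda$, or via the comparison-lemma solution, which is a low-pass filtered version of $\mu\|\tilde{\boldsymbol{x}}(t)\|^2 + d$.
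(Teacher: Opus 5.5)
You take the same route as the paper, and your argument is correct as far as it goes: it proves something weaker than the literal statement, and you say so. The paper gives no separate proof of this proposition. The radius \eqref{eq:upper_bound} is read off as the level where the dissipation and excitation terms in \eqref{eq:V_dot_neg} balance, so that $\dot V<0$ for $\|\hat{\boldsymbol{\epsilon}}\|>\gamma$. This is spelled out only in Appendix~\ref{appendix:proj}, under the simplification $\boldsymbol{P}\approx\boldsymbol{I}$, and the time variation of $\|\tilde{\boldsymbol{x}}(t)\|$ is never addressed.

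Your first paragraph is therefore the paper's whole argument. The rest of your proposal is a correct rigorous completion that the paper omits:
\begin{itemize}
\item the uniform radius $\bar\gamma$ built from $\sup\|\tilde{\boldsymbol{x}}\|$;
\item the passage to sublevel sets of $V$ and the finite entry time;
\item the comparison-lemma form $\dot V\le -\tfrac{2\lambda}{\lambda_{\max}(\boldsymbol{P})}V+\mu\|\tilde{\boldsymbol{x}}(t)\|^2+d$, whose solution is the correct time-varying replacement for $\gamma(t)$.
\end{itemize}

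Your completion also shows how far \eqref{eq:upper_bound} can actually be justified as an invariant outer radius. It holds only when $\boldsymbol{P}=\boldsymbol{I}$, because then the ball of radius $\bar\gamma$ is itself a sublevel set of $V$. It also requires replacing $\|\tilde{\boldsymbol{x}}(t)\|$ by its supremum, and adding an arbitrarily small margin, since that level is generally approached only asymptotically. For general $\boldsymbol{P}$ the factor $\sqrt{\lambda_{\max}(\boldsymbol{P})/\lambda_{\min}(\boldsymbol{P})}$ is genuinely needed. The inequality \eqref{eq:V_dot_neg} alone does not stop tangential motion along a level set of $V$ from carrying the trajectory outside the Euclidean ball of radius $\bar\gamma$. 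Your two caveats are thus limitations of the statement, not of your proof.

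One of your proposed reconciliations is wrong as written. Take \eqref{eq:V_dot_neg} with its Euclidean dissipation term, as you do. Then your invariant set $\{V\le\tfrac12\lambda_{\max}(\boldsymbol{P})\bar\gamma^2\}$ is exactly $\{\|\hat{\boldsymbol{\epsilon}}\|_{\boldsymbol{P}}\le\sqrt{\lambda_{\max}(\boldsymbol{P})}\,\bar\gamma\}$. Switching to the $\boldsymbol{P}$-weighted norm therefore does not by itself give equality of constants. To get that, you would have to re-derive the inequality with dissipation $-\lambda\|\hat{\boldsymbol{\epsilon}}\|_{\boldsymbol{P}}^2$, which changes what $\lambda$ means. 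Similarly, absorbing the condition number into $\lambda$ is a redefinition, not a proof of \eqref{eq:upper_bound} with the paper's $\lambda=\lambda_{\min}(\boldsymbol{Q})/\lambda_{\max}(\boldsymbol{P})$. The honest fix is to adopt the paper's own $\boldsymbol{P}\approx\boldsymbol{I}$ assumption, under which your sublevel-set argument yields the radius $\bar\gamma$ directly.
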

By reconciling the perception floor \eqref{eq:lower_bound} with the dissipative ceiling \eqref{eq:upper_bound}, we arrive at a unified characterization of the error's steady-state confinement.
\begin{corollary}[Invariant Error Tube] \label{cor:tube}
Let Prop.~\ref{Prop:uub} and Cor.~\ref{Cor:lower_bound} hold. Then, the perceived tracking error $\hat{\boldsymbol{\epsilon}}$ is confined to the forward-invariant tube 
\begin{align} \label{eq:Inv_tube} 
\mathcal{T}_\epsilon (t) \triangleq \left\{ \, \hat{\boldsymbol{\epsilon}} \in \mathbb{R}^n : \| \tilde{\boldsymbol{x}}(t) \| \le \| \hat{\boldsymbol{\epsilon}} \| \le \gamma(t) \, \right\} \, .
\end{align}
Trajectories entering $\mathcal{T}_{\epsilon} (t)$ remain confined for all future time, establishing the practical stability of the closed-loop system.
\end{corollary}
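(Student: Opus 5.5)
The plan is to obtain the two sides of \eqref{eq:Inv_tube} separately and then intersect them. The upper side (the ceiling) comes from the dissipation inequality \eqref{eq:V_dot_neg} of Prop.~\ref{prop:stability}, through the ultimate-boundedness argument behind Prop.~\ref{Prop:uub}. The lower side (the floor) comes from Cor.~\ref{Cor:lower_bound}. Forward invariance will be argued only for the ceiling, since it is the only side enforced by the dynamics. The floor is a geometric constraint on what the perceived error can show, not a barrier the flow has to respect.

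\textbf{Ceiling and invariance.} Assume $\tilde{\boldsymbol{x}}$ is bounded, and set $\bar{x} \triangleq \sup_{s \ge t}\|\tilde{\boldsymbol{x}}(s)\|$ on the window considered.
\begin{enumerate}[label=\roman*), itemsep=1.5pt, parsep=1pt, topsep=2pt]
\item From \eqref{eq:V_dot_neg}, whenever $\|\hat{\boldsymbol{\epsilon}}\| > \gamma(t)$ we get $\dot V < 0$, since $-\lambda\gamma^2 + \mu\|\tilde{\boldsymbol{x}}\|^2 + d = 0$ by \eqref{eq:upper_bound}.
\item A Euclidean ball is not in general a level set of $V$, so I would not claim directly that the ball of radius $\gamma$ is invariant. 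Instead I take the smallest sublevel set $\Omega_c = \{V \le c\}$ that contains the ball, namely $c = \tfrac12\lambda_{\max}(\boldsymbol{P})\gamma^2$.
\item On $\partial\Omega_c$ we have $\|\hat{\boldsymbol{\epsilon}}\| \ge \gamma$, hence $\dot V \le 0$ there, and the standard comparison argument gives forward invariance of $\Omega_c$.
\item Every trajectory reaches $\Omega_c$ in finite time $T$, because $\dot V \le -\eta<0$ uniformly outside it. This is exactly the UUB property of Def.~\ref{def:uub}.
\end{enumerate}
The resulting ultimate radius is $\sqrt{\lambda_{\max}(\boldsymbol{P})/\lambda_{\min}(\boldsymbol{P})}\,\gamma$. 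So I would either absorb this condition-number factor into $\mu$ and $d$, or normalize $\boldsymbol{P}$ so that $\gamma$ is read in the $\boldsymbol{P}$-norm. In either case \eqref{eq:upper_bound} stands up to this constant.

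\textbf{Floor.} Apply Cor.~\ref{Cor:lower_bound}: $\|\hat{\boldsymbol{\epsilon}}\| \ge \big|\|\boldsymbol{\epsilon}\| - \|\tilde{\boldsymbol{x}}\|\big|$. This alone does not give $\|\hat{\boldsymbol{\epsilon}}\| \ge \|\tilde{\boldsymbol{x}}\|$ in general; for instance, $\boldsymbol{\epsilon} = \tilde{\boldsymbol{x}}$ gives $\hat{\boldsymbol{\epsilon}} = \mathbf{0}$. I would therefore state the floor in the regime the paper emphasises, the tracking limit $\boldsymbol{\epsilon} \to \mathbf{0}$. There $\hat{\boldsymbol{\epsilon}} = -\tilde{\boldsymbol{x}} + \boldsymbol{\epsilon}$, so $\|\hat{\boldsymbol{\epsilon}}\| \to \|\tilde{\boldsymbol{x}}\|$. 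More generally the floor holds exactly whenever $\|\boldsymbol{\epsilon}\| \ge 2\|\tilde{\boldsymbol{x}}\|$ or $\boldsymbol{\epsilon} \perp \tilde{\boldsymbol{x}}$. Intersecting this with the ceiling set then gives the annulus \eqref{eq:Inv_tube}, with the lower edge understood asymptotically or up to an $o(1)$ slack.

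\textbf{Main obstacle.} I expect the hardest part to be the time dependence of $\gamma(t)$. If $\|\tilde{\boldsymbol{x}}(t)\|$ drops, for example at a measurement update after an open-loop outage, the tube shrinks faster than $\hat{\boldsymbol{\epsilon}}$ can follow. Invariance of a set that shrinks pointwise then fails. My first approach would be to replace $\gamma(t)$ by the running supremum built from $\bar{x}$, which is non-increasing only slowly and therefore admits the comparison-lemma argument. The alternative is to use the comparison lemma on $\dot V \le -2\tfrac{\lambda}{\lambda_{\max}(\boldsymbol{P})}V + \mu\|\tilde{\boldsymbol{x}}\|^2 + d$. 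This yields a filtered bound $\gamma_f(t)$ that plays the role of $\gamma(t)$, and I would check that $\gamma_f \to \gamma$ when $\tilde{\boldsymbol{x}}$ varies slowly relative to the rate $\lambda$.
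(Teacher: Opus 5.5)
Your route is the paper's own. The paper gives no argument beyond placing the floor \eqref{eq:lower_bound} next to the ceiling \eqref{eq:upper_bound}. Your ceiling argument is sound and more careful than the paper's: the passage to the sublevel set $\Omega_c$, the factor $c_{\boldsymbol{P}} \triangleq \sqrt{\lambda_{\max}(\boldsymbol{P})/\lambda_{\min}(\boldsymbol{P})}$, and the observation that a pointwise-shrinking $\gamma(t)$ defeats invariance (the comparison-lemma bound $\gamma_f$ is the right fix). One small slip: the bound $-\lambda\|\hat{\boldsymbol{\epsilon}}\|^2+\lambda\gamma^2$ on $\dot V$ vanishes at the points of $\partial\Omega_c$ where $\|\hat{\boldsymbol{\epsilon}}\|=\gamma$. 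So $\dot V$ is not uniformly negative outside $\Omega_c$; you reach $\Omega_{c+\delta}$ in finite time for every $\delta>0$, which is all UUB needs. Your counterexample $\boldsymbol{\epsilon}=\tilde{\boldsymbol{x}}$ also correctly shows that the floor does not follow from Cor.~\ref{Cor:lower_bound}.

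The genuine gap is your repair of the floor. First, the regime $\boldsymbol{\epsilon}\to\mathbf{0}$ is not one the closed loop produces. The control law acts on $\hat{\boldsymbol{\epsilon}}$, and your own ceiling ultimately confines $\hat{\boldsymbol{\epsilon}}$ to a ball of radius about $c_{\boldsymbol{P}}\gamma$. Hence $\boldsymbol{\epsilon}=\hat{\boldsymbol{\epsilon}}+\tilde{\boldsymbol{x}}$ settles within $c_{\boldsymbol{P}}\gamma$ of $\tilde{\boldsymbol{x}}$, which keeps it away from $\mathbf{0}$ whenever $c_{\boldsymbol{P}}\gamma<\|\tilde{\boldsymbol{x}}\|$.

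Second, and more decisively, the two edges you intersect are compatible only if $\|\tilde{\boldsymbol{x}}\|\le\gamma$, i.e.\ $(\rho-1)\,\mu\|\tilde{\boldsymbol{x}}\|^2\le d$. With your constant the condition is $(\rho-c_{\boldsymbol{P}}^2)\,\mu\|\tilde{\boldsymbol{x}}\|^2\le c_{\boldsymbol{P}}^2 d$. In the contraction regime $\rho\gg1$, which is exactly where practical stability is asserted, this fails. Then $\mathcal{T}_\epsilon(t)=\emptyset$ and confinement to it is impossible. Theorem~\ref{thm:thickness} records precisely this: the thickness tends to $-\|\tilde{\boldsymbol{x}}\|$, which is negative whenever $\tilde{\boldsymbol{x}}\neq\mathbf{0}$. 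In that regime your ceiling forces $\|\hat{\boldsymbol{\epsilon}}\|<\|\tilde{\boldsymbol{x}}\|$, violating the floor by at least $\|\tilde{\boldsymbol{x}}\|-c_{\boldsymbol{P}}\gamma$. That is not an $o(1)$ slack, so the inner edge is false in the relevant regime, not merely unproven, and no reading of ``intersecting'' rescues it.

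What your argument does deliver is the ceiling on $\hat{\boldsymbol{\epsilon}}$. Via $\boldsymbol{\epsilon}=\hat{\boldsymbol{\epsilon}}+\tilde{\boldsymbol{x}}$, it also gives a two-sided bound on the true error, $\big|\|\boldsymbol{\epsilon}\|-\|\tilde{\boldsymbol{x}}\|\big|\le\|\hat{\boldsymbol{\epsilon}}\|\le c_{\boldsymbol{P}}\gamma$ for $t\ge t_0+T$. The perception floor belongs on $\boldsymbol{\epsilon}$, not on $\hat{\boldsymbol{\epsilon}}$, and that is the tube you can actually prove.
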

Fig.~\ref{fig:stab} illustrates the resulting geometry as a dynamic tension between control dissipation and perception uncertainty. While the ultimate bound $\gamma(t)$ fluctuates with the estimation error $\|\tilde{\boldsymbol{x}}(t)\|$, practical stability is preserved provided the dissipation rate is sufficiently dominant ($\rho \gg 1$). Structurally, the estimation residue $\|\tilde{\boldsymbol{x}}\|$ acts as a "noise floor"—the inner boundary of $\mathcal{T}_{\epsilon}$—that precludes asymptotic convergence to the physical origin ($\|\boldsymbol{\epsilon}\| \to 0$) regardless of the control gain.
\begin{figure}[t]
\centering
\begin{tikzpicture}[>=Latex,
node distance=16mm and 24mm,
block/.style = {draw, rounded corners=3pt, thick, align=center, inner sep=4pt, minimum width=14mm, minimum height=10mm},
labelnode/.style={font=\normalsize, inner sep=1pt},
every node/.style={font=\normalsize},  
line_ax/.style={-{Stealth[length=5pt, width=5pt]}, line width=1.65pt},
line_in/.style={gold!80!white, line width=1.9pt},
line_ot/.style={gold!70!green, line width=1.9pt},
d_arrow/.style={red!80!white, line width=2.1pt, dashed},
d_line/.style={line width=1.5pt, dashed},
frame/.style={line width=4pt, draw=gray!50},
circ/.style={thick, inner sep=0pt},
scale=\linewidth/0.8\textwidth] 
]
\def\Lx{0}      \def\Ly{0}
\def\Rx{8.0}    \def\Ry{-0.6}
\def\RoX{2.15}  \def\RoY{2.8}
\def\roX{0.9}  \def\roY{1.4}
\def\ROX{0.95}  \def\ROY{1.55}
\def\rOX{0.4}  \def\rOY{0.6}

\coordinate (O)     at (\Lx,\Ly);

\coordinate (LoTop) at (\Lx+\RoX, \Ly+\RoY);
\coordinate (LoBot) at (\Lx+\RoX, \Ly-\RoY);
\coordinate (RoTop) at (\Rx-\ROX, \Ry+\ROY);
\coordinate (RoBot) at (\Rx-\ROX, \Ry-\ROY);

\coordinate (LiTop) at (\Lx+\roX, \Ly+\roY);
\coordinate (LiBot) at (\Lx+\roX, \Ly-\roY);
\coordinate (RiTop) at (\Rx-\rOX, \Ry+\rOY);
\coordinate (RiBot) at (\Rx-\rOX, \Ry-\rOY);

\draw[line_ax] (O) -- ++(0,5);
\draw[line_ax] ($(O)+(3.75,2.5)$) -- ++(-7.5,-5);
\draw[line_ax] (O) -- ++(10.0,-0.8);

\def\RoT{-20}

\draw[line_ot] (\Lx,\Ly) ellipse [x radius=\RoX, y radius=\RoY, rotate=\RoT];
\draw[line_ot] (\Rx,\Ry) ellipse [x radius=\ROX, y radius=\ROY, rotate=\RoT];

\draw[line_ot]($(LoTop)+(-1.4,-0.1)$) .. controls ($(LoTop)+(3,-2)$) and ($(RoTop)+(2.,0.0)$) .. ($(RoTop)+(1.2,-0.05)$);
\draw[line_ot]($(LoBot)+(-2.25,0.1)$) .. controls ($(LoBot)+(1.1,0.8)$) and ($(LoBot)+(2,1.3)$) .. ($(RoBot)+(0.35,0.1)$);

\draw[line_in] (\Lx,\Ly) ellipse [x radius=\roX, y radius=\roY, rotate=\RoT];
\draw[line_in] (\Rx,\Ry) ellipse [x radius=\rOX, y radius=\rOY, rotate=\RoT];

\draw[line_in]($(LiTop)+(-.5,-0.05)$) .. controls ($(LiTop)+(1.1,-0.7)$) and ($(LiTop)+(3,-1.4)$) .. ($(RiTop)+(.5,0)$);
\draw[line_in]($(LiBot)+(-1.05,0.05)$) .. controls ($(LiBot)+(1,+0.4)$) and ($(LiBot)+(2.0,0.7)$) .. ($(RiBot)+(.2,0.02)$);
\draw[d_arrow]($(LoTop)+(-2.1,-0.2)$) .. controls ($(LoTop)+(1,-2.8)$) and ($(LoTop)+(4.0,-1.7)$) .. ($(RiTop)+(0.8,+0.4)$);

\node at ($(O)+(-2.9, -3.2)$) {$\| \hat{\boldsymbol{\epsilon}}_{\boldsymbol{\zeta}_1}(t) \|$};
\node at ($(O)+(-1.2, 4.5)$) {$\| \hat{\boldsymbol{\epsilon}}_{\boldsymbol{\zeta}_2}(t) \|$};
\node at ($(O)+(9.7, -0.2)$) {$t$};
\node[red!80!white, fill=white, text opacity=1, fill opacity=1.0, rotate=-5, inner sep=2.0pt] at (4.35,0.85) (cs_E){$\widehat{\boldsymbol{\epsilon}}(t)$}; 
\node[gold!80!black, fill=white, text opacity=1, fill opacity=1.0, rotate=-5, inner sep=1.0pt] at (5.3,0.15) (cs_E){$\widetilde{\boldsymbol{x}}(t)$}; 
\node[green!40!orange, fill=white, text opacity=1, fill opacity=1.0, rotate=-10, inner sep=1.0pt] at (3.85,1.7) (cs_E){$\boldsymbol{\gamma}(t)$}; 

\end{tikzpicture}
\caption{Geometric representation of the stability region $\mathcal{T}_{\epsilon}$ in a generic phase plane ($\zeta_1, \zeta_2$). The perceived error $\| \hat{\boldsymbol{\epsilon}}(t) \|$ (red) is confined within an invariant annulus. The inner boundary (yellow) is defined by the estimation error $\|\tilde{\boldsymbol{x}} (t)\|$, while the outer boundary (green) represents the ultimate bound $\gamma  (t)$ induced by uncompensated disturbances.}
\label{fig:stab}
\end{figure}

\subsection{Topological Stability}
To conclude the stability analysis, Theorem~\ref{thm:thickness} reformulates the problem in geometric terms to establish a quantitative metric for the tracking envelope.
\begin{theorem}[Tube Thickness] \label{thm:thickness}
Let the invariant tube thickness in \eqref{eq:Inv_tube} be defined as $\Delta \mathcal{T}_\epsilon(t) \triangleq \gamma(t) - \| \tilde{\boldsymbol{x}}(t) \|$. In the high-gain limit ($\rho \to \infty$), the tube thickness converges to the negative perception floor
\begin{align}
\lim_{\rho \to \infty} \Delta \mathcal{T}_\epsilon(t) = -\| \tilde{\boldsymbol{x}}(t) \| \, .
\end{align}
\end{theorem}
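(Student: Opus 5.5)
The plan is to compute the limit directly from the closed form of the outer radius $\gamma(t)$ in Prop.~\ref{Prop:uub}, rewritten in terms of the Separation Index $\rho=\lambda/\mu$. I will argue pointwise in $t$, treating $\|\tilde{\boldsymbol{x}}(t)\|$ as a fixed finite number. This is legitimate because Prop.~\ref{Prop:uub} already assumes a bounded estimation error.

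First, I will divide the numerator and denominator under the square root in \eqref{eq:upper_bound} by $\lambda$:
\begin{align*}
\gamma(t)^2 = \frac{\mu}{\lambda}\|\tilde{\boldsymbol{x}}(t)\|^2 + \frac{d}{\lambda} = \frac{\|\tilde{\boldsymbol{x}}(t)\|^2}{\rho} + \frac{d}{\lambda} .
\end{align*}
Next, I will fix what ``high-gain limit'' means, since $\rho\to\infty$ alone does not determine the behaviour of $d/\lambda$. Following Prop.~\ref{prop:stability}, $\lambda=\lambda_{\min}(\boldsymbol{Q})/\lambda_{\max}(\boldsymbol{P})$ is the dissipation rate delivered by the feedback gain $\mathbf{K}$. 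The coupling constant $\mu$ and the disturbance bound $d$ are properties of the estimator and the environment, not of the controller. I will therefore read $\rho\to\infty$ as $\lambda\to\infty$ with $\mu>0$ and $d\ge 0$ held fixed. Under this reading both terms above vanish: the first because $\|\tilde{\boldsymbol{x}}(t)\|^2$ is finite and $\rho\to\infty$, the second because $d$ is fixed and $\lambda\to\infty$. Since the square root is continuous at $0$, this gives $\gamma(t)\to 0$.

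Finally, the definition $\Delta\mathcal{T}_\epsilon(t)=\gamma(t)-\|\tilde{\boldsymbol{x}}(t)\|$ gives, by linearity of limits, $\lim_{\rho\to\infty}\Delta\mathcal{T}_\epsilon(t) = 0-\|\tilde{\boldsymbol{x}}(t)\| = -\|\tilde{\boldsymbol{x}}(t)\|$, which is the claim.

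The main obstacle is the interpretation step, not the calculation. If $\rho\to\infty$ were instead obtained by sending $\mu\to 0$ with $\lambda$ fixed, the term $d/\lambda$ would survive. In that case $\gamma\to\sqrt{d/\lambda}$, and the limit of $\Delta\mathcal{T}_\epsilon$ would be $\sqrt{d/\lambda}-\|\tilde{\boldsymbol{x}}\|$ rather than $-\|\tilde{\boldsymbol{x}}\|$. The statement therefore holds only in the controller-driven regime, and I will state that assumption explicitly; alternatively one can take $d=0$, in which case any path $\rho\to\infty$ works.

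I will close with a remark on the meaning of the negative limit. For $\|\tilde{\boldsymbol{x}}(t)\|>0$ the thickness becomes negative for large $\rho$, so the annulus \eqref{eq:Inv_tube} is empty. Thus the perception floor of Cor.~\ref{Cor:lower_bound} and the dissipative ceiling of Prop.~\ref{Prop:uub} cannot both be active in the high-gain limit. The floor $\|\hat{\boldsymbol{\epsilon}}\|\ge\|\tilde{\boldsymbol{x}}\|$ was derived only under ideal tracking $\boldsymbol{\epsilon}\to\mathbf{0}$, and this emptiness is exactly the sense in which increasing the gain cannot remove the estimation-induced floor.
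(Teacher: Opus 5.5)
Your proposal is correct and follows essentially the same route as the paper, which factors out $\|\tilde{\boldsymbol{x}}(t)\|$ to write $\Delta\mathcal{T}_\epsilon = \|\tilde{\boldsymbol{x}}\|\bigl(\sqrt{\rho^{-1}(1 + d/(\mu\|\tilde{\boldsymbol{x}}\|^2))} - 1\bigr)$ and then lets $\rho\to\infty$ so that $\gamma\to 0$. Your explicit reading of the limit as $\lambda\to\infty$ with $\mu$ and $d$ fixed is exactly what the paper tacitly assumes when it holds the bracket constant, and your unfactored form has the small added advantage of not dividing by $\|\tilde{\boldsymbol{x}}(t)\|^2$.
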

\begin{proof}
Factoring out $\| \tilde{\boldsymbol{x}}(t) \|$ and substituting the gain index $\rho = \lambda/\mu$ yields the explicit thickness expression
\begin{align}
\Delta \mathcal{T}_\epsilon(t) = \| \tilde{\boldsymbol{x}}(t) \| \left( \sqrt{ \rho^{-1} \left( 1 + \frac{d}{\mu \| \tilde{\boldsymbol{x}}(t) \|^2} \right) } - 1 \right) .
\end{align}
As dissipation dominates coupling ($\rho \to \infty$), the outer bound \eqref{eq:upper_bound} collapses ($\gamma \to 0$), and the thickness asymptotically approaches $-\| \tilde{\boldsymbol{x}}(t) \|$. 
\end{proof}

\section{Proposed Strategy} \label{sec:prop}
Analysis of CL sensitivity reveals that the CE idealism in Definition~\ref{def:CE} collapses under realistic conditions where $\boldsymbol{u} = \boldsymbol{\kappa}(\hat{\boldsymbol{\epsilon}}) \not\equiv \boldsymbol{\kappa}(\boldsymbol{\epsilon})$. Physically, this discrepancy implies that the controller operates on corrupted error coordinates, causing the state estimate to inadvertently alter the control topology.
To address this, we propose an \emph{estimation-aware} paradigm that regularizes the control law according to estimation quality and robustify the tracking manifold against state-space distortions.
\begin{table*}[t] 
\caption{Taxonomy of nonlinear control paradigms: Operational logic, nominal formulations, and our proposed EA counterparts.}
\centering
\footnotesize
\renewcommand{\arraystretch}{1.95} 
\begin{tabular}{|m{3.6cm}|m{4.9cm}|c|c|}
\hline
\textbf{Methods \& Works}\CC & \textbf{Principle}\CC & \textbf{Nominal Law ($\boldsymbol{\kappa}$)}\CC & \textbf{EA Law ($\boldsymbol{\kappa}_{\textit{EA}}$)}  \CC \\ \hline
{Cascaded Control (CC)} \cite{zhang2017cascaded, smeur2018cascaded, xu2020backstepping} & Hierarchical decomposition into fast inner (attitude) and slow outer (position) loops. & 
$\begin{aligned} \boldsymbol{u} &= \boldsymbol{\pi}_{\text{in}}(\boldsymbol{\epsilon}_{\text{att}}, \boldsymbol{\nu}) \\ \boldsymbol{\nu} &= \boldsymbol{\pi}_{\text{out}}(\boldsymbol{\epsilon}_{\text{pos}}) \end{aligned}$ & 
$\begin{aligned} \boldsymbol{u}_{\textit{EA}} &= \boldsymbol{\pi}_{\text{in}}(\hat{\boldsymbol{\epsilon}}_{\text{att}}, \boldsymbol{\nu}, \eta) \\ \boldsymbol{\nu} &= \boldsymbol{\pi}_{\text{out}}(\hat{\boldsymbol{\epsilon}}_{\text{pos}}, \eta) \end{aligned}$ \\ \hline
{Incremental Nonlinear Dynamic Inversion (INDI)} \ \cite{smeur2016adaptive, di2016modeling, STEINERT2025103553} & Incremental control update using measured state-derivative increments ($\Delta \tilde{\boldsymbol{\chi}}_k$), canceling drift using sensors. & 
$\begin{aligned} \boldsymbol{u}_k &= \boldsymbol{u}_{k-1} + \mathbf{g}^{\dagger} \Delta \boldsymbol{\chi}_k \\ \Delta \boldsymbol{\chi}_k &\triangleq \boldsymbol{\nu}_k - \dot{\boldsymbol{x}}_k \end{aligned}$ & 
$\begin{aligned} \boldsymbol{u}_{\textit{EA},k} &= \boldsymbol{u}_{k-1} + \hat{\mathbf{g}}^{\dagger} \Delta \hat{\boldsymbol{\chi}}_k \\ \Delta \hat{\boldsymbol{\chi}}_k &\triangleq \boldsymbol{\nu}_k - \dot{\hat{\boldsymbol{x}}}_k + \boldsymbol{\Delta}(\eta_k) \end{aligned}$ \\ \hline
{Geometric Control (GC)} \cite{lee2010geometric, sreenath2013geometric, goodarzi2015geometric} & Coordinate-free control defined directly on the configuration manifold $SO(3)$ or $SE(3)$. & 
$\boldsymbol{u} = -\mathbf{K} \begin{pmatrix} \boldsymbol{\epsilon}_{\text{pos}} \\ \boldsymbol{\epsilon}_{\text{att}} \end{pmatrix} + \boldsymbol{u}_{\text{ff}}$ & 
$\boldsymbol{u}_{\textit{EA}} = -\mathbf{K}(\eta) \begin{pmatrix} \hat{\boldsymbol{\epsilon}}_{\text{pos}} \\ \hat{\boldsymbol{\epsilon}}_{\text{att}} \end{pmatrix} + \boldsymbol{u}_{\text{ff}}$ \\ \hline
{Differential Flatness (DF)} \cite{faessler2017differential, morrell2018differential, tal2020accurate} & Mapping of states and inputs to "flat" outputs via $\Phi(\cdot)$ and $\Psi(\cdot)$, to enable algebraic trajectory planning. & 
$\begin{aligned} \boldsymbol{x} &= \Phi(\boldsymbol{y}, \dots, \boldsymbol{y}^{(r)}) \\ \boldsymbol{u} &= \Psi(\boldsymbol{y}, \dots, \boldsymbol{y}^{(r)}) \end{aligned}$ & 
$\begin{aligned} \hat{\boldsymbol{x}} &= \Phi(\hat{\boldsymbol{y}}, \dots, \hat{\boldsymbol{y}}^{(r)}) \\ \boldsymbol{u}_{\textit{EA}} &= \Psi(\hat{\boldsymbol{y}}, \dots, \hat{\boldsymbol{y}}^{(r)}) + \boldsymbol{\Delta}(\eta) \end{aligned}$ \\ \hline
{Feedback Linearization (FL) / Dynamic Inversion (DI)} \cite{voos2009nonlinear, das2009dynamic, freddi2011feedback, MICHIELETTO2020108991} & Algebraic inversion of the input–output map to cancel nonlinearities and enforce linear dynamics. & 
$\begin{aligned} \boldsymbol{u} &= \boldsymbol{\alpha}(\boldsymbol{x}) + \boldsymbol{\beta}(\boldsymbol{x}) \boldsymbol{\nu} \\ \boldsymbol{\alpha} &\triangleq -\mathbf{g}^{\dagger} \boldsymbol{f} \\ \boldsymbol{\beta} &\triangleq \mathbf{g}^{\dagger} \end{aligned}$ & 
$\begin{aligned} \boldsymbol{u}_{\textit{EA}} &= \hat{\boldsymbol{\alpha}} + \hat{\boldsymbol{\beta}} \boldsymbol{\nu} + \boldsymbol{\Delta}(\eta) \\ \hat{\boldsymbol{\alpha}} &\triangleq \boldsymbol{\alpha}(\hat{\boldsymbol{x}}) \\ \hat{\boldsymbol{\beta}} &\triangleq \boldsymbol{\beta}(\hat{\boldsymbol{x}}) \end{aligned}$ \\ \hline
{Nonlinear MPC (NMPC)} \cite{kang2009linear, eskandarpour2020constrained, wang2021efficient} & Receding-horizon cost minimization subject to nonlinear dynamics $\dot{\boldsymbol{x}}$ and constrained state $\mathcal{X}$ and input $\mathcal{U}$ sets. & 
$\begin{aligned} \boldsymbol{u}^* = \arg \min_{\boldsymbol{u} \in \mathcal{U}} \int_{\Delta t} \ell_c (\cdot) d\tau \\ \text{s.t.}  \  \dot{\boldsymbol{x}} = \boldsymbol{f}(\boldsymbol{x}, \boldsymbol{u}), \, \boldsymbol{x} \in \mathcal{X} \end{aligned}$ & 
$\begin{aligned} \boldsymbol{u}_{\textit{EA}}^* = \arg \min_{\boldsymbol{u} \in \mathcal{U}} \int_{\Delta t} \ell_c (\cdot, \eta) d\tau \\ \text{s.t.} \ \dot{\hat{\boldsymbol{x}}} = \boldsymbol{f}(\hat{\boldsymbol{x}}, \boldsymbol{u}), \, \boldsymbol{x} \in \mathcal{X}(\eta) \end{aligned}$ \\ \hline
\end{tabular}
\label{t:paradigms}
\end{table*}


%
\begin{definition}[Estimation-Awareness] \label{def:aware}
We define a control law to be Estimation-Aware (EA) if it explicitly incorporates a measurable descriptor of estimation uncertainty, $\eta(t)$, into the feedback structure. Formally, an EA control law $\boldsymbol{\kappa}_{\textit{EA}}(\cdot)$ is implemented as
\begin{align} \label{eq:EA_law}
\boldsymbol{u}_{\textit{EA}} = \boldsymbol{\kappa}_{\textit{EA}} \big( \hat{\boldsymbol{\epsilon}}, \eta(t) \big) \, ,
\end{align}
where $\eta(t) \in \mathbb{R}_{\ge 0}$ denotes a scalar metric of uncertainty (e.g., $\|\tilde{\boldsymbol{x}}(t)\|$, $\mathrm{tr}(\boldsymbol{\Sigma}(t))$, or the tube radius $\gamma(t)$). The resulting closed-loop system is said to be EA-stable if it satisfies the practical tracking bound
\begin{align}
\limsup_{t \to \infty} \| \boldsymbol{\epsilon}(t) \| \leq \xi \big( \eta(t) \big) \, ,
\end{align}
where $\xi(\cdot)$ is a class-$\mathcal{K}$ function mapping the estimation uncertainty to an ultimate tracking error bound \cite{Barmish1983, grune2016nonlinear}.
\end{definition}
As a result, stability shifts from asymptotic convergence to forward invariance within the tube $\mathcal{T}_{\epsilon} (t)$, in line with the bounds derived in Section~\ref{sec:stability} and illustrated in Fig.~\ref{fig:stab}.

\subsection{Estimation-Aware Control Synthesis}
Implementing the design philosophy established in Definition~\ref{def:aware} requires identifying specific "junctions" within the control architecture where the uncertainty metric $\eta(t)$ can be integrated. 
Based on the structural properties of the nonlinear paradigms surveyed in this work, we categorize the EA implementation into three fundamental functional domains described below:
\subsubsection{Additive Awareness (Signal-Level)} This layer injects a robustifying signal $\boldsymbol{\Delta}(\eta)$ to counteract residual nonlinearities arising from imperfect cancellation. By acknowledging that state uncertainty biases the nominal control effort, $\boldsymbol{\Delta}(\eta)$ serves as a dynamic compensator that reconciles the divergence between perceived and actual system dynamics.
\subsubsection{Parametric Awareness (Sensitivity-Level)} Here, the uncertainty metric $\eta(t)$ modulates the controller bandwidth via a gain-governing matrix $\mathbf{K}(\eta)$. This approach regulates closed-loop sensitivity by "backing off" gains when estimation confidence is low, thereby preventing the amplification of estimation noise and high-frequency residuals into the actuators.
\subsubsection{Geometric Awareness (Manifold-Level)} In optimization paradigms, awareness is embedded into the topology of the problem. The metric $\eta(t)$ dictates the admissible volume of the safe operating region by dynamically contracting the constraint set $\mathcal{X}$ or regularizing the stage cost $\ell_c$. This mechanism effectively "shrinks" the admissible error manifold as uncertainty grows, ensuring set-invariance for the true state.

\subsection{Comparative Analysis}
Table~\ref{t:paradigms} outlines contemporary control paradigms for high-bandwidth aerial platforms \cite{kendoul2012survey, engelsman2025guidance}, mapping their nominal formulations ($\boldsymbol{\kappa}$) to the estimation-aware variants ($\boldsymbol{\kappa}_{\textit{EA}}$). This classification reveals a distinct trade-off across the computational spectrum. Analytical architectures, such as GC, feature minimal latency profiles ($\approx 2\tau$) ideal for high-rate inner-loop feedback, yet remain highly sensitive to sensor noise and model mismatches. Conversely, optimization-based frameworks like NMPC deliver superior trajectory tracking fidelity, but incur a steep cubic computational complexity scaling inherent to their iterative solvers.

\subsection{Architecture and Mechanism} \label{remark:choice}
While the Incremental Nonlinear Dynamic Inversion (INDI) formulation inherently provides robust tracking against plant uncertainties, sensor noise inevitably distorts the state-observer $\widehat{\mathcal{S}}$. Given the central role of estimation-control coupling explored in this study, this distortion severely degrades CL stability margins. To counter this operational blindness, the proposed EA augmentation $\boldsymbol{\Delta}(\boldsymbol{\eta}_k)$ is embedded directly within the nominal INDI control increment
\begin{align} \label{eq:u_EA}
\boldsymbol{u}_{\textit{EA},k} &= \boldsymbol{u}_{k-1} + \hat{\mathbf{g}}^{\dagger} \Delta \hat{\boldsymbol{\chi}}_k \, ,
\\ 
\Delta \hat{\boldsymbol{\chi}}_k &\triangleq \boldsymbol{\nu}_k - \dot{\hat{\boldsymbol{x}}}_k + \boldsymbol{\Delta}(\boldsymbol{\eta}_k) \, .
\end{align}
This awareness is formalized by converting the covariance matrix $\boldsymbol{\Sigma}_k$ into an uncertainty-gated operator that selectively attenuates or cross-projects $\dot{\hat{\boldsymbol{x}}}_k$ by
\begin{align}
\boldsymbol{\Delta}(\boldsymbol{\eta}_k) &= \left( \boldsymbol{I} - \overline{ \boldsymbol{ \Sigma}}_k \right) \dot{\hat{\boldsymbol{x}}}_k \, , \label{eq:EA_act}
\\
\overline{\boldsymbol{\Sigma}}_k & \triangleq \boldsymbol{D}_k^{-1/2} \boldsymbol{\Sigma}_k \boldsymbol{D}_k^{-1/2} \, , \label{eq:corr}
\end{align}
where $\boldsymbol{D}_k = \mathrm{diag}(\boldsymbol{\Sigma}_k)$. During nominal, low-velocity flight, estimation certainty is high and states remain independent; thus, $\overline{\boldsymbol{\Sigma}}_k \to \mathbf{I}$, causing $\boldsymbol{\Delta}(\boldsymbol{\eta}_k)$ to vanish and seamlessly restoring the nominal INDI law. Conversely, in high-velocity regimes where observability degrades, the off-diagonal elements in $(\mathbf{I} - \overline{\boldsymbol{\Sigma}}_k)$ flare up. The operator then cross-projects these coupled states, acting as an algorithmic buffer that dampens $\Delta \hat{\boldsymbol{\chi}}_k$ to prevent aggressive tracking of distorted estimates.

\subsection{Quadrotor Physical Realization}
The system dynamics are mapped to the physical quadrotor realization illustrated in Fig.~\ref{fig:quad_manu}. We adopt a standard six-degree-of-freedom (6-DoF) rigid-body model characterized by mass $m$ and moment of inertia (MoI) matrix $\boldsymbol{J}$. To avoid representation singularities, the attitude is parameterized via a unit quaternion
\begin{align} \label{eq:quat_def}
\boldsymbol{q} \triangleq \boldsymbol{q}_{\mathcal{B}}^{\mathcal{I}}
=
\begin{pmatrix}
q_0 & q_1 & q_2 & q_3
\end{pmatrix}^\top \ , \ \|\boldsymbol{q}\|=1 \, ,
\end{align}
which is mapped to the standard Euler angle orientation vector $\boldsymbol{\Theta}$, comprising roll ($\phi$), pitch ($\theta$), and yaw ($\psi$), defined via a $Z$-$Y$-$X$ rotation sequence. The corresponding rotation matrix $\boldsymbol{R}(\boldsymbol{q}) \in SO(3)$ maps vectors from the body frame $\mathcal{B}$ to the inertial frame $\mathcal{I}$, and is defined by its columns as
\begin{align}
\boldsymbol{R}(\boldsymbol{q}) \triangleq \begin{bmatrix} \mathbf{e}_x^\mathcal{B} & \mathbf{e}_y^\mathcal{B} & \mathbf{e}_z^\mathcal{B} \end{bmatrix}_{\mathcal{I}} ,
\end{align}
where $\mathbf{e}_i^\mathcal{B}$ denote the principal axes of $\mathcal{B}$ expressed in $\mathcal{I}$. Let $\boldsymbol{\xi}^\mathcal{I}=(x,y,z)^\top$ denote inertial position and $\boldsymbol{v}^\mathcal{B}$ the body linear velocity. The translational kinematics are
\begin{align}
\dot{\boldsymbol{\xi}}^\mathcal{I}
=
\boldsymbol{R}(\boldsymbol{q})\,\boldsymbol{v}^\mathcal{B},
\label{eq:trans_kin}
\end{align}
while the body angular velocity $\boldsymbol{\omega}^\mathcal{B}$ drives the quaternion evolution according to \cite{carino2015quadrotor}
\begin{align} \label{eq:quat_kinematics}
\dot{\boldsymbol{q}}
=
\frac{1}{2}
\boldsymbol{q}\odot
\begin{pmatrix}
0 \\
\boldsymbol{\omega}^\mathcal{B}
\end{pmatrix},
\end{align}
where $\odot$ denotes Hamilton quaternion multiplication. 
Compiling these definitions, the state vector is given as
\begin{align} \label{eq:states}
\boldsymbol{x} = \begin{pmatrix}
(\boldsymbol{\xi}^\mathcal{I})^\top &
(\boldsymbol{v}^\mathcal{B})^\top &
\boldsymbol{q}^\top &
(\boldsymbol{\omega}^\mathcal{B})^\top
\end{pmatrix}^\top
\in \mathbb{R}^{13} \, .
\end{align}
The control input vector $\boldsymbol{u}$ parameterizes the motor-induced vertical thrust $\boldsymbol{T}$ and body torques $\boldsymbol{\tau}$ as
\begin{align} \label{eq:u_com}
\boldsymbol{u} \triangleq \bigg( u_z^\mathcal{B} \ 
\underbrace{
\left( {\tau}_{x}^\mathcal{B} \ {\tau}_{y}^\mathcal{B} \ {\tau}_{z}^\mathcal{B} \right) }_{ \boldsymbol{u}_{\tau}^\mathcal{B} } \bigg)^\top 
\in \, \mathbb{R}^4 \, .
\end{align}
Accounting for external aerodynamic forces $\boldsymbol{f}_{ \mathrm{aero} }^\mathcal{B}$ and moments $\boldsymbol{\tau}_{\mathrm{aero}}^\mathcal{B}$, the total applied body-frame wrench is
\begin{align} \label{eq:wrench}
\begin{pmatrix}
\boldsymbol{f}^\mathcal{B} \\
\boldsymbol{\tau}^\mathcal{B}
\end{pmatrix} = \begin{pmatrix}
u_z^\mathcal{B} \mathbf{e}_z + \boldsymbol{f}_{\mathrm{aero}}^\mathcal{B} \\
\boldsymbol{u}_{\tau}^\mathcal{B} + \boldsymbol{\tau}_{\mathrm{aero}}^\mathcal{B}
\end{pmatrix} \, .
\end{align}
Substituting these expressions into the rigid-body equations yields the full control-affine state-space dynamics
\begin{align} \label{eq:x_dot_big}
\dot{\boldsymbol{x}} = \underbrace{ \begin{pmatrix}
\boldsymbol{R} (\boldsymbol{q}) \boldsymbol{v}^\mathcal{B} \\
\boldsymbol{R}^\top(\boldsymbol{q}) \textbf{g}^\mathcal{I} - \boldsymbol{\omega}^\mathcal{B} \times \boldsymbol{v}^\mathcal{B} + \frac{1}{m} \boldsymbol{f}^\mathcal{B} \\
\frac{1}{2}\boldsymbol{q} \odot
\begin{pmatrix}
0 \\ \boldsymbol{\omega}^\mathcal{B}
\end{pmatrix} \\
\boldsymbol{J}^{-1}
\left( - \boldsymbol{\omega}^\mathcal{B}
\times \boldsymbol{J}\boldsymbol{\omega}^\mathcal{B} + \boldsymbol{\tau}^\mathcal{B}
\right)
\end{pmatrix} 
}_{\boldsymbol{f}(\boldsymbol{x}) + \mathbf{g}(\boldsymbol{x})\boldsymbol{u} + \boldsymbol{w}} \, .
\end{align}
Detailed aerodynamic derivations and physical parameters are given in Appendices \ref{appendix:coeffs} and \ref{appendix:sys}, respectively.

\begin{figure}[t]
\centering
\begin{tikzpicture}[>=Latex,
node distance=16mm and 24mm,
block/.style = {draw, rounded corners=3pt, thick, align=center, inner sep=4pt, minimum width=14mm, minimum height=10mm},
labelnode/.style={font=\normalsize, inner sep=1pt},
every node/.style={font=\normalsize},  
arrow/.style={-{Stealth[length=5pt, width=5pt]}, line width=1.95pt},
d_arrow/.style={-{Stealth[length=5pt, width=5pt]}, line width=1.6pt, dashed},
d_line/.style={line width=1.5pt, dashed},
frame/.style={line width=4pt, draw=gray!50},
circ/.style={thick, inner sep=0pt},
]
\node[inner sep=0pt, align=center, opacity=0.6] (quad) {\includegraphics[width=0.45\textwidth]{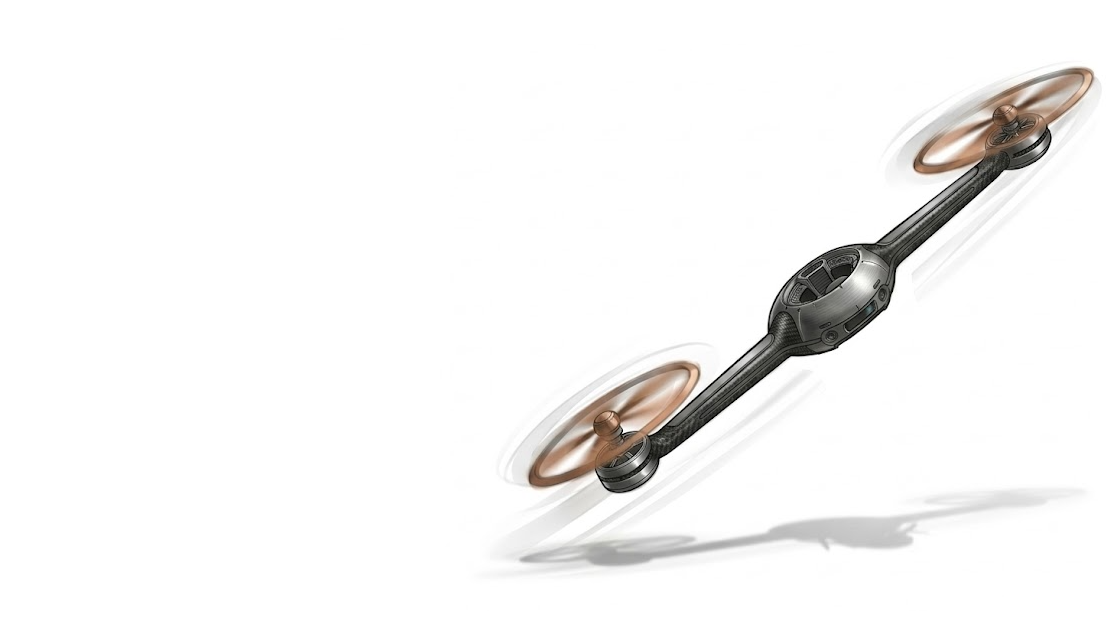}};

\node[circ, shift={(+20mm,+15mm)}] at (-6,-2.7) (cs_I){};
\draw[arrow] (cs_I) --++ (1.25*.8,-.5*.8) node[shift={(3mm,-1mm)}]{$\textbf{e}_x^\mathcal{I}$};
\draw[arrow] (cs_I) --++ (1.25*.8,.35*.8) node[shift={(3mm,0mm)}]{$\textbf{e}_y^\mathcal{I}$};
\draw[arrow] ($(cs_I)+(0.02,-0.02)$) --++ (0.035*.8, 1.2*.8) node[above]{$\textbf{e}_z^\mathcal{I}$};

\node[circ, shift={(+0mm,-2.5mm)}] at (2.05,0.27) (cs_B){};
\draw[d_arrow] ($(cs_B)+(-.05,0.05)$) --++ (-0.66,0.77) node[shift={(-1.8mm,1.0mm)}, fill=white, text opacity=1, fill opacity=0.3]{${\boldsymbol{T}}$};
\draw[arrow] (cs_B) --++ (-1.3,-0.2) node[shift={(-2.4mm,0.2mm)}]{$\textbf{e}_y^\mathcal{B}$};
\draw[arrow] (cs_B) --++ (-0.5,-0.95) node[shift={(-0.5mm,-2.9mm)}]{$\textbf{e}_x^\mathcal{B}$};
\draw[arrow] (cs_B) --++ (+.7,-.87) node[shift={(2mm,-2.mm)}]{$\textbf{e}_z^\mathcal{B}$};
\draw[d_arrow] (cs_B) --++ (0,-1.5) node[below]{$m \textbf{{g}}^\mathcal{I}$};
%
\node[circ, red!80!white, shift={(+2mm,+1mm)}, rotate=-5] at (.44,0.35) (cs_E){$\ \hat{\boldsymbol{\epsilon}} \ $}; 
\draw[arrow, red!80!white]($(cs_E)+(0.2,-0.07)$) --++ (0.72, -0.18) node[above]{};
\draw[arrow, red!80!white]($(cs_E)+(-0.2,0.02)$) --++ (-0.75, .175) node[above]{};
\node at ($(cs_I)+(0,-.5)$) [fill=white, text opacity=1, fill opacity=0.0, shift={(48mm,34.5mm)}]{$\hat{\boldsymbol{x}}$};
\node at ($(cs_I)+(0,-.5)$) [fill=white, text opacity=1, fill opacity=0.0, shift={(30mm,29mm)}]{$\boldsymbol{x}_{\mathrm{ref}} $};

\draw[blue!20!green, line width=3.0pt, dashdotdotted]($(-4.2,1.6)$) .. controls ($(-1.5,1.9)$) and  ($(1.9,1.8)$) .. ($(2.1,0.45)$);
\draw[black!70!white, line width=3.0pt, dashdotdotted]($(-4.2,1.5)$) .. controls ($(-2.4,1.7)$) and  ($(1.5,1.6)$) .. ($(-0.45,0.7)$);
\end{tikzpicture}
\caption{Transient tracking geometry: The perceived tracking error (red) illustrates the immediate spatial deviation of the estimated trajectory (green) from the target reference (black).}
\label{fig:quad_manu}
\end{figure}
%

\subsection{Frozen-Time Snapshot} 
To leverage linear control synthesis techniques, the nonlinear dynamics in \eqref{eq:x_dot_big} are analyzed locally about a quasi-static operating point $(\boldsymbol{x}_0,\boldsymbol{u}_0)$, yielding
\begin{align}
\widehat{\mathcal{S}}_0 : \left\{ 
\begin{aligned}
\dot{\tilde{\boldsymbol{x}}} &= \boldsymbol{F}_{\boldsymbol{x}_0} \tilde{\boldsymbol{x}} + \boldsymbol{G}_{\boldsymbol{x}_0} \tilde{\boldsymbol{u}} + \boldsymbol{w} \, ,
\\
\tilde{\boldsymbol{y}} &= \boldsymbol{H}_{\boldsymbol{x}_0} \tilde{\boldsymbol{x}} + \boldsymbol{v} \, .
\end{aligned} \right.
\end{align}
Unlike the time-varying estimation variant in \eqref{eq:error_space}, the frozen-time paradigm \cite{Chul2004frozen} assumes that, even during aggressive maneuvers, the local Jacobians vary slowly relative to the control horizon ($\dot{\boldsymbol{F}} \approx \mathbf{0}, \dot{\boldsymbol{G}} \approx \mathbf{0}$). This time-scale separation permits treating the system as locally LTI, capturing a momentary operating snapshot suitable for localized analysis. Consequently, applying the Laplace transform under zero initial conditions yields the nominal open-loop MIMO transfer matrix
\begin{align} \label{eq:OL_TF}
\boldsymbol{\Gamma}(s) =
\boldsymbol{H}_{\boldsymbol{x}_0}
\left( s\boldsymbol{I} - \boldsymbol{F}_{\boldsymbol{x}_0} \right)^{-1} \boldsymbol{G}_{\boldsymbol{x}_0} \ \in \ \mathbb{C}^{p \times m} \, .
\end{align}
The zero-pole-gain form of the $ij$-th entry is then
\begin{align} \label{eq:pole_zero}
\Gamma_{ij}(s) =  K_{zp,ij} \frac{\prod_{m=1}^{N_{z}}(s-z_{m})}
{\prod_{k=1}^{N_{p}}(s-p_{k})} \, ,
\end{align}
where $K_{{zp},ij}$ denote the channel gain, $z_{m}$ and $p_{k}$ denote the respective zeros and poles, and $N_{z}$ with $N_{p}$ their corresponding total counts. To explicitly isolate tracking bandwidth and high-frequency roll-off characteristics, the associated time-constant Bode form is 
\begin{align} \label{eq:bode}
\Gamma_{ij}(s) = \frac{\kappa_{ij}}{s^\nu} \frac{\prod_{m=1}^{N_z} (1 + \tau_m s)}{\prod_{k=1}^{N_p} (1 + \tau_k s)} \prod_{l=1}^{N_c} \frac{1}{\frac{s^2}{\omega_l^2} + \frac{2\zeta_l}{\omega_l}s + 1} \, ,
\end{align}
where $\kappa_{ij}$ represents the static plant gain of the $ij$-th entry, $\nu$ is the type index, $\tau$ denotes a first-order time constant, and $N_{c}$ represent the counts of underdamped second-order complex conjugate pairs.
\\
\textit{Dominant Channels:} The multivariable MIMO matrix maps an extensive array of coupled transient modes, many of which remain negligible to primary flight characteristics. To maintain tractability, the analysis is restricted to the dominant diagonal channels of the system, namely the SISO transfer functions $\Gamma_{z}^{T}$ (heave), $\Gamma_{\phi}^{\tau_x}$ (roll), $\Gamma_{\theta}^{\tau_y}$ (pitch), and $\Gamma_{\psi}^{\tau_z}$ (yaw).
These frequency-domain representations enable benchmarking via two distinct structural perspectives:
\\
\textit{1. Open-Loop Analysis:} The gain structure in \eqref{eq:pole_zero} tracks pole-zero migrations across the flight envelope, whereas the Bode form in \eqref{eq:bode} isolates the tracking bandwidth, transmission zeros, and stability margins.
\\
\textit{2. Closed-Loop (CL) Analysis:} Introducing a static feedback gain snapshot $\boldsymbol{K} \in \mathbb{R}^{m \times p}$ yields the localized control law $\tilde{\boldsymbol{u}} = -\boldsymbol{K}\tilde{\boldsymbol{y}}$. Closing the loop establishes the autonomous error dynamics
\begin{align} \label{eq:F_CL}
\dot{\tilde{\boldsymbol{x}}} = \underbrace{(\boldsymbol{F}_{\boldsymbol{x}_0} - \boldsymbol{G}_{\boldsymbol{x}_0} \boldsymbol{K}\boldsymbol{H}_{\boldsymbol{x}_0})}_{\boldsymbol{F}_{\mathrm{CL}}} \tilde{\boldsymbol{x}} \, .
\end{align}
Via the open-loop transfer matrix in \eqref{eq:OL_TF}, the resulting CL frequency-domain behavior is characterized by the complementary sensitivity matrix
\begin{align} \label{eq:sensitivity}
\overline{\boldsymbol{\Gamma}}_{\mathrm{CL}} (s) = \left( \boldsymbol{I}_p + \boldsymbol{\Gamma}\boldsymbol{K} \right)^{-1} \boldsymbol{\Gamma}\boldsymbol{K} \in \mathbb{C}^{p \times p} . 
\end{align}
Accordingly, local stability is guaranteed if $\boldsymbol{F}_{\mathrm{CL}}$ is strictly Hurwitz, with its parameter-dependent eigenvalues constrained to $\operatorname{Re} \left\{ \lambda_i( \boldsymbol{F}_{\mathrm{CL}} ) \right\} < 0$. This CL formulation maps out stability boundaries, damping characteristics, and tracking performance.

\section{Numerical Simulation} \label{sec:results}
This section stress-tests the proposed EA augmentation under severe model uncertainty and sensor noise. To ensure reproducibility, our complete source code is publicly available on @ \href{https://github.com/ansfl/Estimation-Aware-Control}{\texttt{\textbf{GitHub}}}. 

\subsection{Stability Analysis}
Grounded in Lyapunov stability principles, we first provide a geometric interpretation of the bounds established in Theorem~\ref{thm:thickness}. To achieve this, we adopt a phase-portrait representation of the radial error projection (derived fully in Appendix~\ref{appendix:proj})
\begin{align}
\dot{\hat{\boldsymbol{\epsilon}}}_{\parallel} \le - \lambda \left( 1 - \frac{\gamma^2}{ \| \hat{\boldsymbol{\epsilon}} \|^2 } \right) \hat{\boldsymbol{\epsilon}} \, .
\end{align}
As evidenced by this inequality, stabilizability is dictated largely by the trade-off between the convergence rate $\lambda$ and the error bound $\gamma$, which naturally partitions the asymptotic analysis into four distinct regimes:
\begin{enumerate}[label=\roman*), itemsep=1.5pt, parsep=1pt, topsep=2pt]
\item Stable Node ($ \gamma \ll \|\hat{\boldsymbol{\epsilon}}\| $): The perception floor vanishes, and the dynamics simplify to $\dot{\hat{\boldsymbol{\epsilon}}} \approx -\lambda \hat{\boldsymbol{\epsilon}}$. 
\item Unstable Source ($ \gamma > \| \hat{\boldsymbol{\epsilon}} \| $): The bracketed term flips sign, converting the origin into an unstable source that drives tracking errors outward $\dot{\hat{ \boldsymbol{\epsilon}}} \approx \lambda \hat{\boldsymbol{\epsilon}}$. 
\item Stable Limit-Cycle ($ \gamma = \| \hat{\boldsymbol{\epsilon}} \| $): The radial velocity nullifies ($\dot{\hat{\boldsymbol{\epsilon}}} = \boldsymbol{0}$), defining an isolated periodic orbit whose tracking precision is bounded by radius $\gamma$.
\item Singularity ($ \lim_{\lambda \to \infty} \gamma = 0 $): Exploding gains drive $\gamma$ to zero, revealing that even infinite control effort fails to improve tracking precision as $\| \tilde{\boldsymbol{x}} \| \leq \| \hat{\boldsymbol{\epsilon}} \|$.
\end{enumerate}
As visualized in Fig.~\ref{fig:phase}, driving $\hat{\boldsymbol{\epsilon}} \to \boldsymbol{0}$ fails to eliminate the physical tracking error, which remains strictly lower-bounded by the estimation error floor $\| \tilde{\boldsymbol{x}} \|$. As the trajectory collapses onto this irreducible boundary, the separation fallacy formalized in Prop.~\ref{prop:fallacy} is exposed.
\begin{figure}[h]
\begin{center}
\includegraphics[width=0.47\textwidth]{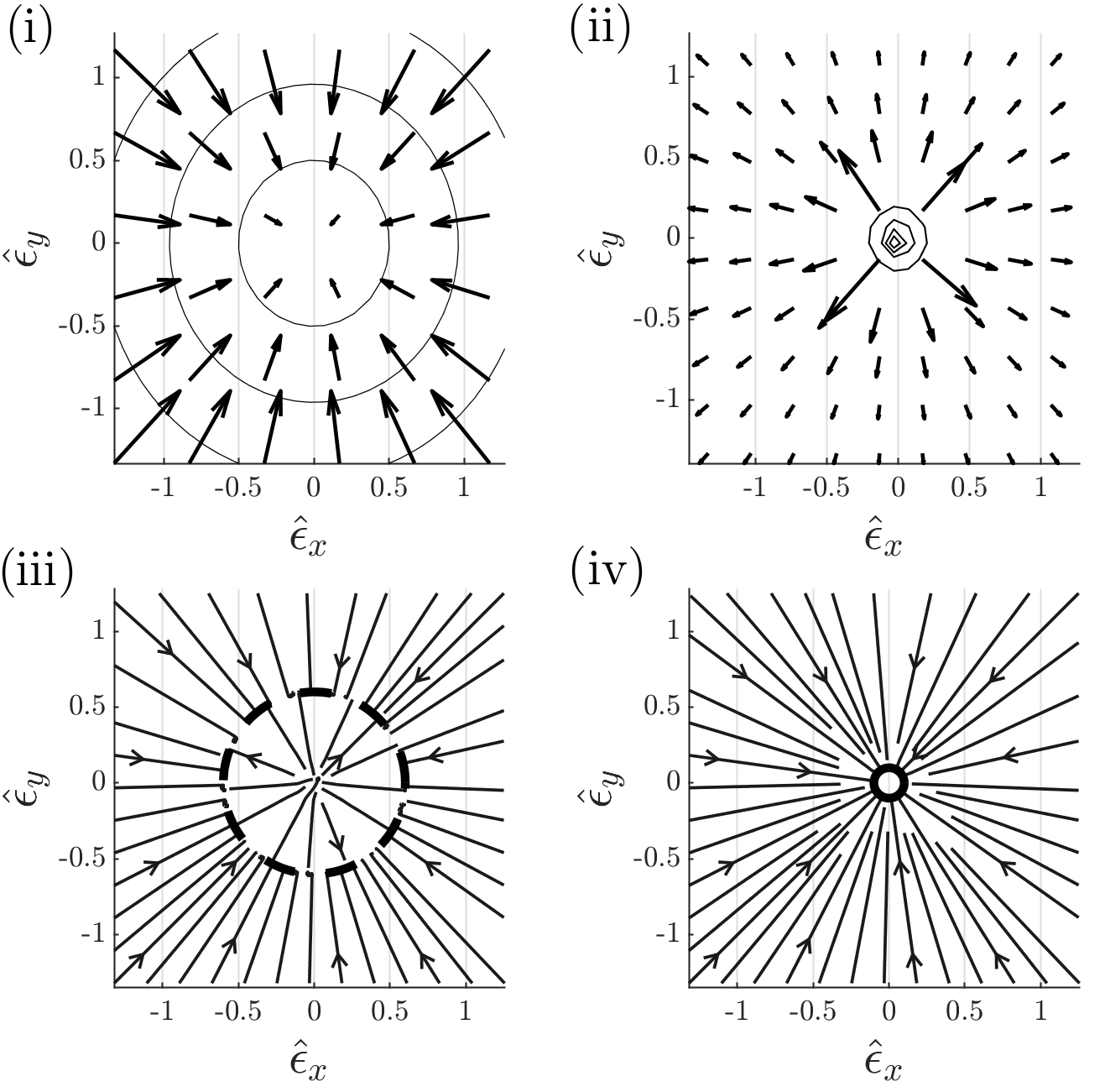}
\caption{Four-regime phase-portrait analysis: i) stable convergence, ii) unstable repulsion, iii) equilibrium, and iv) singularity.}
\label{fig:phase}
\end{center}
\end{figure}

\subsection{Baseline Analysis}
We establish a clear baseline for subsequent comparisons using two pillars. First, the control logic is dictated by the INDI framework presented in Table~\ref{t:paradigms}, applied to both nominal $\boldsymbol{u}$ and EA-augmented $\boldsymbol{u}_{\textit{EA}}$ controllers. Second, a tilted Lemniscate is selected as the 3D reference trajectory, creating a figure-eight path enclosed within a $2\mathcal{R} \times \mathcal{R} \times \mathcal{R}$ bounding box with $\mathcal{R}=5$m. 
We begin our analysis in a benign, quasi-static aerial operating regime featuring a low tracking velocity of $\mathrm{v}_{\xi} = 1.0\,\text{m/s}$, which corresponds to a full orbital period of $\text{T} = 46.3\,\text{s}$. Under the baseline controller $\boldsymbol{u}$, Fig.~\ref{fig:lemni_1} displays the spatial coordinates $\boldsymbol{\xi}^{\mathcal{I}}(t)$ of the ground truth (blue), the estimates (orange), and the reference path (dashed black).
\begin{figure}[h]
\begin{center}
\includegraphics[width=0.485\textwidth]{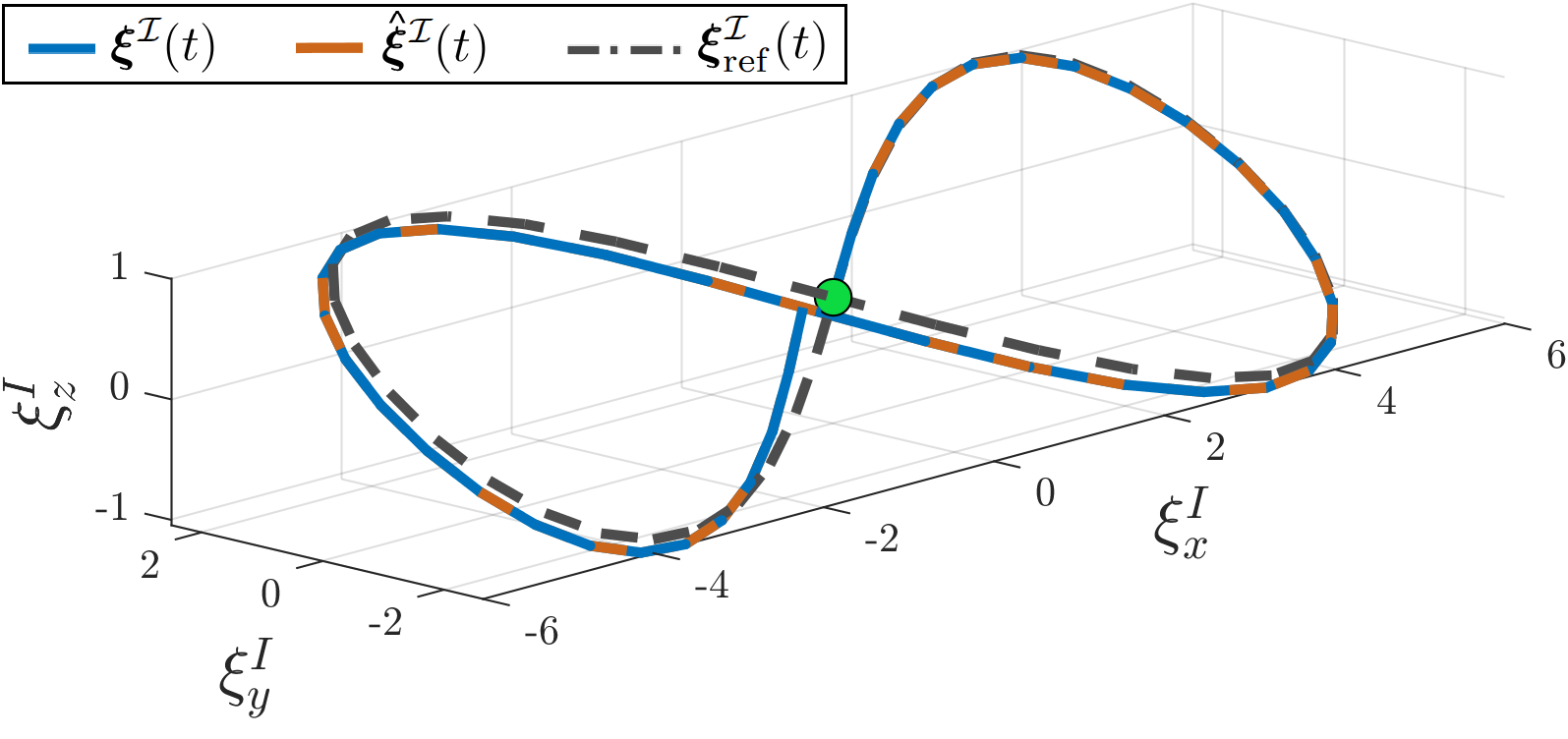}
\caption{Tracking scenario at $\mathrm{v}_{\xi} = 1.0\,\text{m/s}$: spatial coordinates of the three kinematic trajectories under baseline controller $\boldsymbol{u}$.}
\label{fig:lemni_1}
\end{center}
\end{figure} 
\\
As observed, the true and estimated trajectories remain closely matched, exhibiting only a slight divergence from the reference path toward the end of the cycle. To evaluate it more closely, Fig.~\ref{fig:states_1} presents the time evolution of all 12 system states \eqref{eq:states}. This low-velocity scenario ensures high-fidelity tracking, as controller $\boldsymbol{u}$ accurately follows the trajectory kinematics with only minor tracking divergence in the lateral axes.
\begin{figure}[h]
\begin{center}
\includegraphics[width=0.49\textwidth]{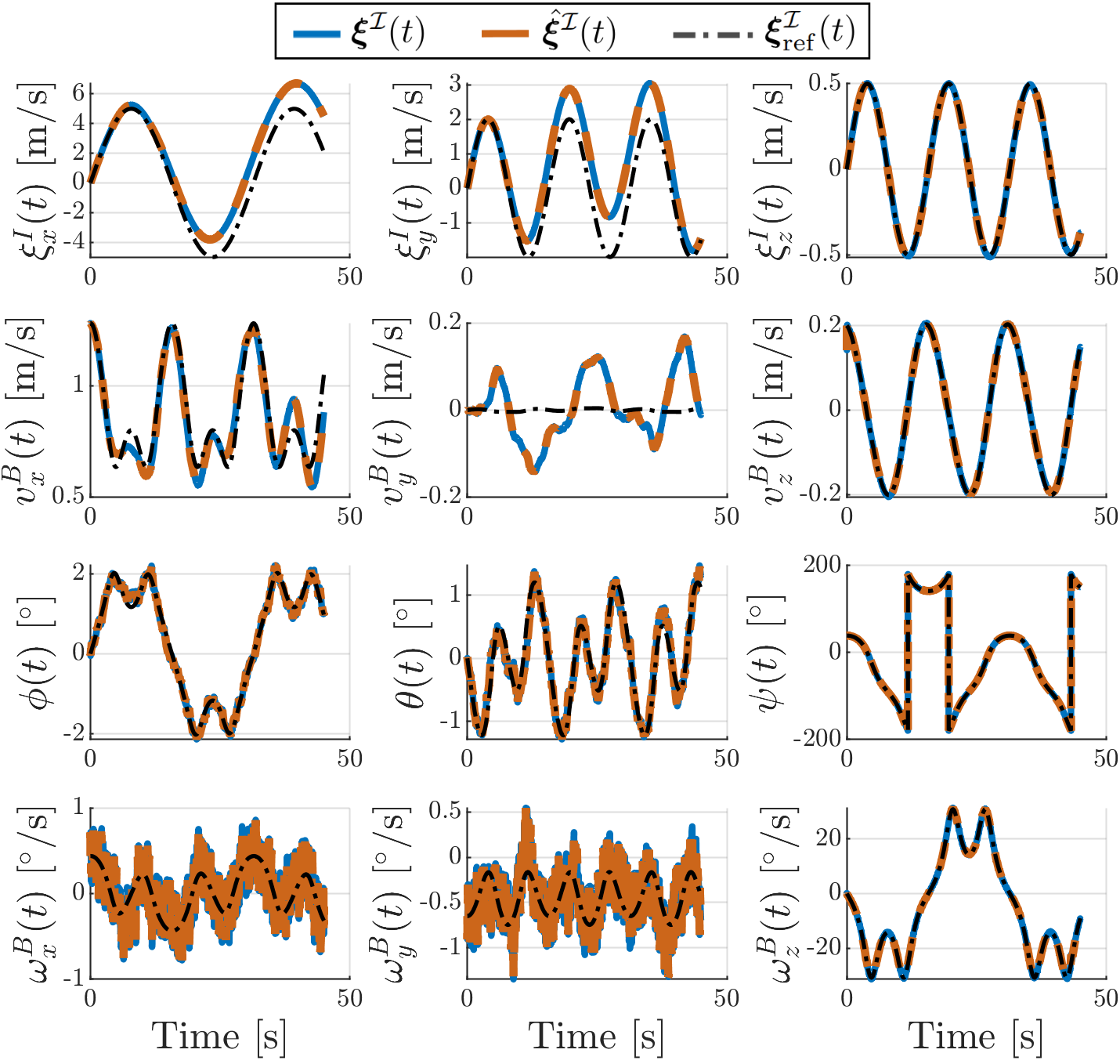}
\caption{Transient behavior of the full state-space under baseline controller $\boldsymbol{u}$ at $\mathrm{v}_{\xi} = 1.0\,\text{m/s}$ over a complete orbit.}
\label{fig:states_1}
\end{center}
\end{figure} 
\\
To elucidate the correlation between the two domains, Fig.~\ref{fig:comp_est_0} evaluates the error norms for both state estimation (left, $\tilde{\boldsymbol{x}}$) and tracking (right, $\hat{\boldsymbol{\epsilon}}$) profiles. Owing to the high update rate, estimation errors remain tightly bounded throughout the cycle; however, tracking errors progressively accumulate, particularly during the sharpest turns of the trajectory.
\begin{figure}[!h]
\begin{center}
\includegraphics[width=0.49\textwidth]{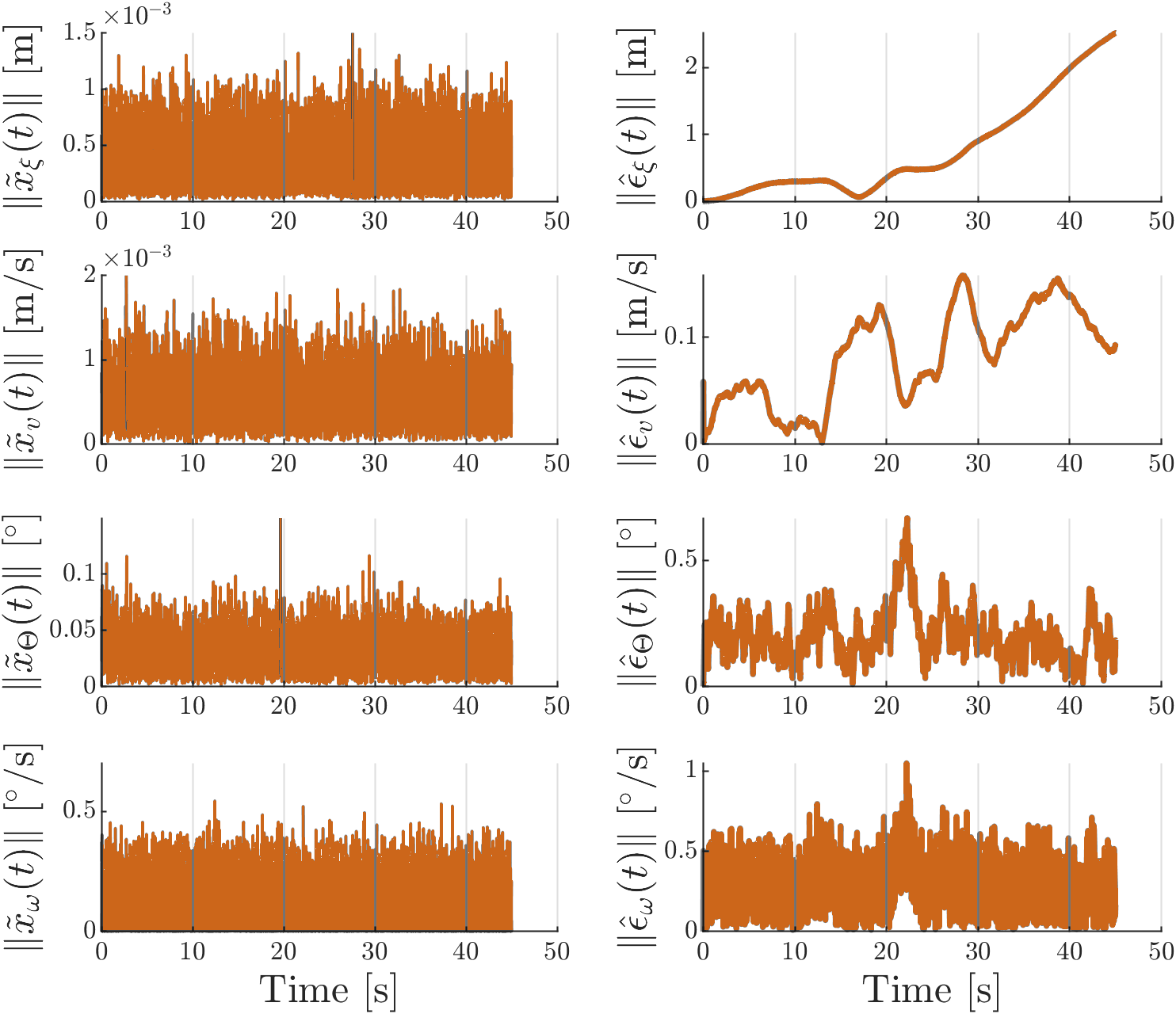}
\caption{Performance analysis ($\mathrm{v}_{\xi}=1.0$m/s): state estimation error (left, $\|\tilde{\boldsymbol{x}}\|$) and trajectory tracking error (right, $\|\hat{\boldsymbol{\epsilon}}\|$).}
\label{fig:comp_est_0}
\end{center}
\end{figure} 
\\
Fig.~\ref{fig:inertial} displays the sensed vehicle dynamics via the norm profiles of the accelerometer specific forces (top) and gyroscope angular rates (bottom). Initially, a low tracking rate ($2\text{ m/s}$, blue) yields steady profiles, with the airframe load factor (see \eqref{eq:load_factor} in Appendix~\ref{appendix:coeffs}) remaining near nominal rest conditions ($\| \tilde{n} \| \approx 1$). However, as velocity increases to $5\text{ m/s}$ (green) and $8\text{ m/s}$ (red), the oscillations heighten in both amplitude and frequency, inducing severe mechanical stress that drives the peak load factor up to ($\| \tilde{n} \| \approx 1.5$). 
From a control perspective, these dynamics dictate trackability; while lower velocities ($\mathrm{v}_{\xi} \lesssim 3$~m/s) remain linearly predictable, higher speeds activate cross-couplings effects, challenging tracking via standard feedback loops.
\begin{figure}[t]
\begin{center}
\includegraphics[width=0.49\textwidth]{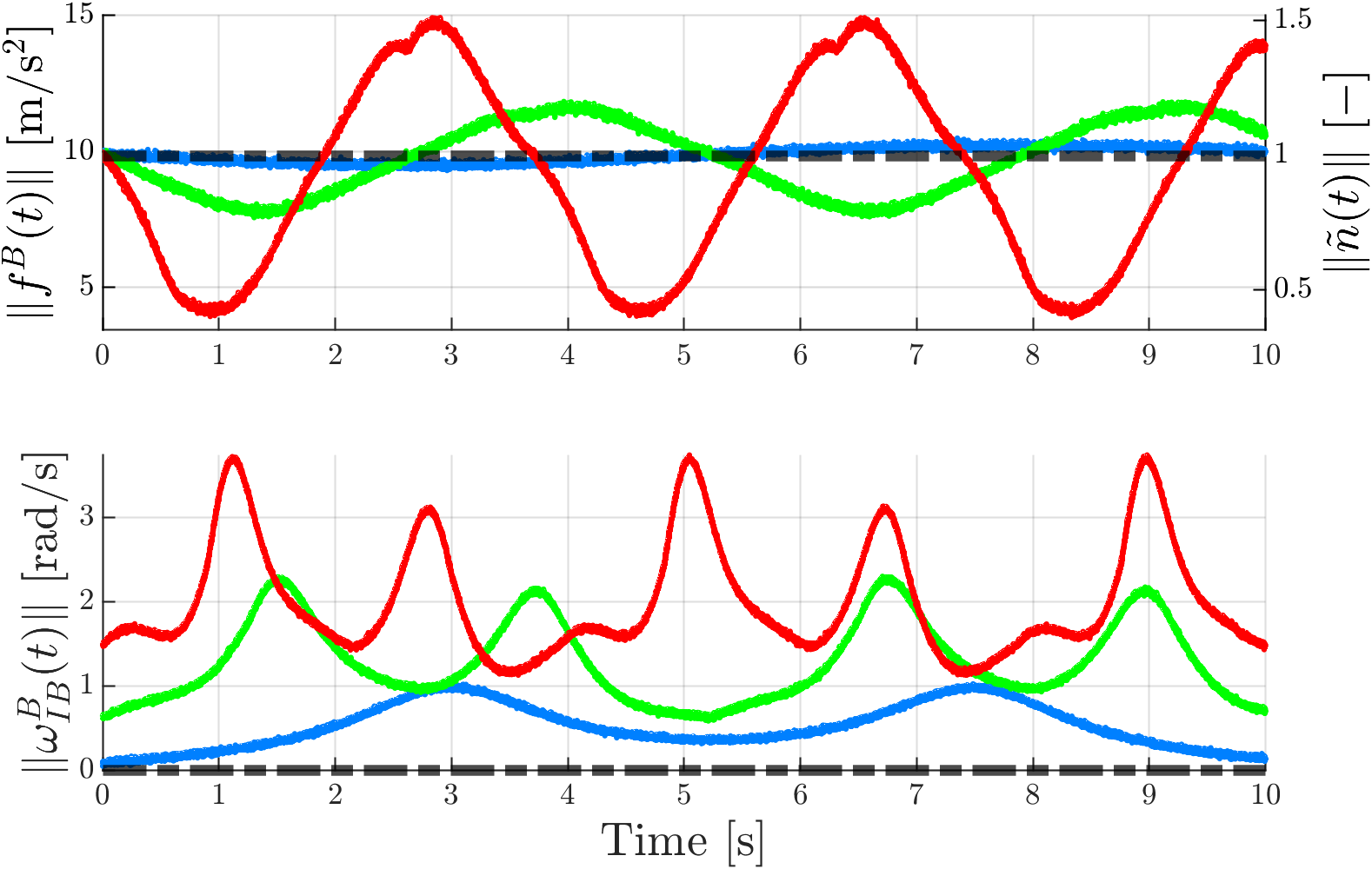}
\caption{Inertial measurements: specific forces (top) and angular rates (bottom) across tracking velocities of $\mathrm{v}_{\xi} \in \{2, 5, 8 \}$ m/s.}
\label{fig:inertial}
\end{center}
\end{figure} 
%

\subsection{Comparative Analysis}
Having established the baseline performance, we now intensify the flight dynamics to evaluate how the proposed architecture \eqref{eq:u_EA} ($\boldsymbol{u}_{\textit{EA}}$, green) benchmarks against the baseline $\boldsymbol{u}$ (brown) under identical system parameters. To expose potential tracking divergence, the orbital periods are progressively increased to nominal tracking velocities of $\mathrm{v}_{\xi} \in \{3.27, 4.91, 7.82, 10.64\}$~m/s, corresponding to orbital periods of $T \in \{30, 20, 15, 10\}$~s. The resulting spatial trajectories are illustrated sequentially from top-left to bottom-right in Fig.~\ref{fig:lemni_sub}.
\begin{figure}[h]
\begin{center}
\includegraphics[width=0.465\textwidth]{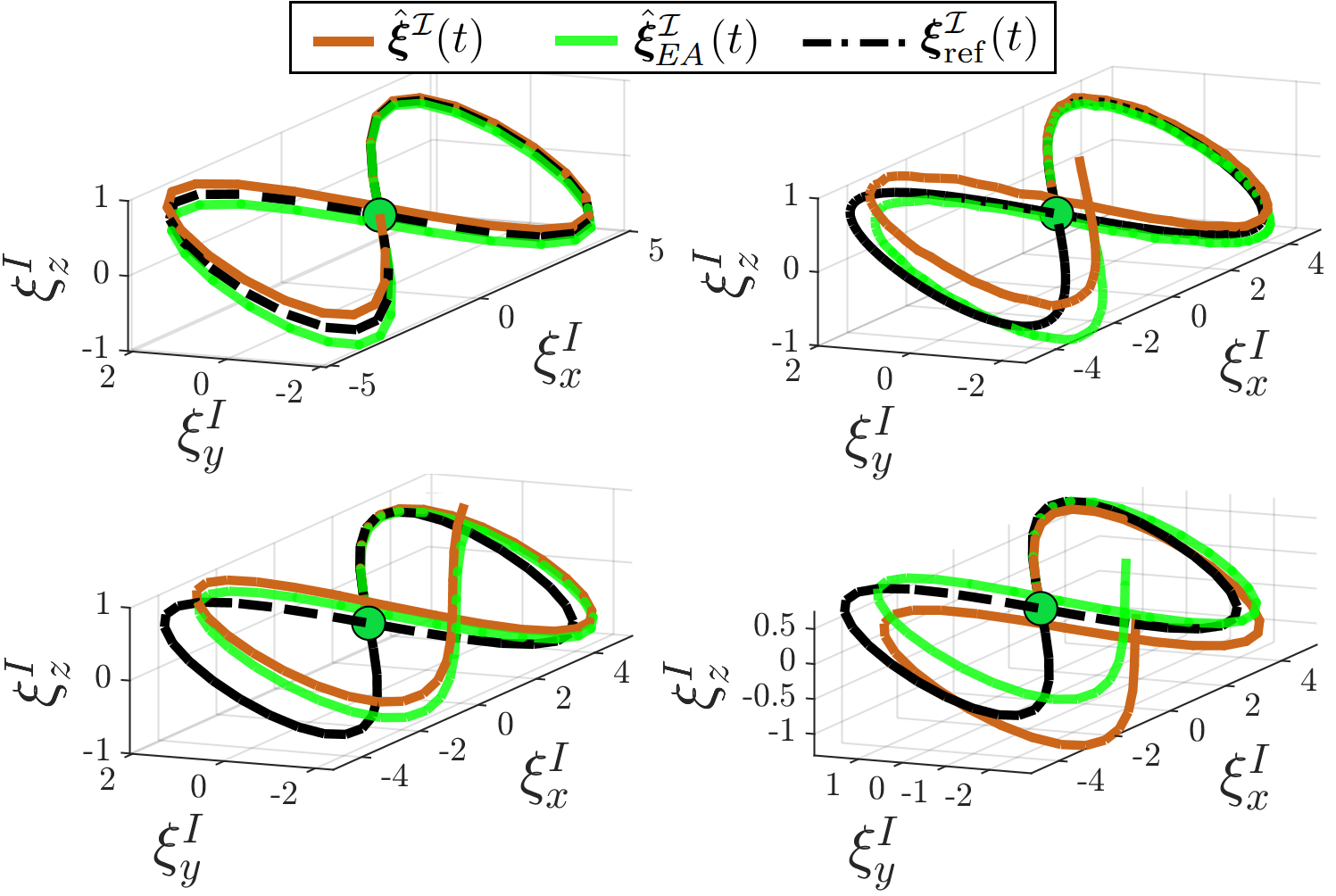}
\caption{Reference trajectory ($\boldsymbol{\xi}_{\mathrm{ref}}^{\mathcal{I}}$, black) tracking: baseline ($\boldsymbol{\xi}^{\mathcal{I}}$, brown) vs. proposed EA ($\boldsymbol{\xi}_{\mathrm{EA}}^{\mathcal{I}}$, green) controllers.}
\label{fig:lemni_sub}
\end{center}
\end{figure} 
\\
While a baseline velocity of $\mathrm{v}_{\xi} = 3.27\text{ m/s}$ (top-left) is manageable for both controllers, higher rates systematically degrade tracking fidelity. This becomes severe at $\mathrm{v}_{\xi} = 10.64\text{ m/s}$ (bottom-right), where both architectures struggle against the escalating spatial offsets. Yet, a distinct performance gap emerges and widens monotonically with respect to $\mathrm{v}_{\xi}$. 
\\
To examine it closely, we restrict our analysis to the flat outputs: translational states $\boldsymbol{\xi} = [x, y, z]^\top$ and yaw angle $\psi$; as the latter is inherently unobservable from gravity alone and drives the tracking drift.
\\
Fig.~\ref{fig:comp_est_1} evaluates these metrics at an intermediate velocity of $\mathrm{v}_{\xi} = 3.27\,\text{m/s}$, averaged over 10 cycles. Notably, while the estimator maintains bounded position and yaw estimation errors, the corresponding tracking errors exhibit a gradual growth over time.
\begin{figure}[!t]
\begin{center}
\includegraphics[width=0.45\textwidth]{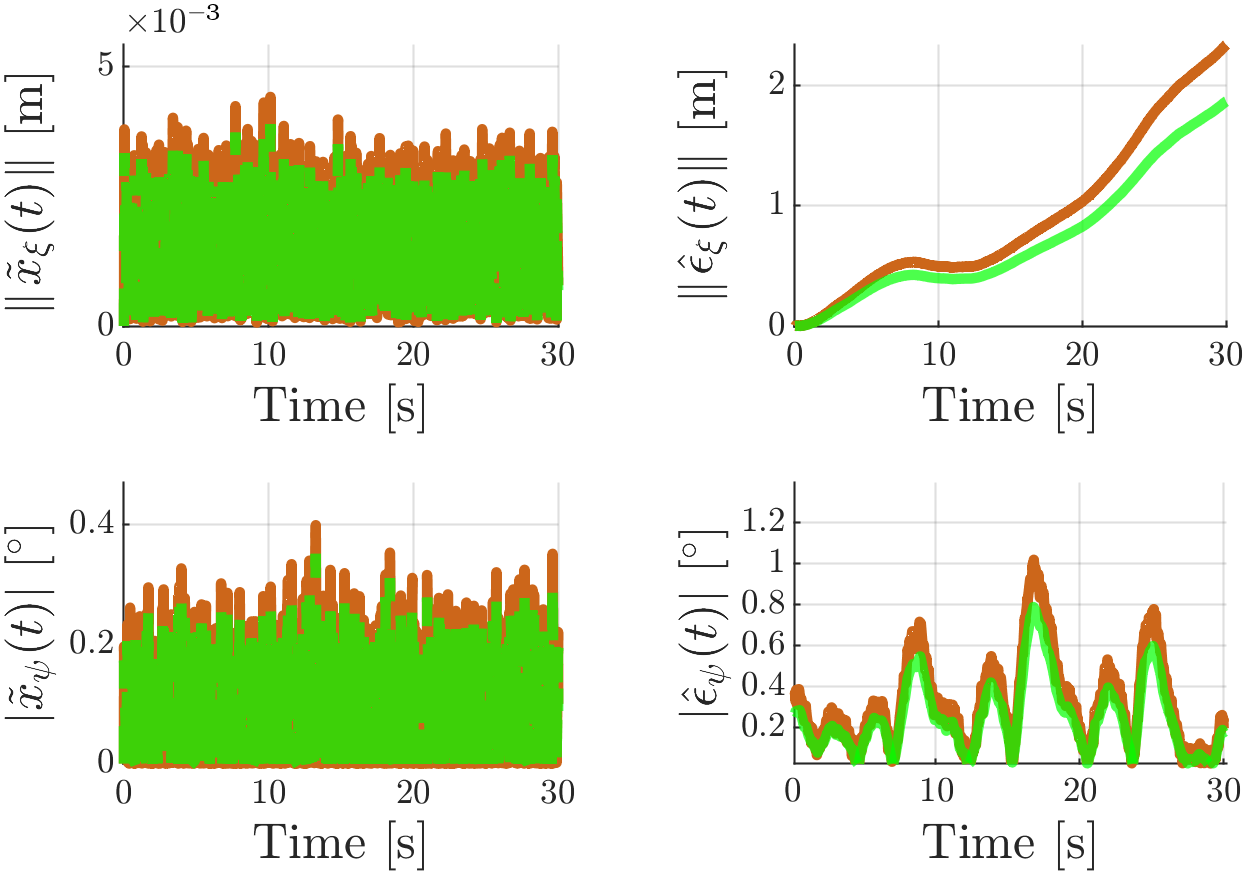}
\caption{Error comparison ($\mathrm{v}_{\xi}=3.27$m/s): state estimation error (left, $\|\tilde{\boldsymbol{x}}\|$) and trajectory tracking error (right, $\|\hat{\boldsymbol{\epsilon}}\|$).}
\label{fig:comp_est_1}
\end{center}
\end{figure} 
%
At this speed ($\approx 12\,\text{km/h}$), aerodynamic and cross-coupling effects are non-negligible but remain minor. This accounts for the bounded unmodeled dynamics, yielding acceptable tracking performance for both controllers.
To evaluate the underlying control allocation, Fig.~\ref{fig:comp_ctrl_1} illustrates the transient behavior of the four control inputs \eqref{eq:u_com}. As expected, the primary control authority is dedicated to the collective thrust command to counteract gravity ($\mathbb{E}[ u_z ] \approx$ mg). Conversely, the torque commands remain small, serving primarily to reorient the thrust vector via minor attitude corrections.
\begin{figure}[b]
\begin{center}
\includegraphics[width=0.475\textwidth]{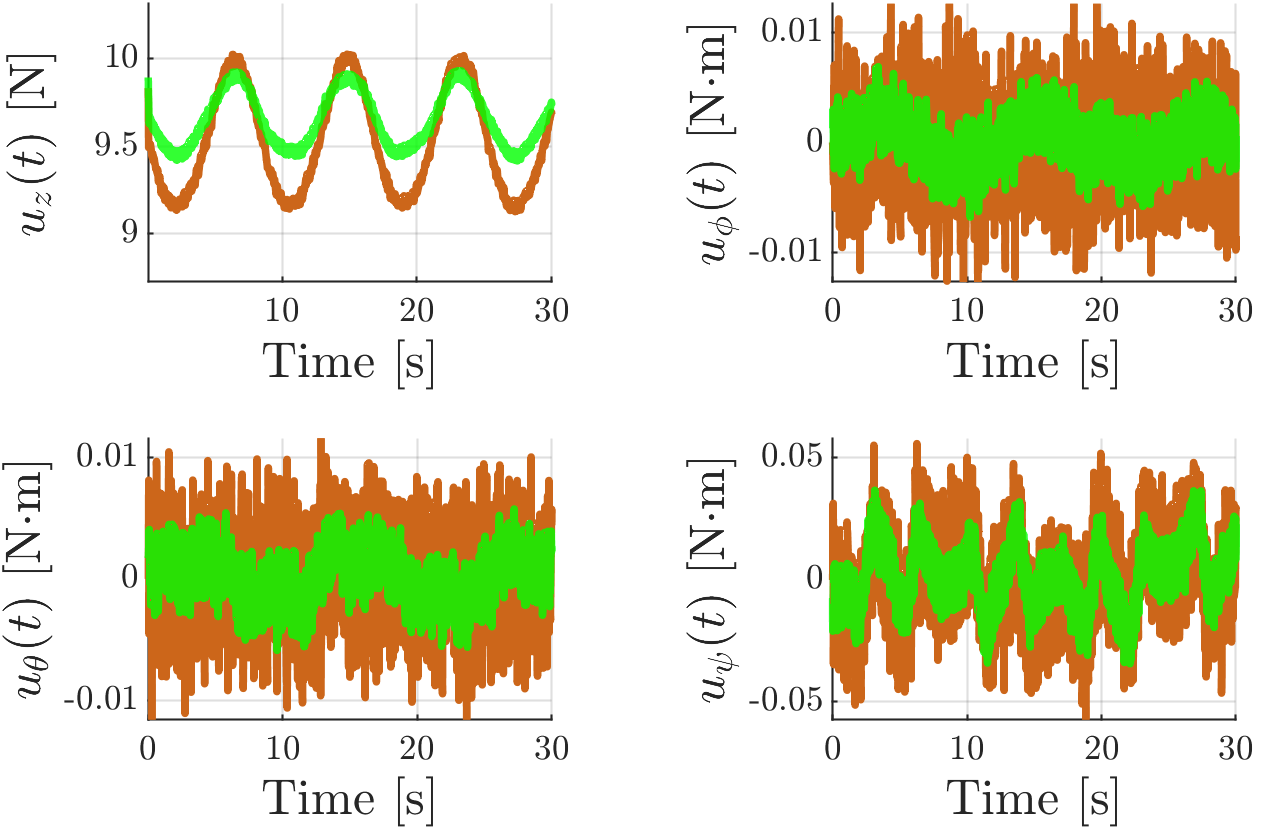}
\caption{Control comparison ($\mathrm{v}_{\xi}=3.27$m/s): The control inputs $\{ u_z, u_\phi, u_\theta, u_\psi \}$ mapped across the four figure quadrants.}
\label{fig:comp_ctrl_1}
\end{center}
\end{figure} 
\\
However, a fundamentally different behavior emerges at the peak velocity of $\mathrm{v}_{\xi} = 10.64\,\text{m/s}$ ($\approx 38\,\text{km/h}$). The demanding requirement to complete a full 3D orbit every $10\,\text{s}$ severely strains the closed-loop system, inducing rapidly growing tracking errors $\boldsymbol{\epsilon}$ as the vehicle struggles to follow these fast kinematic variations.
\begin{figure}[t]
\begin{center}
\includegraphics[width=0.45\textwidth]{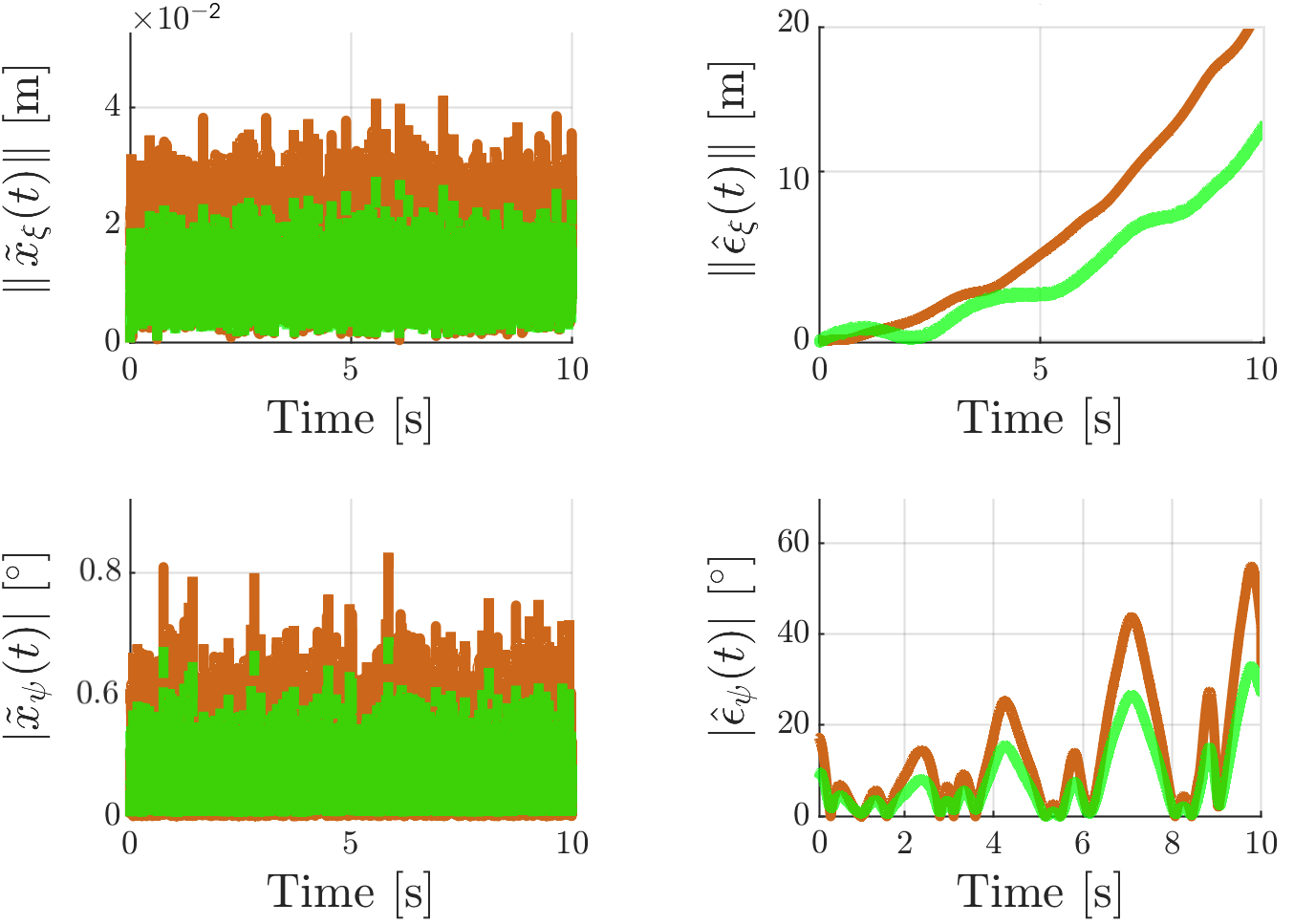}
\caption{Error comparison ($\mathrm{v}_{\xi}=10.64$m/s): state estimation error (left, $\|\tilde{\boldsymbol{x}}\|$) and trajectory tracking error (right, $\|\hat{\boldsymbol{\epsilon}}\|$).}
\label{fig:comp_est_2}
\end{center}
\end{figure} 
\\
Fig.~\ref{fig:comp_est_2} depicts this aggressive regime, highlighting the substantial error magnitudes that accumulate over time. While both systems experience degradation, a clear performance gap widens: relative to the baseline controller ($\boldsymbol{u}$, brown), the proposed EA framework ($\boldsymbol{u}_{\textit{EA}}$, green) mitigates unobservability to maintain significantly lower tracking and estimation errors.
Fig.~\ref{fig:comp_ctrl_2} illustrates how these tracking metrics manifest within the control inputs. Unlike the low-velocity scenario in Fig.~\ref{fig:comp_ctrl_1}, the collective thrust input $u_z(t)$ frequently fluctuates below $mg$, reflecting aggressive decelerations, while torque commands expand by an order of magnitude. Crucially, while the baseline controller $\boldsymbol{u}$ frequently saturates, $\boldsymbol{u}_{\textit{EA}}$ preserves a larger control margin at a lower total expenditure, a direct consequence of decoupling the estimation-induced feedback loops. 
%
\begin{figure}[h]
\begin{center}
\includegraphics[width=0.45\textwidth]{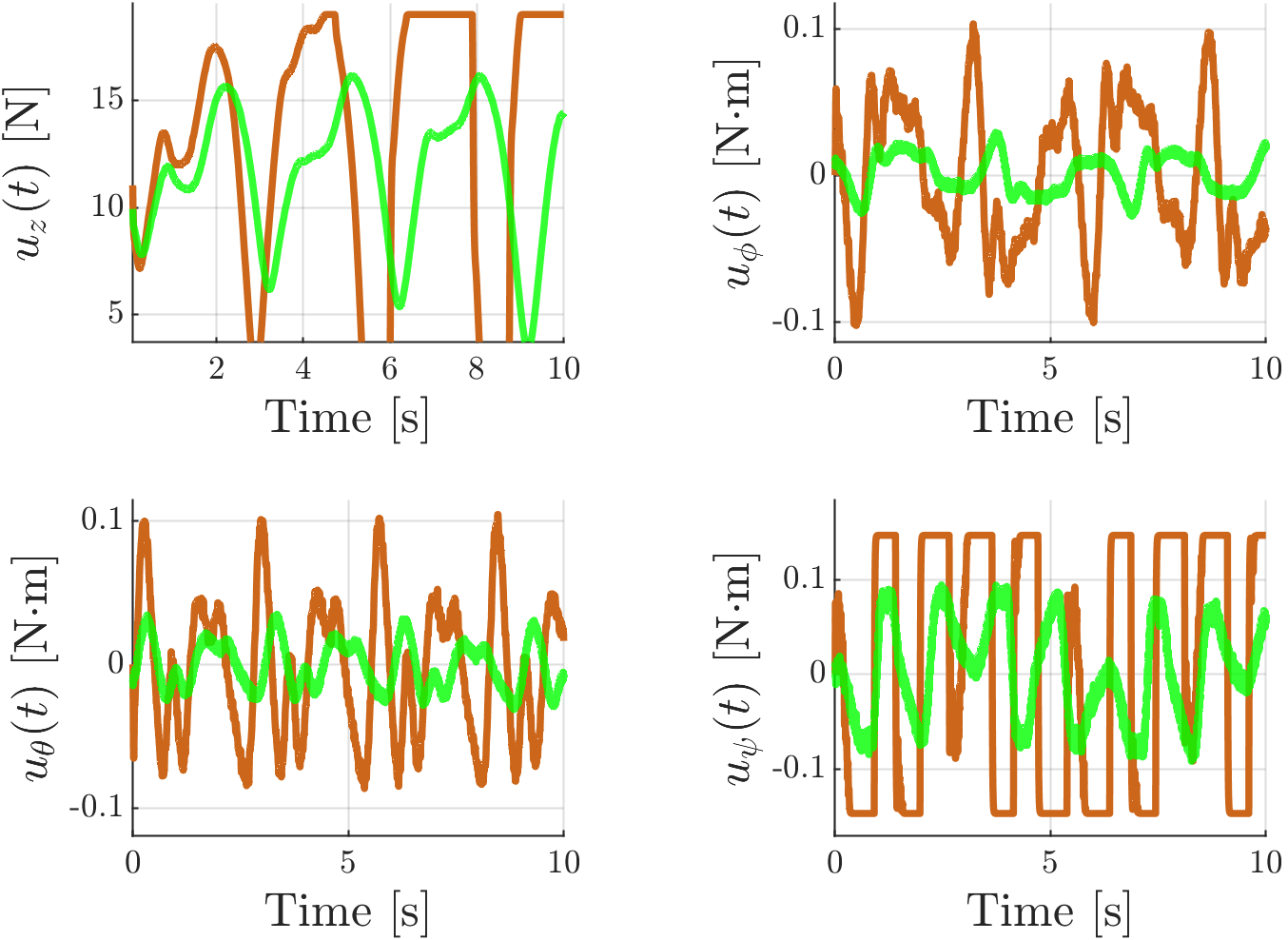}
\caption{Control comparison ($\mathrm{v}_{\xi}=10.64$m/s): The control inputs $\{ u_z, u_\phi, u_\theta, u_\psi \}$ mapped across the four figure quadrants.}
\label{fig:comp_ctrl_2}
\end{center}
\end{figure} 
\\
Table~\ref{t:err_1} quantifies these results across the tested $\mathrm{v}_{\xi}$ spectrum using the root mean square error (RMSE) over 10 cycles. At the lowest velocity ($1\,\text{m/s}$), performance is comparable between both methods, yielding no decisive distinction. However, as flight velocities increase, a clear performance gap emerges (yellow, \textbf{\%}), demonstrating consistent improvement rates in both position and orientation of up to 40\%.
\begin{table}[!t]
\centering
\scriptsize
\caption{RMSE comparison: Estimation ($\tilde{\boldsymbol{x}}$) and tracking ($\hat{\boldsymbol{\epsilon}}$) errors are evaluated across sequential $\mathrm{v}_{\xi}$, contrasting the baseline (brown, $\boldsymbol{u}$) against the proposed controller (green, $\boldsymbol{u}_{\textit{EA}}$). Relative improvement rates are highlighted below (yellow, \textbf{\%}).}
\renewcommand{\arraystretch}{1.65}
\begin{tabular}{|c||c|c||c|c|}
\hline 
Velocity ($\mathrm{v}_{\xi}$)\CC& $\| \tilde{\boldsymbol{x}}_{\boldsymbol{\xi}} \|$ [mm] \CC & $\| \tilde{\boldsymbol{x}}_{ \psi} \|$ [$^\circ$] \CC & $\| \tilde{\boldsymbol{\epsilon}}_{\boldsymbol{\xi}} \|$ [m] \CC & $\| \boldsymbol{\epsilon}_{\psi} \|$ [$^\circ$] \CC \\ \hline \hline  
\multirow{3}{*}{\shortstack{{1.0 m/s} \\ {(\textit{3.6 km/h})} }} &  1.92 \CG & 0.12 \CG & 0.12 \CG & 0.24 \CG \\ \cline{2-5}
 & 2.10 \CB & 0.11 \CB & 0.13 \CB & 0.23 \CB  \\ \cline{2-5}
 & \textbf{-9.37 \%} \CY & \textbf{4.17 \%} \CY & \textbf{-8.33 \%} \CY & \textbf{2.92 \%} \CY  \\ \hline \hline
\multirow{3}{*}{\shortstack{{3.27 m/s} \\ {(\textit{11.8 km/h})} }} & 3.86 \CG & 0.22 \CG& 1.12 \CG& 0.44 \CG \\ \cline{2-5}
 & 3.42 \CB & 0.19 \CB & 0.98 \CB & 0.40 \CB  \\ \cline{2-5}
 & \textbf{11.4 \%} \CY & \textbf{13.6 \%} \CY &  \textbf{12.5 \%} \CY & \textbf{9.09 \%} \CY  \\ \hline \hline 
\multirow{3}{*}{\shortstack{{4.91 m/s} \\ {(\textit{17.7 km/h})} }} & 6.45 \CG & 0.31 \CG & 2.54 \CG & 1.95 \CG \\ \cline{2-5}
 & 5.24 \CB & 0.26 \CB & 2.06 \CB & 1.62 \CB  \\ \cline{2-5}
 & \textbf{18.8 \%} \CY & \textbf{16.1 \%} \CY & \textbf{18.9 \%} \CY & \textbf{16.9 \%} \CY  \\ \hline \hline
\multirow{3}{*}{\shortstack{{7.82 m/s} \\ {(\textit{28.1 km/h})} }} & 16.2 \CG & 0.47 \CG & 6.88 \CG & 7.14 \CG \\ \cline{2-5}
 & 11.5 \CB & 0.37 \CB & 4.71 \CB & 4.82 \CB  \\ \cline{2-5}
 & \textbf{29.0 \%} \CY & \textbf{21.3 \%} \CY & \textbf{31.5 \%} \CY & \textbf{32.5 \%} \CY  \\ \hline \hline
\multirow{3}{*}{\shortstack{{10.64 m/s} \\ {(\textit{38.3 km/h})} }} & 31.1 \CG & 0.63 \CG& 12.3 \CG& 18.7 \CG \\ \cline{2-5}
 & 18.8 \CB & 0.48 \CB & 7.81 \CB & 11.2 \CB  \\ \cline{2-5}
  & \textbf{39.5 \%} \CY & \textbf{23.8 \%} \CY &  \textbf{36.5 \%} \CY & \textbf{40.1 \%} \CY  \\ \hline
\end{tabular} \label{t:err_1}
\end{table} 

\subsection{Awareness Analysis} 
To assess how estimation confidence maps to system awareness, we evaluate a demanding flight maneuver designed to challenge the controller-estimator coupling. As illustrated in Fig.~\ref{fig:EA_aware_1}, the vehicle departs from a steady-state hover at $t=2$~s via an abrupt lateral step velocity command, $v_{\mathrm{ref},y}^{\mathcal{I}}$, lasting for $2$~s ($2 \le t \le 4$~s, pale red), before returning to a stable, stationary hover.
\begin{figure}[h]
\begin{center}
\includegraphics[width=0.48\textwidth]{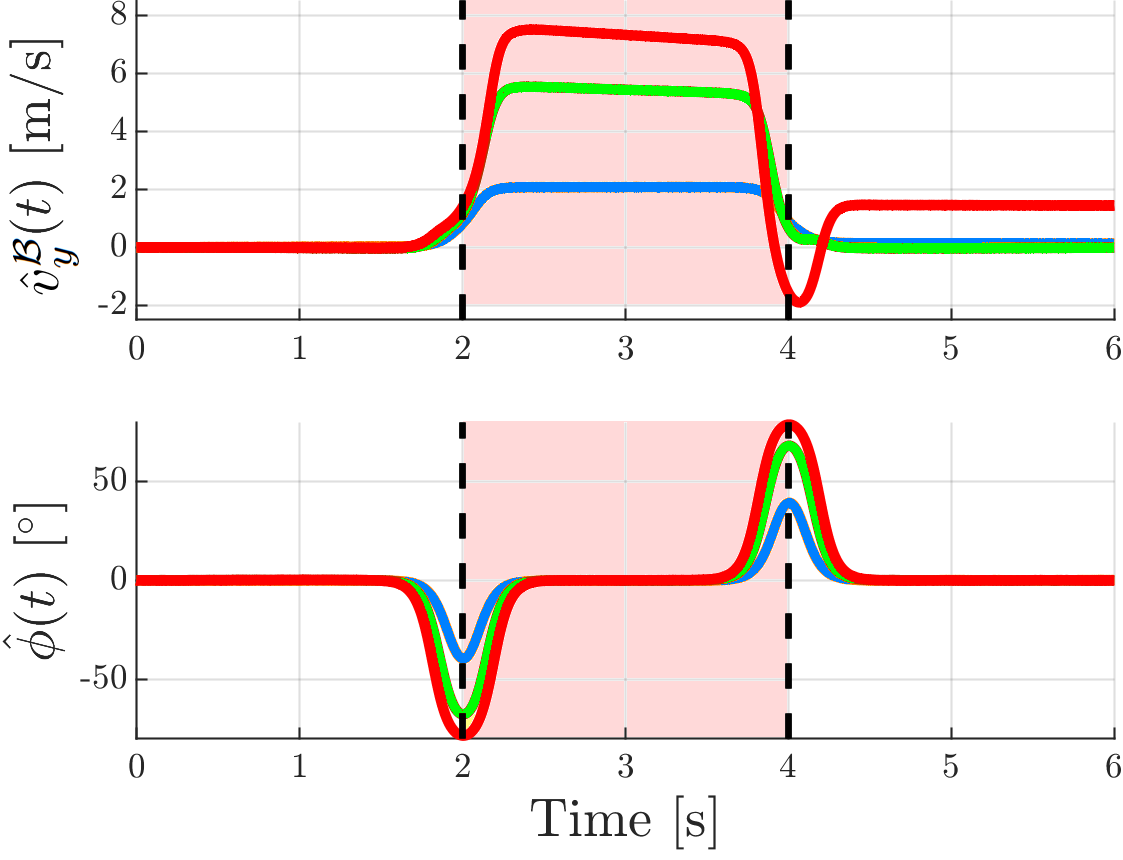}
\caption{Time evolution of lateral velocity (top) and roll angle (bottom) under pulsed 2~s step input ($v_{\mathrm{ref},y}^{\mathcal{I}} \in \{2, 5, 8\}$ m/s, red).}
\label{fig:EA_aware_1}
\end{center}
\end{figure} 
\vspace{-3mm}
\\
Iterating this scenario across $v_{\mathrm{ref},y}^{\mathcal{I}} \in \{2, 5, 8\}$~m/s, in blue, green, and red, respectively, shows that the lateral velocity (top) and roll angle (bottom) amplitudes scale proportionally with the command magnitude. Crucially, this translation-attitude coupling induces a non-minimum phase undershoot inherent to underactuated aircraft.
Next, to observe how the EA mechanism \eqref{eq:EA_act} responds to these dynamics, the error covariance determinant, $\det(\overline{\boldsymbol{\Sigma}})$, is plotted in Fig.~\ref{fig:EA_2} to map its structural expansions over time.
As expected, uncertainty remains near zero during initial hover. Upon applying the step command, $\det(\overline{\boldsymbol{\Sigma}})$ spikes sharply in proportion to $v_y$ before returning to baseline once the transient maneuver subsides ($t > 4$~s). These results align with the EA triggering mechanism: during mild dynamics, the awareness gate matrix $\boldsymbol{\Delta}(\boldsymbol{\eta})$ vanishes as $\overline{\boldsymbol{\Sigma}} \to \mathbf{I}$, recovering nominal INDI control. Conversely, during intense maneuvers, the off-diagonal elements of $(\mathbf{I} - \overline{\boldsymbol{\Sigma}})$ flare up, dynamically dampening estimation-induced closed-loop distortions.
\begin{figure}[h]
\begin{center}
\includegraphics[width=0.48\textwidth]{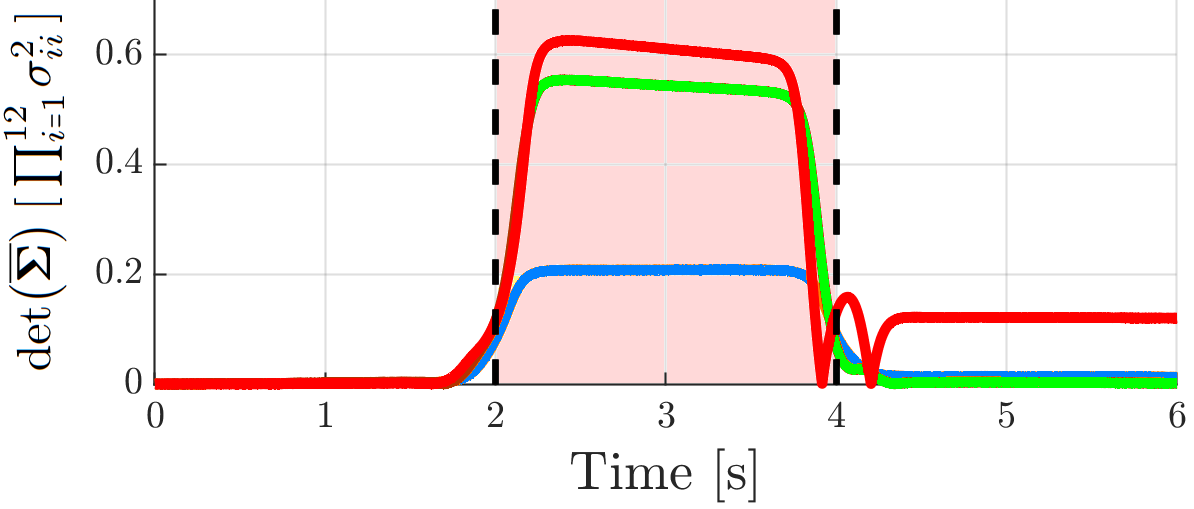}
\caption{Time evolution of $\det(\overline{\boldsymbol{\Sigma}})$ showcasing uncertainty expansions under pulsed 2~s step input ($v_{\mathrm{ref},y}^{\mathcal{I}} \in \{2, 5, 8\}$ m/s, red).}
\label{fig:EA_2}
\end{center}
\end{figure} 
\vspace{-5mm}

\subsection{Frequency-Domain Analysis}
To complement the previous evaluation, this section extends the comparison between $\boldsymbol{u}$ and $\boldsymbol{u}_{\textit{EA}}$ into the frequency domain across the entire forward velocity spectrum $v_x$. To apply the Laplace transform ($s=\sigma + j \omega$), a coordinated turn is selected as the trim condition; this operating point actively subjects the vehicle to severe rigid-body cross-couplings and aerodynamic effects while remaining linearizable within the flight envelope. To that end, the closed-loop (CL) system state-space matrices were swept, and their respective poles, singular values, and principal phase angles were extracted across a forward velocity range of $v_x \in [2, 16]$ m/s.
\begin{figure}[b]
\begin{center}
\includegraphics[width=0.49\textwidth]{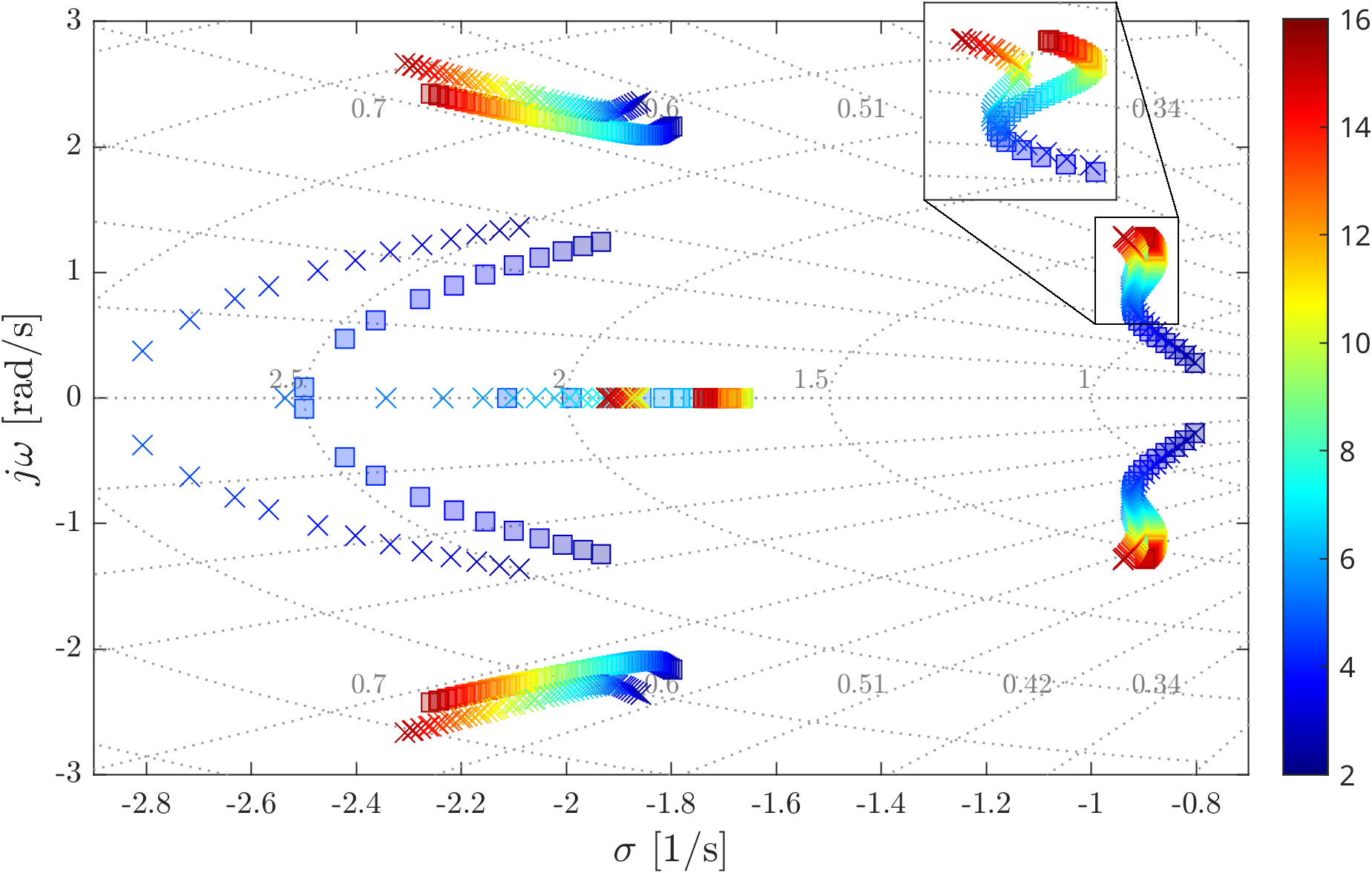}
\caption{Parametric root locus: CL pole migration of the baseline controller ($\boldsymbol{u}$, $\square$) with the proposed EA law ($\boldsymbol{u}_{\textit{EA}}$, $\boldsymbol{\times}$) across a forward velocity range of $v_x \in [2, 16]$ m/s.}
\label{fig:root_locus}
\end{center}
\end{figure} 
%
%
Fig.~\ref{fig:root_locus} illustrates the resulting pole migration in response to a $v_x$ sweep (colorbar), highlighting how the loci for the baseline ($\boldsymbol{u}$, $\square$) and the EA law ($\boldsymbol{u}_{\textit{EA}}$, $\boldsymbol{\times}$) traverse the complex $s$-plane.
\\
To reduce clutter and streamline interpretation, the complex plane is bounded to low-to-medium frequencies, where the primary aircraft dynamic modes reside and dictate short-term stability. On the right side, the dominant low-frequency poles ($\sigma \approx 0.9$ 1/s) capture a heavily coupled lateral-directional mode, characteristic of the Dutch roll dynamics modified by the tight coordinated turn tracking requirements. Conversely, moving leftward, the higher-damping region ($\sigma \lesssim -1.8$ 1/s) accommodates two distinct dynamic structures: a fast, oscillatory short-period attitude mode ($\omega \neq 0$) driven by the high-gain feedback control law, and highly damped, non-oscillatory modes ($\omega = 0$) tracking along the real axis, which represent the heavily suppressed spiral and roll subsidence behaviors.
Despite the superficially similar trajectories across the velocity sweep, the EA control law offers two critical advancements. First, $\boldsymbol{u}_{\textit{EA}}$ provides superior transient damping; as $v_x$ increases, its medium-frequency branches push significantly deeper into the left-half plane, achieving a peak attenuation rate of $\sigma_{\textit{EA}} \approx -2.8$ 1/s compared to the baseline's limit of $\sigma \approx -2.4$ 1/s. This guarantees faster suppression of transient attitude oscillations. Second, $\boldsymbol{u}_{\textit{EA}}$ enhances high-velocity modal robustness, maintaining a more stable CL damping profile ($\zeta_{\textit{EA}} > \zeta$). 
\begin{figure}[b]
\begin{center}
\includegraphics[width=0.445\textwidth]{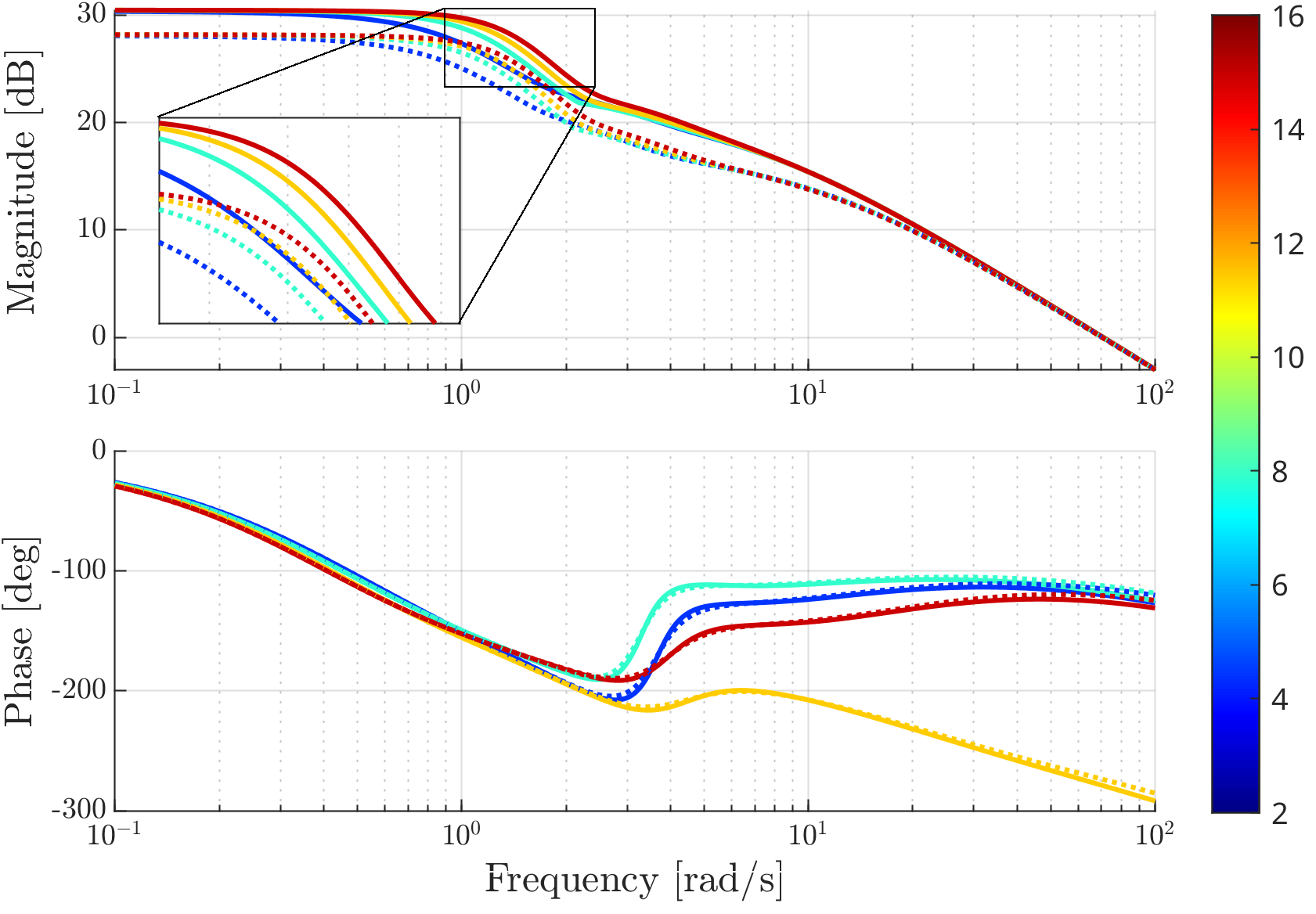}
\caption{Parametric Bode plot: CL frequency response comparison of the baseline controller ($\boldsymbol{u}$, dashed) and the proposed EA law ($\boldsymbol{u}_{\textit{EA}}$, solid) evaluated at four distinct forward velocities.}
\label{fig:bode_plot}
\end{center}
\end{figure} 
\\
While both controllers track closely at low speeds (blue markers), at the upper envelope bound of $16$ m/s (red markers), the EA poles remain bounded lower on the imaginary axis, effectively preventing the aircraft's coupled dynamics from becoming excessively oscillatory under high dynamic pressures.
Lastly, the corresponding CL frequency response is analyzed to evaluate tracking performance and robust stability. Fig.~\ref{fig:bode_plot} illustrates the multivariable Bode diagram, plotting the maximum singular value magnitude (top) and principal phase angles (bottom) across the frequency spectrum. 
\begin{table*}[t]
\centering
\scriptsize
\caption{Frequency-domain stability and tracking metrics of the closed-loop system: comparison between the baseline (brown, $\boldsymbol{u}$) and the proposed EA law (green, $\boldsymbol{u}_{\textit{EA}}$) controllers, with corresponding relative improvement rates highlighted below (yellow, \textbf{\%}).}
\renewcommand{\arraystretch}{1.7}
\begin{tabular}{|c||c|c|c||c|c||c|c|}
\hline 
\makecell{Forward\\velocity ($v_x$)} \CC & \makecell{DC Gain\\ $[$dB$]$} \CC & \makecell{Cutoff Freq. \\$\omega_c$ [rad/s]} \CC & \makecell{Time Constant \\ $\tau$ [s]}\CC & \makecell{Gain Crossover\\ ($\omega_{\text{gc}}$) [rad/s]}\CC & \makecell{Phase\\Margin [$^\circ$]} \CC & \makecell{Phase Crossover\\ ($\omega_{\text{pc}}$) [rad/s]} \CC & \makecell{Gain\\Margin [dB]} \CC \\ \hline \hline  
\multirow{3}{*}{\shortstack{{2 m/s} \\ {(\textit{7.2 km/h})} }} & 28.12 \CG & 1.62 \CG & 0.617 \CG & 46.20 \CG & 54.2 \CG & 82.5 \CG & 7.8 \CG \\ \cline{2-8}
 & 28.15 \CB & 1.65 \CB & 0.606 \CB & 46.55 \CB & 54.8 \CB & 83.1 \CB & 8.0 \CB  \\ \cline{2-8}
 & \textbf{0.11 \%} \CY & \textbf{1.85 \%} \CY &  \textbf{1.78 \%} \CY & \textbf{0.76 \%} \CY  &  \textbf{1.11 \%} \CY & \textbf{0.73 \%} \CY & \textbf{2.56 \%} \CY  \\ \hline \hline 
\multirow{3}{*}{\shortstack{{6.67 m/s} \\ {(\textit{24.0 km/h})} }} & 28.54 \CG & 1.84 \CG & 0.543 \CG & 47.15 \CG & 51.5 \CG & 80.4 \CG & 7.1 \CG \\ \cline{2-8}
 & 28.90 \CB & 2.12 \CB & 0.472 \CB & 48.30 \CB & 53.9 \CB & 82.8 \CB & 7.9 \CB  \\ \cline{2-8}
 & \textbf{1.26 \%} \CY & \textbf{15.22 \%} \CY &  \textbf{13.08 \%} \CY & \textbf{2.44 \%} \CY  &  \textbf{4.66 \%} \CY & \textbf{2.99 \%} \CY & \textbf{11.27 \%} \CY  \\ \hline \hline 
\multirow{3}{*}{\shortstack{{11.3 m/s} \\ {(\textit{40.8 km/h})} }} & 29.21 \CG & 2.11 \CG & 0.474 \CG & 48.92 \CG & 46.3 \CG & 76.2 \CG & 6.2 \CG \\ \cline{2-8}
 & 29.85 \CB & 2.75 \CB & 0.364 \CB & 50.45 \CB & 52.8 \CB & 81.9 \CB & 7.8 \CB  \\ \cline{2-8}
  & \textbf{2.19 \%} \CY & \textbf{30.33 \%} \CY &  \textbf{23.21 \%} \CY & \textbf{3.13 \%} \CY  &  \textbf{14.04 \%} \CY & \textbf{7.48 \%} \CY & \textbf{25.81 \%} \CY  \\ \hline \hline 
\multirow{3}{*}{\shortstack{{16 m/s} \\ {(\textit{57.6 km/h})} }} & 29.45 \CG & 2.32 \CG & 0.431 \CG & 50.12 \CG & 38.5 \CG & 68.4 \CG & 4.9 \CG \\ \cline{2-8}
 & 30.12 \CB & 3.24 \CB & 0.309 \CB & 52.35 \CB & 51.2 \CB & 80.5 \CB & 7.6 \CB  \\ \cline{2-8}
  & \textbf{2.28 \%} \CY & \textbf{39.66 \%} \CY &  \textbf{28.31 \%} \CY & \textbf{4.45 \%} \CY  &  \textbf{32.99 \%} \CY & \textbf{17.69 \%} \CY & \textbf{55.10 \%} \CY  \\ \hline
\end{tabular} \label{t:err_2}
\end{table*} 
%
While the differences between the control laws appear minor from a macroscopic view, they prove highly consequential for the tracking mission. By mitigating severe cross-couplings and enforcing higher low-frequency transmission gains, the EA law ($\boldsymbol{u}_{\textit{EA}}$, solid lines) maintains higher amplification ($\approx 30\text{ dB}$) and pushes the CL tracking bandwidth upward. 
\\
This higher bandwidth directly translates to sharper reference tracking and accelerated disturbance rejection across the velocity spectrum. 
Concurrently, the favorable phase profile maintained by the EA law near the crossover region increases the CL modal damping ratio. This directly mirrors the root locus findings, ensuring a marked reduction in transient oscillations and rendering the vehicle significantly less sensitive to sensor and control latencies—phenomena that typically exacerbate during high-velocity maneuvers, making this robust behavior crucial for the operating conditions at hand.
Table~\ref{t:err_2} summarizes the key frequency-domain metrics extracted from this chapter's analysis. As observed, the proposed EA law consistently improves upon the baseline performance across the entire flight envelope; while performance is nearly identical at low velocities, the EA framework demonstrates a progressive resilience to high dynamic pressures, culminating in substantial enhancements of up to $55.10\%$ in stability margins and nearly $40\%$ in tracking bandwidth at the $16\text{ m/s}$ envelope bound.

\subsection{Discussion}
Numerical trends confirm that low-velocity regimes minimally activate the adaptation variable $\eta$ \eqref{eq:EA_act} because estimation errors $\tilde{\boldsymbol{x}}$ remain small. As $\mathrm{v}_{\xi}$ scales upward, performance margins increase consistently, validating the mechanism's ability to mitigate unobservability under high dynamic strain. As discussed along the paper, this dynamic aggravation stems from severe state-dependent cross-couplings, driven by Coriolis effects, aerodynamic loads, and higher-order terms, alongside cascaded underactuated dynamics. Ultimately, while the EA framework successfully decouples these estimation-induced feedback loops as nonlinearities exacerbate, three inherent boundaries persist to impose uncompensable error floors:
\\
\textbf{1. Update Rate Bound}: Estimator performance relies heavily on the frequency of correction updates. Because 10~Hz indoor positioning data arrives less frequently than inertial measurements, the estimator must propagate states under significant process noise between samples. As illustrated in Fig.~\ref{fig:update_ratio}, position error drifts within these uncorrected intervals (pale red regions, indicating dead reckoning) between sparse update windows (green regions). Lower update rates exacerbate this estimation lag; these delayed corrections can induce actuator saturation, degrading the high-bandwidth control loop required to preserve stability.
\vspace{-8mm}
\\
\begin{figure}[!h]
\begin{center}
\includegraphics[width=0.46\textwidth]{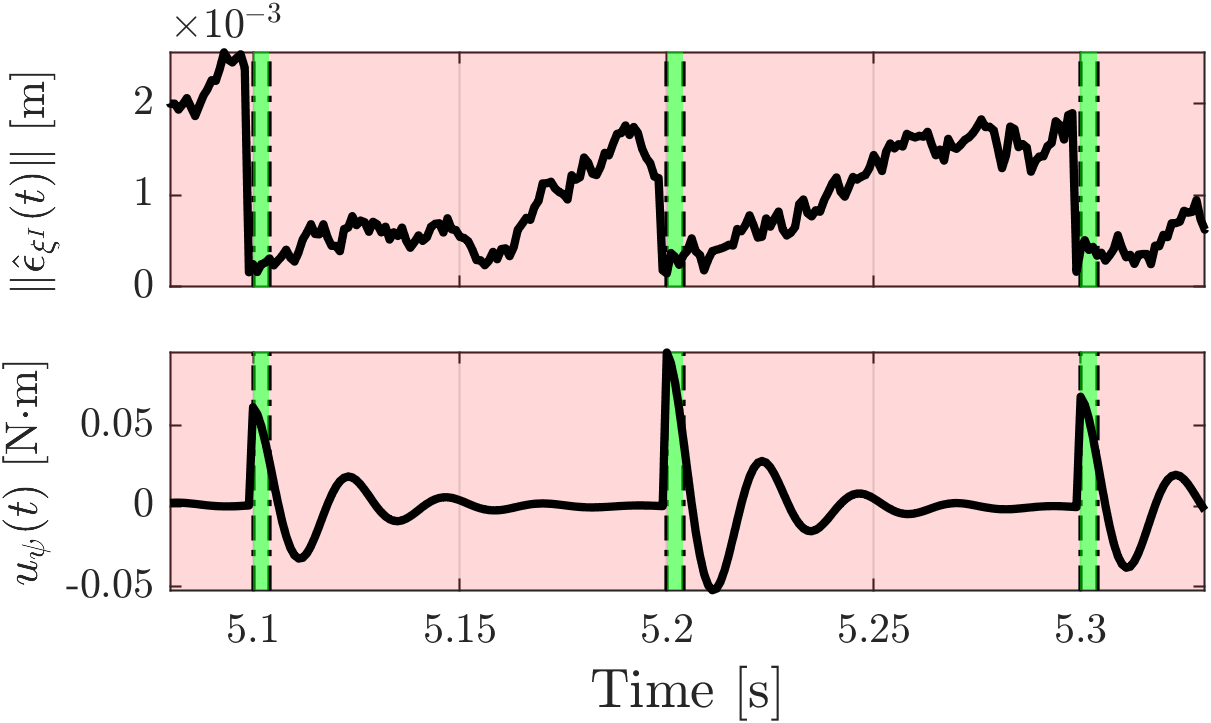}
\caption{Transient behavior under intermittent updates: (Top) Position error $\Vert \hat{\epsilon}_{\xi^I}(t) \Vert$ drifts during dead reckoning (red) and recovers at measurement arrival (green). (Bottom) Yaw control $u_{\psi}(t)$, highlighting aggressive corrections following update gaps.}
\label{fig:update_ratio}
\end{center}
\end{figure}
\vspace{-5mm}
\\
\textbf{2. Structural Bound}: Aggressive tracking risks structural damage, as shown in Fig.~\ref{fig:load} where the operational load factor increases with centripetal acceleration. While commercial platforms operate within low-load nominal regimes ($\tilde{n} \lesssim 1.5$, green), tactical research platforms require expanded control authority to ensure precision ($\tilde{n} \lesssim 2.3$, yellow). Higher allocations (pink) are reserved for agile racing configurations that routinely push airframe stress toward absolute mechanical limits ($\tilde{n} > 5.0$, red) \cite{faessler2017differential, Pfeiffer2021response, kendoul2012survey}.
\begin{figure}[h]
\begin{center}
\includegraphics[width=0.485\textwidth]{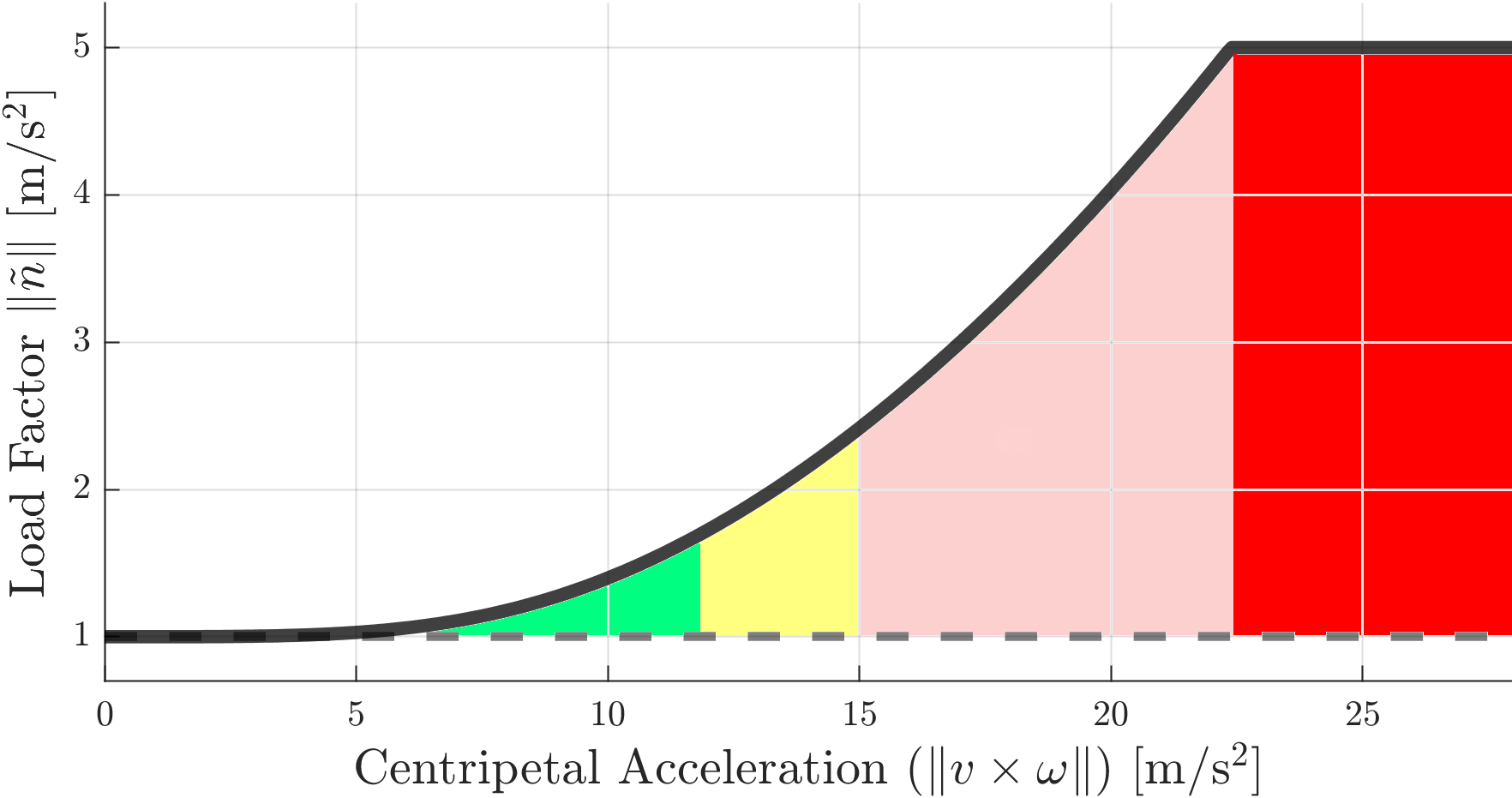}
\caption{Load factor as a function of centripetal acceleration. While commercial and tactical platforms are restricted to a conservative load envelope ($\tilde{n} \lesssim 2.3$), modern racing drones subject the airframe to significantly higher structural loads.}
\label{fig:load}
\end{center}
\end{figure} 
%
\\
\textbf{3. Sensor Bound}: High-rate sampling noise and inertial biases inherently contaminate state estimation. As formalized in Corollary~\ref{Cor:lower_bound}, the minimum tracking error $\tilde{\boldsymbol{x}}$ is fundamentally constrained by the CRLB. Because no unbiased estimator variance can bypass this boundary, the sensor noise floor dictates a hard limit on control precision \cite{kay1993fundamentals, bendat2011random}.
\\
As a result, these boundaries reveal that the tracking error is inherently constrained by statistical, structural, and physiological limits. Beyond optimal control synthesis, further performance gains can therefore only be unlocked by upgrading the physical sensing hardware or advancing the core estimation architecture.

\section{Conclusion} \label{sec:conc}
In this paper, we formalized the structural estimation-control coupling intrinsic to underactuated nonlinear systems. While high-velocity tracking pushes aerial platforms to their operational limits, aggressive maneuvering exacerbates underactuated dynamics, heightening sensitivity to state drift and sensor degradation. To address this, we leveraged Lyapunov analysis to characterize the closed-loop stability bounds under these coupled dynamics. We then introduced an EA control framework designed to systematically isolate estimation-induced feedback loops. 
\\
Extensive quadrotor flight experiments along complex 3D trajectories at speeds up to $57.6\,\text{km/h}$ validate that the proposed architecture preserves closed-loop stability where conventional, separation-principle-based designs diverge. Quantitative frequency- and time-domain evaluations confirm that the EA control law expands tracking bandwidth by 39\% and improves stability margins by up to 55\%, ensuring robust performance across the entire velocity spectrum. As autonomous systems transition toward full operational independence under severe sensor constraints, the proposed EA framework offers a mathematically rigorous paradigm for agile, safe flight control in unpredictable, high-rate environments.


\section*{Appendices}
\renewcommand{\thesubsection}{\Alph{subsection}}

\subsection{Actuation Dynamics and Aerodynamics} \label{appendix:coeffs}
The control input vector $\boldsymbol{u}$ in \eqref{eq:u_com} maps rotor speeds $\Omega_i$ to physical forces and moments. Neglecting actuator dynamics renders this allocation purely algebraic.
\\
\textit{Thrust Allocation:} Assuming fixed-pitch rotors with spin axes parallel to the body $z$-axis, the individual thrust $T_i$ produced by the $i$-th rotor is modeled as
\begin{align} \label{eq:thrust_single}
T_{i} = k_T \Omega_i^2 \, ,
\end{align}
where $k_T>0$ is the thrust coefficient. The scalar collective thrust command $u_{z}^\mathcal{B}$ maps directly to the total physical thrust vector $\boldsymbol{T}^\mathcal{B}$ according to
\begin{align}
u_z^\mathcal{B} \mapsto \boldsymbol{T}^\mathcal{B} \triangleq \bigg( k_T \sum_{i=1}^4 \Omega_{i}^2 \bigg) \mathbf{e}_z^\mathcal{B}  \, .
\end{align}
\textit{Attitude Allocation:} Control moments are generated by differential thrust pairings and reactive rotor drag. For an $\times$-frame quadrotor configuration with an arm length of $\ell_\times$ measured from the center of mass to each rotor axis, the roll and pitch moments are defined by
\begin{align}
u_{\tau_{x}}^\mathcal{B} & \mapsto \tau_{x}^\mathcal{B} \triangleq \frac{\ell_\times}{\sqrt{2}} \bigg( \underbrace{(T_{3} + T_{4})}_{\mathrm{Left}} - (\underbrace{T_{1} + T_{2}}_{\mathrm{Right}}) \bigg) \, , 
\\
u_{\tau_{y}}^\mathcal{B} & \mapsto \tau_{y}^{\mathcal{B}} \triangleq \frac{ \ell_\times }{\sqrt{2}} \bigg( (\underbrace{T_{1} + T_{4}}_{\mathrm{Front}}) - (\underbrace{T_{2} + T_{3}}_{\mathrm{Rear}}) \bigg) \, .
\end{align}
Similarly, the virtual yaw command maps identically to the reactive aerodynamic moments generated by counter-rotating adjacent propellers
\begin{align} \label{eq:drag_torque}
u_{\tau_{z}}^\mathcal{B} \mapsto \tau_{z}^{\mathcal{B}} \triangleq \sum_{i=1}^4 (-1)^i k_M \Omega_i^2 \, ,
\end{align}
where $k_M$ is the torque coefficient and the alternating sign encodes opposite rotor spin directions.
\\
\textit{Parasitic Translational Drag}: At high operational velocities, parasitic and induced aerodynamics significantly alter the nominal rigid-body dynamics
\begin{align} \label{eq:aero_thrust}
\boldsymbol{f}_{\mathrm{aero}}^\mathcal{B} \triangleq - \frac{1}{2}\rho C_D A
\|\boldsymbol{v}^\mathcal{B}\|
\boldsymbol{v}^\mathcal{B},
\end{align}
where $\rho$ is air density, $C_D$ the drag coefficient, and $A$ the effective cross-sectional area.
\\
\textit{Induced Drag:} High tilt angles and high-speed forward flight modify the clean airflow pattern over the rotor disks. A first-order correction to the nominal individual thrust model ($T_i$) is approximated by
\begin{align}
\widehat{T}_{i} = k_T \Omega_i^2 - k_V  V_{\mathrm{in}} \Omega_i \, ,
\end{align}
where $V_{\mathrm{in}}$ denotes the induced velocity and $k_V$ is an empirical inflow coefficient; the corrected collective thrust vector $\widehat{\boldsymbol{T}}$ results from summation along $\mathbf{e}_z^\mathcal{B}$.
\\
\textit{Aerodynamic Moments:} High-rate angular maneuvers generate damping torques that can be approximated as
\begin{align} \label{eq:aero_moment}
\boldsymbol{\tau}_{\mathrm{aero}}^\mathcal{B} = - \boldsymbol{D}_\omega \boldsymbol{\omega}^\mathcal{B} \, ,
\end{align}
with $\boldsymbol{D}_\omega \succ 0$, thereby altering rotational bandwidth during aggressive flight. Aggregating these control inputs against the modeled aerodynamic perturbations establishes the complete, unified state-space representation defined in \eqref{eq:x_dot_big}.
\\
\textit{Load Factor:} According to Newton's second law, the total net force acting on the vehicle is given by
\begin{align}
\sum \boldsymbol{F} = \boldsymbol{T}^\mathcal{B} + \boldsymbol{f}_{ \text{aero} }^\mathcal{B} + m \textbf{g}^\mathcal{B} = m \textbf{a}^\mathcal{B} \, ,
\end{align}
where $\textbf{a}^\mathcal{B}$ is the inertial acceleration expressed in the body frame. To quantify structural stress, the load factor vector $\overline{\boldsymbol{n}}^\mathcal{B}$ is defined by isolating and normalizing the non-gravitational propulsive and aerodynamic forces as
\begin{align} \label{eq:load_factor}
\overline{\boldsymbol{n}} = \frac{ \boldsymbol{T}^\mathcal{B} + \boldsymbol{f}_{\text{aero}}^\mathcal{B} }{m \text{g}} = \frac{ \textbf{a}^\mathcal{B} - \textbf{g}}{\text{g}}^\mathcal{B} .
\end{align}
Static rest ($\textbf{a}^\mathcal{B} = \mathbf{0}$) yields a unit magnitude $\|\bar{\boldsymbol{n}}\| = 1$ balancing gravity, whereas free fall ($\textbf{a}^\mathcal{B} = \textbf{g}^\mathcal{B}$) results in a weightless condition ($\bar{\boldsymbol{n}} = \mathbf{0}$).

\subsection{System Parameters} \label{appendix:sys}
Based on precise system identification conducted in our previous work \cite{engelsman2026czupt}, Table~\ref{t:params} lists all parameters employed in this study, organized into four categories: rigid-body mass properties, aerodynamic coefficients, and Kalman filter and noise statistics.
\begin{table}[h]
\footnotesize
\renewcommand{\arraystretch}{1.2}
\centering
\caption{Quadrotor parameter nomenclature and units.}
\renewcommand{\arraystretch}{1.33}
\begin{tabular}{|c|l|c|c|}
\specialrule{1.1pt}{1pt}{1pt} 
\textbf{Par.}\CC& \textbf{Physical property}\CC& \textbf{Value}\CC& \textbf{Units}\CC\\ \specialrule{1.1pt}{1pt}{1pt}
$J_{xx}$ & Principal Roll MoI & 0.0159 & kg$\cdot$m$^2$ \\ \hline
$J_{yy}$ & Principal Pitch MoI & 0.0140 & kg$\cdot$m$^2$ \\ \hline
$J_{zz}$ & Principal Yaw MoI & 0.0279 & kg$\cdot$m$^2$ \\ \hline
$l$ & Moment arm & 0.15 & m \\ \hline
$m$ & Mass & 0.9689 & kg \\ \hline \hline
$A_r$ & Rotor disk area & 0.0491 & m$^2$ \\ \hline
$C_M$ & Torque coefficient & 5.39$\times$10$^{-4}$ & - \\ \hline
$C_T$ & Thrust coefficient & 6.38$\times$10$^{-3}$ & - \\ \hline
$k_M$ & Torque constant & 6.33$\times$10$^{-8}$ & kg$\cdot$m$^2$/rad$^2$ \\ \hline
$k_T$ & Thrust constant & 6.01$\times$10$^{-6}$ & kg$\cdot$m/rad$^2$ \\ \hline
$\rho$ & Air density & 1.225 & kg/m$^3$ \\ \hline
$r$ & Blade radius & 0.125 & m \\ \hline \hline 
$\Delta t$ & Solution interval & 0.001 & s \\ \hline
$\sigma_{\Theta}$ & Orientation noise & 0.001 & rad \\ \hline
$\sigma_{f}$ & Accel. noise density & 0.002 & m/s$^2$/$\sqrt{\text{Hz}}$ \\ \hline
$\sigma_{_{\mathrm{pos.}}}$ & Positioning precision & $\approx$ 1.5 & m \\ \hline
$\sigma_{\mathrm{gyro}}$ & Gyro noise density & 0.001 & rad/s/$\sqrt{\text{Hz}}$ \\ \hline
$\sigma_{\omega}$ & Body rate noise & 0.001 & rad/s \\
\specialrule{1.1pt}{1pt}{1pt}
\end{tabular} \label{t:params}
\end{table}

\subsection{Radial Projection of the Error Dynamics} \label{appendix:proj}
To characterize the geometric convergence properties discussed in Theorem~\ref{thm:thickness}, we invoke the simplifying assumption $\boldsymbol{P} \approx \boldsymbol{I}$, whereby the Lyapunov-like potential function \eqref{eq:V_Lyp} reduces to the standard quadratic form $V(\hat{\boldsymbol{\epsilon}}) = \frac{1}{2}\| {\hat{\boldsymbol{\epsilon}}} \|^2$. The corresponding gradient flow is thus
\begin{align}
\dot{V} \big|_{\boldsymbol{P}\approx \boldsymbol{I}} = \nabla V^\top \dot{\hat{\boldsymbol{\epsilon}}} = \hat{\boldsymbol{\epsilon}}^\top \dot{\hat{\boldsymbol{\epsilon}}} \, .
\end{align}
Substituting the dissipative $\lambda$--$\mu$ inequality \eqref{eq:V_dot_neg} yields
\begin{align}
\hat{\boldsymbol{\epsilon}}^\top \dot{\hat{\boldsymbol{\epsilon}}} \le - \lambda \| \hat{\boldsymbol{\epsilon}} \|^2 + \mu \| \tilde{\boldsymbol{x}} \| ^2 + d \, .
\end{align}
Factoring the right-hand side using the invariant radius $\gamma$ from \eqref{eq:upper_bound}, we obtain the scalar relation
\begin{align}
\hat{\boldsymbol{\epsilon}}^\top \dot{\hat{\boldsymbol{\epsilon}}} \le - \lambda \| \hat{\boldsymbol{\epsilon}} \|^2 \left( 1 - \frac{\gamma^2}{ \| \hat{\boldsymbol{\epsilon}} \|^2 } \right) \, .
\end{align}
To recover the vector-valued dynamics, we decompose the velocity $\dot{\hat{\boldsymbol{\epsilon}}}$ into radial ($\dot{\hat{\boldsymbol{\epsilon}}}_{\parallel}$) and tangential ($\dot{\hat{\boldsymbol{\epsilon}}}_{\perp}$) components relative to the error vector $\hat{\boldsymbol{\epsilon}}$, such that
\begin{align}
\dot{\hat{\boldsymbol{\epsilon}}} = \dot{\hat{\boldsymbol{\epsilon}}}_{\parallel} + \dot{\hat{\boldsymbol{\epsilon}}}_{\perp}, \quad \text{where} \quad \hat{\boldsymbol{\epsilon}}^\top \dot{\hat{\boldsymbol{\epsilon}}}_{\perp} = 0 \, .
\end{align}
By left-multiplying the scalar inequality by the reconstruction factor $\hat{\boldsymbol{\epsilon}} / \| \hat{\boldsymbol{\epsilon}} \|^2$, we map the energy rate back into the state space
\begin{align}
\left( \frac{ \hat{\boldsymbol{\epsilon}} \hat{\boldsymbol{\epsilon}}^\top }{\| \hat{\boldsymbol{\epsilon}} \|^2} \right) \dot{\hat{\boldsymbol{\epsilon}}} \le - \lambda  \left( 1 - \frac{\gamma^2}{ \| \hat{\boldsymbol{\epsilon}} \|^2 } \right) \hat{\boldsymbol{\epsilon}} \, .
\end{align}
Identifying the rank-1 projector $\mathbf{P}_{\parallel}$ on the left-hand side, we isolate the radial component $\dot{\hat{\boldsymbol{\epsilon}}}_{\parallel} = \mathbf{P}_{\parallel} \dot{\hat{\boldsymbol{\epsilon}}}$, resulting in the structured radial dynamics 
\begin{align}
\dot{\hat{\boldsymbol{\epsilon}}}_{\parallel} \le - \lambda \left( 1 - \frac{\gamma^2}{ \| \hat{\boldsymbol{\epsilon}} \|^2 } \right) \hat{\boldsymbol{\epsilon}} \, .
\end{align}

\bibliographystyle{elsarticle-num}
\bibliography{Ref}

\end{document}